\DeclareMathOperator*{\argmax}{arg\,max}
\DeclareMathOperator*{\argmin}{arg\,min}
\newcommand{\globalprop}{\textsc{GlobalProp}\xspace}
\newcommand{\userprop}{\textsc{UserProp}\xspace}
\newcommand{\usereq}{\textsc{UserEQ}\xspace}
\newcommand{\scaledUP}{\textsc{ScaledUserProp}\xspace}
\newcommand{\half}{\frac{1}{2}}
\newcommand{\third}{\frac{1}{3}}
\newcommand{\I}{\mathcal{I}}
\newcommand{\Chat}{\widehat{C}}
\newcommand{\Max}{\textsc{Max}\xspace}
\newcommand{\Min}{\textsc{Min}\xspace}
\newcommand{\Med}{\textsc{Med}\xspace}
\newcommand{\Geo}{\textsc{Geo}\xspace}
\newcommand{\Util}{\textsc{Util}\xspace}
\newcommand{\Egal}{\textsc{Egal}\xspace}
\DeclareMathOperator*{\med}{med}
\let\vec\mathbf
\DeclarePairedDelimiter\abs{\lvert}{\rvert}%
\DeclarePairedDelimiter\norm{\lVert}{\rVert}%
\DeclarePairedDelimiter{\ceil}{\lceil}{\rceil}
\DeclarePairedDelimiter{\floor}{\lfloor}{\rfloor}
\newcommand{\icmlAlphabeticalOrdering}{\textsuperscript{*}Alphabetical ordering }
\definecolor{myred}{RGB}{201, 22, 22}
\definecolor{mygreen}{RGB}{38, 150, 68}
\newcommand{\yes}{\textcolor{mygreen}{\ding{51}}}
\newcommand{\no}{\textcolor{myred}{\ding{55}}}
\theoremstyle{plain}
\newtheorem{theorem}{Theorem}[section]
\newtheorem{proposition}[theorem]{Proposition}
\newtheorem{lemma}[theorem]{Lemma}
\theoremstyle{definition}
\newtheorem{definition}[theorem]{Definition}
\theoremstyle{remark}
\icmltitlerunning{Fraud-Proof Revenue Division on Subscription Platforms}
\begin{document}

\twocolumn[
\icmltitle{Fraud-Proof Revenue Division on Subscription Platforms}

\icmlsetsymbol{alphabetical}{*}

\begin{icmlauthorlist}
\icmlauthor{Abheek Ghosh}{oxford,alphabetical}
\icmlauthor{Tzeh Yuan Neoh}{harvard,alphabetical}
\icmlauthor{Nicholas Teh}{oxford,alphabetical}
\icmlauthor{Giannis Tyrovolas}{oxford,alphabetical}
\end{icmlauthorlist}

\icmlaffiliation{oxford}{University of Oxford, UK}
\icmlaffiliation{harvard}{Harvard University, USA}

\icmlcorrespondingauthor{Tzeh Yuan Neoh}{tzehyuan\_neoh@harvard.edu}
\icmlcorrespondingauthor{Nicholas Teh}{nicholas.teh@cs.ox.ac.uk}
\icmlcorrespondingauthor{Giannis Tyrovolas}{giannis.tyrovolas@cs.ox.ac.uk}

\icmlkeywords{fraud-proof, revenue division, subscription platforms}

\vskip 0.3in
]

\printAffiliationsAndNotice{\icmlAlphabeticalOrdering}

\begin{abstract}
We study a model of subscription-based platforms where users pay a fixed fee for unlimited access to content, and creators receive a share of the revenue. Existing approaches to detecting fraud predominantly rely on machine learning methods, engaging in an ongoing arms race with bad actors. We explore revenue division mechanisms that inherently disincentivize manipulation. We formalize three types of manipulation-resistance axioms and examine which existing rules satisfy these. We show that a mechanism widely used by streaming platforms, not only fails to prevent fraud, but also makes detecting manipulation computationally intractable. We also introduce a novel rule, \scaledUP, that satisfies all three manipulation-resistance axioms. Finally, experiments with both real-world and synthetic streaming data support \scaledUP as a fairer alternative compared to existing rules.
\end{abstract}

\section{Introduction}
In September 2024, the FBI criminally charged a musician, Michael Smith, for orchestrating a scheme to fraudulently inflate his music streams on platforms such as Amazon Music, Apple Music, Spotify, and YouTube Music---and according to court documents, walked away with over US\$10~million in royalty payments \cite{us-indictment}.
To successfully execute his scheme, he utilized hundreds of thousands of songs created using AI, and built a complicated network of over a thousand bot accounts that artificially boost streams across these platforms billions of times. Although each stream originated from a bona fide, fee-paying account, the way the platform(s) distributed subscription revenue  allowed each bot to generate more in royalties than it cost to maintain its subscription.%

Subscription platforms have seen significant growth in recent years, driven by the rise of internet streaming services such as Spotify, Apple Music, Netflix, etc.
For instance, the annual revenue of the music streaming industry reached US\$27.6 billion in 2023, with significant increases over the last ten years \cite{ifpi-report}.
Under this business model, users pay a fixed subscription fee to enjoy unlimited access to all content on the platform, typically by content creators.
The platform then takes a fixed revenue cut and distributes the rest to the creators based on engagement metrics (e.g., play counts or views) and/or specific agreements between creators and platforms. %

Despite efforts to curb manipulation, bad actors persist, using bots and click-farms to inflate user engagement \cite{drott2020fakestreams,snickars2018spotibot}. This issue is so significant that major music streaming platforms like Amazon Music and Spotify have established an industry advocacy group \cite{musicfightsfraud} to combat such fraud, which is estimated to cost the industry US\$300 million annually \cite{burton2021playola}.
Additionally, the rise of AI-generated content introduces new challenges--- platforms are increasingly flooded with synthetic tracks, videos, and live streams designed to exploit engagement-driven algorithms.
This AI-generated content often amplifies fraudulent listening activities, making manipulation harder to detect.

Current machine learning (ML) approaches to this problem predominantly focus on \emph{detecting} fraudulent activity---using sophisticated algorithms ranging from anomaly detection \cite{esmaeilzadeh2022frauddetection} to unsupervised learning \cite{mollaoglu2021fraudunsupervised} and graph neural networks \cite{li2021livestreaming}.
For instance, music streaming platforms such as Spotify have proprietary models that identify whether a stream is legitimate (using meta-data such as IP location, listening patterns, and other information) and issue fines if they deem too many streams to be fraudulent \cite{spotify-globalprop}.

However, as AI continues to evolve, so do the methods used by fraudsters, leading to a continuous arms race. These bad actors increasingly leverage advanced automation tools to make fraudulent activities more sophisticated and harder to detect, challenging the robustness of existing detection frameworks and driving the need for innovative, adaptive solutions \cite{us-indictment}.

The root of the problem stems from the way revenue is currently distributed to content creators on most subscription-based streaming platforms: ``funds from the royalty pool are allocated proportionally among artists based on their respective percentages of total streams'' \cite{us-indictment}---we call this rule \globalprop.

In this paper, we tackle this problem from a \emph{mechanism design} perspective, i.e., we mathematically formalize notions of fraud in this setting and investigate the existence of revenue division mechanisms that can inherently disincentivize fraudulent behavior, %
thereby reducing the industry's reliance on expensive and complex fraud detection methods to combat manipulation. Moreover, if such mechanisms exist, they could complement existing ML-based approaches by providing a foundational layer of fraud resistance.
These mechanisms inherently target known forms of fraud, allowing ML systems to focus on adapting to emerging, previously unseen types of fraud that may arise in the future, ensuring continuous improvement in detecting and addressing manipulation. %

Additionally, many policymakers and academics have also argued against the fairness of \globalprop{} in favor of an alternative rule---\userprop{} (which directly allocates a fixed fraction of each user's subscription fee only among the creators of the content the user consumes)---from an economic \cite{meyn2023monetary,muikku2017comparative}, empirical \cite{moreau2024empirical}, theoretical \cite{bergantinos2023revenue}, and legal \cite{dimont2018switchlegal} perspective. Motivated by these debates, we aim to address fairness considerations in our work as well.

Lastly, the primary focus of our work is on fraudulent behavior specifically related to the creation of fake users (bots) to manipulate engagement metrics. We deliberately do not address the equally prevalent issue of widespread AI-generated content on these platforms. The legal status of such content can vary, especially since some popular artists openly release their AI-generated voices as (semi-)open-source \cite{josan2024ai}, making its permissibility platform-dependent and governed by specific rules.
Nonetheless, our work provides a principled framework for studying similar challenges. As AI continues to evolve and new forms of fraudulent behavior emerge, our approach can be extended to address these issues, provided that appropriate regulatory frameworks are established to guide the platforms. %

\subsection{Our Results}
In this work, we focus on designing \emph{manipulation-resistant} mechanisms from a computational and axiomatic perspective, setting our research apart from all previous work on this model.
Although we build on the standard model for subscription platforms established in prior literature, our key contribution lies in introducing several axioms that aim to capture both resistance to manipulation and maintaining fairness and analyzing these axioms with respect to multiple revenue-division mechanisms---three from existing literature and one novel mechanism that we propose.

Moreover, we challenge the current status quo rule, \globalprop, by demonstrating that \emph{detecting suspicious activity} under this rule is computationally intractable---an important finding in this context. Since \emph{fraud detection} (and fraud in general) is highly relevant to the ML community, we believe this result will be of particular interest to researchers and practitioners in the field.

In \Cref{sec:prelim}, we establish three fundamental properties that define the space of mechanisms we consider: \emph{anonymity}, \emph{neutrality}, and \emph{no free-ridership}. The first two ensure that payoffs to artists only depend on their engagement with users. In particular, mechanisms cannot distinguish between fraudulent and genuine artists or users.
No free-ridership eliminates trivial cases where an artist without engagement receives a non-zero payoff.
Next, we formalize three forms of resistance to strategic manipulation. \emph{Fraud-proofness} prevents adversaries from profitably creating new fraudulent users. \emph{Bribery-proofness} prevents profitably bribing existing users and is a strengthening of \emph{click-fraud-proofness} as presented in \citet{bergantinos2023revenue}.
Finally, \emph{(strong) Sybil‐proofness} ensures that artists cannot gain by splitting into multiple identities or merging with others.
All three axioms are novel in our setting and are motivated by real-world observations. %
We also introduce two additional fairness axioms---\emph{engagement monotonicity} and \emph{Pigou-Dalton consistency}, the latter inspired by an equitability concept in welfare economics.

In \Cref{sec:mechanisms}, we conduct an axiomatic study (with respect to our proposed concepts) of several rules proposed in the literature so far. Notably, we show that \globalprop fails to satisfy fraud-proofness and bribery-proofness, in contrast to the other two contenders---\userprop{} and \usereq{}.
Contributing to existing critiques of \globalprop{}, we establish a case against \globalprop{} through a computational lens, and in the context of fraud detection.
We show that if a platform uses \globalprop{}, detecting potentially fraudulent activity is NP-hard.
We then analyze the two other existing rules: \userprop{} and \usereq{}.
We study their axiomatic properties and prove that they satisfy our manipulation-resistance axioms, unlike \globalprop{}.
We also demonstrate that portioning rules cataloged in \citet{elkind2023portioning} fail all the manipulation-resistance axioms we consider.

Finally, in \Cref{sec:novelmech}, we propose and study a new rule---\scaledUP{}.
We show that it has the same axiomatic guarantees as \userprop{} but is fairer when measured by the popular ``pay-per-stream'' metric. We use this to quantify \emph{maximum envy} in this setting and empirically verify this against existing rules in \Cref{sec:experiments}.

All omitted proofs can be found in the paper's appendix.

\subsection{Related Work}

Our work considers the model proposed and studied by several recent works on (music) streaming platforms.\footnote{However, we note that this model is also applicable to many other content subscription platforms (e.g., education, art, etc.).}

\citet{alaei2022revenuesharing} and \citet{lei2023proratamusic} focused on a comparative study between \globalprop{} and \userprop{}.
More specifically, \citet{alaei2022revenuesharing} focused on providing characterizations of both rules with respect to newly proposed axioms. They were also concerned with which of these two rules could sustain a set of artists' profitability on the platform, as well as comparing them from both the platform's and the artists' perspectives.
\citet{lei2023proratamusic} pointed out the shortcomings of \userprop{}.
They compared the two rules primarily in terms of \emph{egalitarian fairness} (i.e., the lowest payout among all artists) and \emph{efficiency} (i.e., ``dominance on quality profile''), but they allow for artists to vary stream quality and thus this concept is not relevant in our model.

\citet{bergantinos2023revenue} go beyond previous works to consider a family of rules that interpolates between \globalprop and \userprop, and they provide further characterizations for both rules and their interpolation.
Subsequently, \citet{bergantinos2024shapley} introduced the Shapley index as a rule for this setting and characterized it using existing and new axioms.

\citet{deng2024computationalcopyright} investigate revenue‐sharing mechanisms for AI‐generated music platforms. Their work centers on the challenge of attributing a new, AI‐created track to specific copyrighted recordings in the training data---an attribution problem that underpins royalty allocation in that setting. This challenge is fundamentally distinct from the problems we address.

A related stream of work is the \emph{museum pass problem}, popular in the economics literature, and was first introduced by \citet{ginsburgh2001museum,ginsburgh2003museumpassgame}. The problem studies the sharing of revenue among museums from the sale of museum passes for a price below the aggregate
admission fee of individual member museums (i.e., bundled pricing).
\citet{beal2010museum} and \citet{ginsburgh2001museum,ginsburgh2003museumpassgame}  studied the problem as a \emph{coalitional game}, whereas \citet{casasmendez2011museumbankruptcy} and \citet{eestevez2012passepartout} studied the problem as a \emph{bankruptcy game}.
\citet{wang2011museumcost} studied the dual version of the problem---the museum cost sharing problem.
All of the works above (including several more recent works which look at the \emph{Shapley value} as a rule \cite{bergantinos2015museumaxiomatic,bergantinos2016museumpass}) essentially conduct an axiomatic study of popular rules in their respective games modeled, but adapted to this new setting.
We refer the reader to the  \citet{casasmendez2014museumsurvey} for a survey on earlier works on this area. From 2001 to 2014, works on the topic cumulatively studied more than $30$ axioms, with broadly two kinds of manipulation-resistant axioms---one based on ``ticket prices'' and the other based on ``reported number of visitors''.
However, we note that the museum pass problem is fundamentally different from our problem, and thus the way axioms (and rules) are conceptualized would also naturally be distinct. This distinction is particularly apparent when it comes to concepts relating to manipulation.

Our work also contributes to the broader literature on applying computational and algorithmic methods to address incentive-related challenges in online economic systems and platforms. For example, manipulation issues have been studied in the contexts of online advertising markets \citep{golrezaei2021contextualads, kanoria2014dynamicreserveprices}, recommendation systems \citep{eilat2023performativerecommendation, yao2023recsyscompeting}, and e-commerce platforms  \citep{golrezaei2021learningproductrankings,he2022marketfakereviews,mayzlin2014fakereviews}.

\section{Model and Axioms}\label{sec:prelim}
For each positive integer $k$, let $[k] := \{1,\dots,k\}$.
Let $N = [n]$ be the set of \emph{users} and $C = [m]$ be the set of \emph{artists}.
Suppose that an \emph{adversary} controls a set of \emph{fake users} $\widehat{N} \subseteq N$ and a set of \emph{fake artists} $\widehat{C} \subseteq C$; let $\widehat{n} = |\widehat{N}|$.
For each $i \in N$ and $j \in C$, let $w_{ij} \geq 0$ denote the number of \emph{interactions} user $i$ has with artist $j$.\footnote{This is typically defined as a \emph{stream} (on music streaming platforms like Spotify), whereby a user plays a track for a minimum duration, or a \emph{view} (on video streaming platforms like YouTube Live) when a user joins and stays for a minimum amount of time.} %
For each user $i \in N$, we assume
that $\sum_{j \in C} w_{ij} > 0$, i.e., the user has some non-zero interactions.\footnote{Note that in many of our proofs, we can without loss of generality assume that weights are rational numbers.} %
Let $\vec{w}_i = (w_{i1}, \dots, w_{im})$ for each $i \in N$.
The \emph{engagement profile} is  $\vec{w} = (\vec{w}_1,\dots,\vec{w}_n)$.

Without loss of generality, we assume that the subscription fee for each user is $1$ unit. %
Then, the total subscription fee collected from the users is $n$.
As assumed in the prior works on this topic, and as observed in the real-world, we assume that the platform takes a cut of $(1 - \alpha) n$ and distributes the remaining $\alpha n$ to the artists, for some $\alpha \in (0,1]$. %

A problem \emph{instance} $\I=(N, C, \vec{w})$ is defined by the set of users $N$, the set of artists $C$, and the engagement profile $\vec{w}$.
A \emph{payment rule} (or simply \emph{rule}) is a function $\phi$ that maps each instance $\I$ to an $m$-valued vector $ (\phi_{\I}(1), \dots, \phi_{\I}(m))$, where $\phi_{\I}(j)$ is the payment to artist $j \in C$.
To simplify notation, for a subset of artists $S \subseteq C$, we use $\phi_{\I}(S)$ to denote the sum of the payments to the artists in the set $S$: $\phi_{\I}(S) = \sum_{j \in S} \phi_{\I}(j)$.

\subsection{Preliminary Axioms} \label{sec:preliminaryaxioms}
We begin by introducing three fundamental fairness properties that any reasonable revenue division mechanism in our setting should satisfy.
We will then provide a rationale for the necessity of these axioms in our setting.

The first axiom---anonymity---prescribes that the rule cannot distinguish between real and fake users.
\begin{definition}[Anonymity]
    A rule $\phi$ is \emph{anonymous} if permuting the labels of the users does not affect the payoffs of the artists.
    Formally, rule $\phi$ is anonymous if for all instances $\I = (N, C, \vec{w})$ and $\I' = (N, C, \vec{w}')$ and all permutations $\sigma: N \rightarrow N$, if $\vec{w}_i = \vec{w}_{\sigma(i)}'$ for all users $i \in N$, then for all artists $j \in C$, $\phi_\I(j) = \phi_{\I'}(j)$.
\end{definition}

The second axiom---neutrality---is similar in nature to anonymity, but for artists. It prescribes that the rule cannot distinguish between real and fake artists.

\begin{definition}[Neutrality]
    A rule $\phi$ is \emph{neutral} if permuting the labels of the artists permutes their payoffs.
    Formally, rule $\phi$ is neutral if for all instances $\I = (N, C, \vec{w})$ and $\I' = (N, C, \vec{w}')$ and all permutations $\sigma: C \rightarrow C$, if $w_{ij} = w_{i\sigma(j)}'$ for all users $i \in N$ and artists $j \in C$, then for all artists $j \in C$, $\phi_\I(j) = \phi_{\I'}(\sigma(j))$.
\end{definition}
In our setting, it is crucial to consider only rules that are anonymous and neutral.
In practice, given the number of users/artists, it is virtually impossible to detect all fake users/artists, even with existing fraud detection techniques, as noted in our introduction. This inability to reliably distinguish between real and fake users or artists underscores the importance of addressing the questions we aim to answer.

Finally, the last fundamental axiom we consider is the notion of \emph{no free-ridership}.
Intuitively, this means that artists who receive no user engagement should not receive any payment.
\begin{definition}[No free-ridership]
    A rule $\phi$ satisfies \emph{no free-ridership} if, for any instance $\I = (N, C, \mathbf{w})$ and artist $j \in C$ where $\sum_{i \in N} w_{ij} = 0 $, then $\phi_\I(j) = 0$.
\end{definition}
This axiom rules out trivial rules that allocate payments disregarding user engagement (e.g., giving equal payment to each artist irrespective of user engagement) and are, therefore, resistant to strategic manipulation.

\subsection{Axioms for Preventing Strategic Manipulation}
We start by formalizing the fraud alleged in the indictment mentioned in the introduction.
Intuitively, no adversary should be able to create fake users ($\widehat{N}$), pay their subscription fee, and earn a profit from her own fake artists ($\widehat{C}$).\footnote{Note that we do not impose any constraints on the listening behavior or engagement profiles of these fake users.}
Rules that make such fraud impossible are \emph{fraud-proof}.

\begin{definition}[Fraud-proofness]
    A rule $\phi$ is  \emph{fraud-proof} if the following holds:
    For any two instances
    $\I = (N \setminus \widehat{N}, C, \vec{w})$ and $\I' = (N, C, \vec{w}')$ with $\mathbf{w}_i = \mathbf{w}'_i$ for all $i \in N \setminus \widehat{N}$, and any $\widehat{C} \subseteq C$, we have that $\phi_{\I'}(\widehat{C}) -  \phi_{\I}(\widehat{C}) \leq \widehat{n}$.

    A rule $\phi$ is \emph{single-user fraud-proof} if $\widehat{n} = 1$.
\end{definition}
Our definition of fraud-proofness considers only an adversary's profit from creating fake users, not fake artists.
This means an adversary can introduce fake artists to earn profits without using fake users. However, without fake users, any fake artist must attract engagement from real users to profit (by the no free-ridership assumption).
Whether this practice violates a platform’s rules is a separate issue beyond our scope---we focus on the extra profit an adversary can gain by adding fake users, assuming a fixed set of artists (which may include fake ones).

Next, we show that single-user fraud-proofness is equivalent to (multi-user) fraud-proofness, simplifying how one can reason about fraud-proofness.

\begin{proposition} \label{prop:singleuserFP}
    A rule $\phi$ is fraud-proof if and only if it is single-user fraud-proof.
\end{proposition}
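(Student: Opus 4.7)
The forward direction is immediate: fraud-proofness quantifies over all $\widehat{n} \ge 1$, so in particular it must hold when $\widehat{n} = 1$, which is single-user fraud-proofness.

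For the reverse direction, my plan is a telescoping argument that reduces the multi-user case to $\widehat{n}$ applications of the single-user case. Given the two instances $\I = (N \setminus \widehat{N}, C, \vec{w})$ and $\I' = (N, C, \vec{w}')$ with $\vec{w}_i = \vec{w}'_i$ for all $i \in N \setminus \widehat{N}$, I would enumerate the fake users as $\widehat{N} = \{i_1, \dots, i_{\widehat{n}}\}$ and construct a chain of intermediate instances $\I_0, \I_1, \dots, \I_{\widehat{n}}$, where $\I_k$ has user set $(N \setminus \widehat{N}) \cup \{i_1, \dots, i_k\}$ and engagement profile inherited from $\vec{w}'$ on those users and from $\vec{w}$ otherwise. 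Then $\I_0 = \I$ and $\I_{\widehat{n}} = \I'$, and each consecutive pair $(\I_{k-1}, \I_k)$ differs in exactly one fake user.

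Applying single-user fraud-proofness to each pair $(\I_{k-1}, \I_k)$ with the same set $\widehat{C}$ of fake artists yields
\[
\phi_{\I_k}(\widehat{C}) - \phi_{\I_{k-1}}(\widehat{C}) \le 1
\quad\text{for each } k \in [\widehat{n}].
\]
Summing over $k$ telescopes to $\phi_{\I'}(\widehat{C}) - \phi_{\I}(\widehat{C}) \le \widehat{n}$, which is exactly the fraud-proofness inequality.

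The only subtlety, and the step I would double-check carefully, is that the intermediate instance $\I_k$ is actually a well-formed instance in the sense of the model, i.e., every user in it has some nonzero engagement. Since the fake users' engagement profiles come from $\vec{w}'$ (which is valid in $\I'$) and the remaining users' profiles come from $\vec{w}$ (which is valid in $\I$), this holds. I would also note that the definition of single-user fraud-proofness must be read as allowing the ``real'' user set to be any subset of some ambient user set, which is exactly how it is stated in the paper; otherwise the chain construction fails. No other obstacle arises, since $\widehat{C}$ is held fixed throughout the chain.
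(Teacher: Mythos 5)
Your proposal is correct and follows essentially the same route as the paper: build the chain of intermediate instances by adjoining one fake user at a time, apply single-user fraud-proofness to each consecutive pair with the fixed set $\widehat{C}$, and telescope the resulting inequalities to obtain the bound $\widehat{n}$. Your extra remark about intermediate instances being well-formed is a sound (if implicit in the paper) sanity check, and nothing else differs.
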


Another form of manipulation is \emph{bribery}.
Bribery is particularly relevant in scenarios where the platform imposes substantially stringent access requirements, making creating fake users significantly more challenging.
However, under such conditions, artists may resort to colluding with and \emph{bribing} users---offering to pay the subscription fees of the users to manipulate their engagement profiles.
This practice is commonly observed in \emph{streaming farms}, the streaming equivalent of \emph{click farms} in advertising \cite{drott2020fakestreams}.
We call resistance to such bribery as \emph{bribery-proofness}. %
\begin{definition}[Bribery-proofness]
    A rule $\phi$ is \emph{bribery-proof} if the following holds:
    For any two instances $\I = (N, C, \mathbf{w})$ and $\I' = (N, C, \mathbf{w}')$ with $\vec{w}_i \neq \vec{w}_i'$ for exactly $k$ users, and any $\widehat{C} \subseteq C$, we have that $\phi_{{\I'}}(\widehat{C}) - \phi_{{\I}}({\widehat{C}}) \leq k$.

    A rule $\phi$ is \emph{single-user bribery-proof} if $k = 1$. %
\end{definition}

Similarly to fraud-proofness, multi-user bribery-proofness and single-user bribery-proofness are equivalent.

\begin{proposition} \label{prop:singleuserBP}
    A rule is bribery-proof if and only if it is single-user bribery-proof.
\end{proposition}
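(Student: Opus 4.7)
The plan is a straightforward hybrid (telescoping) argument, essentially mirroring the structure used for \Cref{prop:singleuserFP}. The forward direction is immediate: if $\phi$ is bribery-proof for every $k \geq 1$, then in particular it is bribery-proof for $k=1$, which is precisely single-user bribery-proofness.

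For the backward direction, suppose $\phi$ is single-user bribery-proof, and let $\I = (N, C, \vec{w})$ and $\I' = (N, C, \vec{w}')$ be two instances differing in exactly $k$ users, say $i_1, \ldots, i_k \in N$. I would construct a sequence of intermediate instances $\I = \I_0, \I_1, \ldots, \I_k = \I'$, where $\I_t = (N, C, \vec{w}^{(t)})$ is obtained from $\I_{t-1}$ by updating only user $i_t$'s engagement vector from $\vec{w}_{i_t}$ to $\vec{w}'_{i_t}$, leaving all other users unchanged. By construction, $\I_t$ and $\I_{t-1}$ differ in exactly one user for each $t \in [k]$, and each intermediate instance is valid since both $\vec{w}_{i_t}$ and $\vec{w}'_{i_t}$ individually satisfy the non-zero total interactions assumption. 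Applying single-user bribery-proofness to each consecutive pair gives $\phi_{\I_t}(\widehat{C}) - \phi_{\I_{t-1}}(\widehat{C}) \leq 1$ for every $t \in [k]$, and summing these $k$ inequalities telescopes to $\phi_{\I'}(\widehat{C}) - \phi_{\I}(\widehat{C}) \leq k$, as required.

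There is no real obstacle here: the argument is essentially routine once one observes that bribery-proofness has the same additive-budget structure as fraud-proofness (both bound the coalition's payoff gain by the number of modifications), which makes a one-step-at-a-time hybrid work cleanly. The only minor subtlety worth noting explicitly in the write-up is the validity of the intermediate instances, which is automatic since we only ever replace a single user's vector with another admissible vector, never combine or split them.
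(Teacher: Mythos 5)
Your proof is correct and uses essentially the same telescoping construction as the paper, which merely phrases it contrapositively (a violation at $k$ users forces at least one single-user step to exceed $1$) while you argue the direct forward bound; the two are logically identical.
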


We note that (single-user) bribery-proofness substantially strengthens the axiom of \emph{click-fraud-proofness} proposed in \citet{bergantinos2023revenue}.
Click-fraud-proofness requires that a single user altering their engagement cannot alter the payoff of any artist by more than $1$. Formally, for all $j$, $\abs{\phi_{\I'}(j) - \phi_\I(j)} \leq 1$.
Single-user bribery-proofness requires that for all subsets of artists $\Chat \subseteq C$, $\abs{\phi_{\I'}(\Chat) - \phi_\I(\Chat)} \leq 1$.\footnote{Note that by \Cref{prop:singleuserBP}, it suffices to only consider single-user bribery-proofness.}
Bribery-proofness implies click-fraud-proofness and protects from multiple artists colluding.

Fraud-proofness and bribery-proofness capture resilience to two different kinds of manipulation.
Despite being similar, we show that the axioms are not equivalent.
Recall that $\alpha$ is the fraction of each user's subscription fee that is allocated to the artists, with the remaining portion retained by the platform as a fixed cut.
\begin{theorem} \label{thm:fp_bp_relationship}
    Consider some rule $\phi$. Then:
    \begin{enumerate}[(i)]
        \item If $\alpha = 1$ and $\phi$ is fraud-proof, it is also bribery-proof;
        \item For $\alpha \in (0,1]$, there exists a rule that is bribery-proof but not fraud-proof, even when $m=2$;
        \item For $\alpha < 1$, there exists a rule that is fraud-proof but not bribery-proof, even when $m=2$.
    \end{enumerate}
\end{theorem}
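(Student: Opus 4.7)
The plan is to prove the three clauses separately, using Propositions~\ref{prop:singleuserFP} and~\ref{prop:singleuserBP} to reduce to single-user versions throughout.

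For (i), the idea is to factor bribery into a deletion followed by an addition. Given a bribery from $\I$ to $\I'$ in which only user $i$'s profile changes, let $\I^\star$ be the instance obtained from $\I$ by removing user $i$. Both $\I$ and $\I'$ are single-user fraud extensions of $\I^\star$, so single-user fraud-proofness gives $\phi_{\I}(S)-\phi_{\I^\star}(S)\le 1$ and $\phi_{\I'}(S)-\phi_{\I^\star}(S)\le 1$ for every $S\subseteq C$. When $\alpha=1$, total payments equal $n$ in $\I,\I'$ and $n-1$ in $\I^\star$, so summing the $\I$-vs-$\I^\star$ bounds for $S=\Chat$ and $S=C\setminus\Chat$ gives a combined increment of exactly $1$; since both sides are $\le 1$, each must lie in $[0,1]$. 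Subtracting the $\I$ bound from the $\I'$ bound (both in $[0,1]$) then yields $|\phi_{\I'}(\Chat)-\phi_{\I}(\Chat)|\le 1$, i.e., single-user bribery-proofness.

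For (ii), I would exhibit a rule whose output ignores engagement entirely and depends only on $n$: set $\phi_\I(1)=\alpha n$ when $n$ is odd and $0$ otherwise, with $\phi_\I(2)=\alpha n-\phi_\I(1)$. Bribery leaves $n$ unchanged, so $\phi$ is invariant and bribery-proofness is automatic. Adding a single user flips the parity and swaps the two payments; for $n>1/\alpha$ the change is $\alpha n>1$, violating single-user fraud-proofness. This works for every $\alpha\in(0,1]$.

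For (iii), I would exploit the $(1-\alpha)$ slack in the total-payment constraint. On $m=2$, stratify users by the sign of $w_{i1}-w_{i2}$: let $a=|\{i:w_{i1}>w_{i2}\}|$, $b=|\{i:w_{i1}<w_{i2}\}|$, and set $\phi_\I(1)=\max\{0,\min\{\alpha n,\,b-(1-\alpha)a\}\}$, $\phi_\I(2)=\alpha n-\phi_\I(1)$. A case analysis on the addition of a type-A, type-B, or tied user---tracking which of the three pieces of the $\max/\min$ the value occupies before and after---verifies $|\Delta\phi_\I(\Chat)|\le 1$ and thus fraud-proofness. To defeat bribery-proofness, choose integers $a,b$ with $b-(1-\alpha)a\in(\alpha-1,0]$ and $n=a+b>1/\alpha$: such integers exist for every $\alpha<1$ (take $a$ a multiple of the denominator of $\alpha$ when $\alpha$ is rational, and use equidistribution of $\{a(1-\alpha)\}$ modulo $1$ otherwise). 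At such an $\I$, $\phi_\I(1)=0$; bribing one type-A user to type-B shifts $b-(1-\alpha)a$ upward by $2-\alpha>1$, so $\phi_{\I'}(1)>1$ and bribery-proofness fails.

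The main obstacle is the piecewise case analysis for fraud-proofness in (iii). The trickiest sub-case is a type-B addition that would seemingly carry $\phi_\I(1)$ across both thresholds (from the floor $0$ to the ceiling $\alpha n$) in one step; fortunately the simultaneous inequalities $f\le 0$ and $f+1>\alpha(n+1)$ force $\alpha(n+1)<1$, so the change is still below $1$. The analogous floor-to-ceiling sub-case for type-A additions collapses to the empty set. Combined with the sum constraint $\phi_\I(1)+\phi_\I(2)=\alpha n$ (which automatically bounds $\Delta\phi_\I(\{2\})$ once $\Delta\phi_\I(\{1\})$ is controlled), this is where most of the work lies.
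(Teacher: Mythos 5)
Your part (i) is correct and is essentially the paper's argument: remove the bribed user to get an intermediate instance, apply (single-user) fraud-proofness to re-adding that user with either the old or the new profile, and use the $\alpha=1$ budget identity to turn the two one-sided bounds into the bribery bound. The paper phrases this as a contradiction with the sets $C^+,C^=,C^-$, but the content is the same; your bookkeeping via increments in $[0,1]$ is fine.

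The genuine gap is in (ii) and (iii): your counterexample rules lie outside the space of mechanisms the paper fixes in Section~\ref{sec:preliminaryaxioms}, namely anonymous, neutral rules satisfying no free-ridership, and the theorem's content is precisely that the separations hold \emph{within} that space (otherwise the existence claims are nearly vacuous). Your rule for (ii) ignores engagement entirely: it violates neutrality (permuting the two artists does not permute the payments) and no free-ridership (an artist with zero engagement receives $\alpha n$ whenever $n$ is odd); this is exactly the ``trivial rule that allocates payments disregarding user engagement'' which the no-free-ridership axiom is introduced to exclude, and the paper instead builds a threshold-modified \userprop{} and explicitly verifies anonymity, neutrality and no free-ridership for it. Your rule for (iii) has the same problem: as written, artist $1$ is paid according to $b-(1-\alpha)a$ where $b$ counts the users who favor artist $2$, so if artist $1$ has zero engagement then every user is of type B and $\phi_\I(1)=\min(\alpha n,\,n)=\alpha n>0$, a maximal free-ridership violation; moreover neutrality fails once tied users are present (e.g.\ $\alpha=\frac12$, one type-A user, one type-B user and two tied users gives artist $1$ payoff $\frac12$ before and after swapping the artists, whereas neutrality would force the swapped artist $2$ to get $\frac12$ rather than $\frac32$). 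The paper's own proof of (iii) signals this requirement explicitly: it first defines a surrogate rule $\psi$ and then modifies it exactly so that the final rule satisfies no free-ridership. Your clamp-rule analysis itself is sound --- the piecewise case check does give $\Delta\phi_\I(1)\le 1$ and the lower bound $\Delta\phi_\I(1)\ge \alpha-1$ needed for the complementary artist, and the bribery violation at $b-(1-\alpha)a\in(\alpha-1,0]$ is correct --- so the skeleton of (iii) could be repaired, but both constructions as stated fail the admissibility requirements and so do not establish the theorem in the sense the paper intends.
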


The last pair of axioms that we consider---\emph{Sybil-proofness}\footnote{The name is inspired by the concept of a \emph{Sybil attack} in computer networks.} and its strong counterpart---addresses a different form of strategic manipulation compared to the two earlier concepts.
Intuitively, these axioms are designed to prevent any artist(s) from splitting or merging to gain an unfair advantage and fraudulently increasing their revenue share, thus also ensuring that all artists are treated fairly based on their actual level of user engagement.

\begin{definition}[Sybil-proofness]
    A rule $\phi$ is \emph{Sybil-proof} if the following holds: For any two instances $\I = (N, C, \mathbf{w})$ and $\I' = (N, C', \mathbf{w}')$ whereby $C \subseteq C'$, if for every subset of artists $C^* \subseteq C$ such that
    \begin{enumerate}[(i)]
        \item $w_{ij} = w'_{ij}$ for all $i \in N, j \in C^*$; and
        \item $\sum_{j \in C \setminus C^*} w_{ij} = \sum_{j \in C' \setminus C^*} w'_{ij}$ for all $i \in N$,
    \end{enumerate}
    then we must have that $\phi_\I(C\setminus C^*) = \phi_{\I'}(C' \setminus C^*)$.
\end{definition}

We can define a stronger notion of Sybil-proofness by relaxing (i) and (ii), defined as follows. %
Note that strong Sybil-proofness implies Sybil-proofness.
\begin{definition}[Strong Sybil-proofness]
    A rule $\phi$ is \emph{strongly Sybil-proof} if the following holds:
    For any two instances $\I = (N, C, \mathbf{w})$ and $\I' = (N, C', \mathbf{w}')$ whereby $C \subseteq C'$, if for any subset of artists $C^* \subseteq C$ such that
    \begin{enumerate}[(i)]
        \item $\sum_{i \in N} w_{ij} = \sum_{i \in N} w'_{ij}$ for all $j \in C^*$; and
        \item $\sum_{i \in N} \sum_{j \in C \setminus C^*} w_{ij} = \sum_{i \in N} \sum_{j \in C' \setminus C^*} w'_{ij}$,
    \end{enumerate}
    then we must have that $\phi_\I(C\setminus C^*) = \phi_{\I'}(C' \setminus C^*)$.
\end{definition}

We will show later that \globalprop{} is the only neutral rule satisfying strong Sybil-proofness (\Cref{thm:strong_sybil_proofness_characterisation}), hence also motivating our study of (the weaker) Sybil-proofness.

\subsection{Fairness Axioms}
Next, we consider two fairness properties---engagement monotonicity and Pigou-Dalton consistency.

Intuitively, if an artist's engagement increases while every other artists' engagement does not increase, this artist's payoff should not decrease---this aligns with basic economic principles. %
It would be fundamentally unfair for a creator’s rising popularity to result in a lower payoff. We formalize this fairness property as follows.

\begin{definition}[Engagement monotonicity]
    A rule $\phi$ is \emph{engagement monotone} if the following holds:
    For any two instances $\I = (N, C, \mathbf{w})$ and $\I' = (N, C, \mathbf{w}')$, if there exists a $j^* \in C$ such that
    \begin{enumerate}[(i)]
        \item $w_{ij^*} \leq w'_{ij^*}$ for all $i \in N$; and
        \item $w_{ij} \geq w'_{ij}$ for all $i \in N$ and $j \in C \setminus \{j^*\}$,
    \end{enumerate}
    then we must have that $\phi_{{\I}}(j^*) \leq \phi_{{\I'}}({j^*})$.
\end{definition}

Next, the \emph{Pigou-Dalton principle} \cite{pigou1920welfare,dalton1920welfare}, is a fundamental fairness notion from welfare economics and often referenced in collective decision-making \cite{Moulin03}---it states that among similar outcomes, the equitable one should be picked.
We interpret this principle in our setting: all other things being equal, an artist who is more ``uniformly enjoyed'' should receive weakly more payoff from an equally popular but ``polarizing'' artist.

\begin{definition}[Pigou-Dalton consistency]
A rule $\phi$ is \emph{Pigou-Dalton consistent} if the following holds:
For any two instances $\I = (N, C, \mathbf{w})$ and $\I' = (N, C, \mathbf{w}')$, if there exists some $i,i' \in N$ and $j \in C$ such that
\begin{enumerate}[(i)]
    \item $w'_{ij} = w_{ij} - \delta$ (where $\delta > 0$ and $w_{ij} - \delta > 0$);
    \item $w'_{i'j} = w_{i'j} + \delta$ and $w'_{i'j} \leq w'_{ij}$; and
    \item $w_{kj'} = w'_{kj'}$ for all $k \in N$ and $j' \in C \setminus \{j\}$, and $w_{kj} = w'_{kj}$ for all $k \in N \setminus \{i,i'\}$.
\end{enumerate}
then we must have that $\phi_\I(j) \leq \phi_{\I'}(j)$.
\end{definition}

\section{Existing Mechanisms} \label{sec:mechanisms}
In this section, we formally define the three existing mechanisms proposed in the literature, and study which axioms they satisfy.
We summarize our results in \Cref{tab:properties}. At the end of the section, we also include a reference to a discussion on how our model generalizes \emph{portioning} rules.
\begin{table}
 \centering
\begin{tabular}{c || c c c c c}
 Axioms / Rules & \textsc{GP} & \textsc{UP} & \textsc{UEq} & \textsc{ScUP}\\
 \hline\hline
 Fraud-proofness & \no & \yes & \yes & \yes \\
 Bribery-proofness & \no & \yes & \yes & \yes \\
 Sybil-proofness & \yes & \yes & \no & \yes\\
 Strong Sybil-proofness & \yes & \no & \no & \no \\
 \hline
 Engagement monotonicity & \yes & \yes & \yes & \yes\\
 Pigou-Dalton consistency & \yes & \no & \yes & \no \\
\end{tabular}
\caption{Axiomatic properties of the revenue division mechanisms. \textsc{GP} is  \globalprop, \textsc{UP} is \userprop, \textsc{UEq} is \usereq, and \textsc{ScUP} is \scaledUP.} \label{tab:properties}
\end{table}

The rules we consider in this and the next section trivially satisfy anonymity and neutrality.
Therefore, among the three preliminary axioms introduced in \Cref{sec:preliminaryaxioms}, we will only formally prove the satisfaction of no free-ridership.

\subsection{\globalprop: The Status Quo}
\globalprop{} distributes the payoff to each artist proportionally to the artist's share of total engagement.
For example, if there are $500$ users, and an artist gets $25\%$ of the total user engagement in the platform, then the artist correspondingly receives a payment of $0.25 \times 500 \alpha = 125\alpha$ under \globalprop{}.
According to court documents \cite{us-indictment}, this is the rule that major streaming platforms use.\footnote{It is also sometimes known as the \emph{pro-rata} rule.}

\begin{tcolorbox}[title=\globalprop]
    Given an instance $\I = (N, C, \mathbf{w})$ and for each $j \in C$, the payment rule \globalprop{} is defined as follows.
    \tcblower
    \begin{equation*}
        \phi_{\I}(j) = \frac{\sum_{i \in N} w_{ij}}{\sum_{j'\in C}\sum_{i \in N} w_{ij'}} \times \alpha n.
    \end{equation*}
\end{tcolorbox}

It is easy to observe that users with higher engagement exert a disproportionate influence on revenue distribution.
Given this, it is not surprising that this rule fails to satisfy both fraud-proofness and bribery-proofness.

\begin{theorem} \label{thm:rules_globalprop_strategy}
    \globalprop{} satisfies strong Sybil-proofness, but fails fraud-proofness and bribery-proofness.
\end{theorem}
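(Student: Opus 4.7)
The plan is to split the statement into three independent claims: a positive proof of strong Sybil-proofness, and two explicit counterexamples for fraud-proofness and bribery-proofness.

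For strong Sybil-proofness, I will reason directly from the formula. The key observation is that $\phi_\I(j)$ depends on the engagement profile only through the per-artist aggregate $s_j := \sum_{i \in N} w_{ij}$ and the grand total $W := \sum_{j' \in C} s_{j'}$, and summation gives $\phi_\I(C \setminus C^*) = \frac{\sum_{j \in C \setminus C^*} s_j}{W} \alpha n$. Condition (i) of the definition preserves the contribution of $C^*$ to the grand total, while condition (ii) preserves both the numerator (aggregate engagement on $C \setminus C^*$) and, combined with (i), the denominator $W$. Since $n$ and $\alpha$ do not change between $\I$ and $\I'$, the two payoffs coincide.

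For fraud-proofness, the strategy is to let one fake user capture almost all the engagement on a fake artist, whose resulting payoff scales with $\alpha(n+1)$ and so can be pushed arbitrarily far above $\widehat{n}=1$. Concretely, fix any $\alpha \in (0,1]$, choose $n$ with $\alpha(n+1) > 2$, and in $\I$ take $n$ real users each with unit engagement on a real artist $j_1$ and a fake artist $j_2$ receiving nothing, so $\phi_\I(\{j_2\}) = 0$. In $\I'$ add one fake user with engagement $K$ on $j_2$; then $\phi_{\I'}(\{j_2\}) = \frac{K}{n+K}\alpha(n+1)$, which tends to $\alpha(n+1) > 2$ as $K \to \infty$. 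For sufficiently large $K$, the profit exceeds $\widehat{n}=1$, falsifying the fraud-proofness inequality. For bribery-proofness the construction is essentially identical: starting from the same $\I$, bribe a single real user to redirect all their engagement to $j_2$ with weight $K$, and the same limit calculation shows $\phi_{\I'}(\{j_2\}) - \phi_\I(\{j_2\})$ exceeds $k = 1$.

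The only delicate point is ensuring the counterexamples cover every admissible $\alpha \in (0,1]$, and this is handled simply by taking $n$ large compared to $1/\alpha$ before letting $K$ grow. The positive direction has no real obstacle once one notes that \globalprop is a function of aggregated engagements only; the negative direction is essentially a limit argument and requires no intricate estimate.
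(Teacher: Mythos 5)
Your proposal is correct and follows essentially the same route as the paper: the negative claims use the same two-artist construction (one heavily engaged added or bribed user pushing an otherwise unstreamed artist's share toward $\alpha(n+1)$ resp.\ $\alpha n$, with the paper fixing the weight at $n$ rather than taking a limit in $K$), and the positive claim is the direct observation that \globalprop{} depends only on per-artist aggregates and the grand total, which is exactly how the paper treats strong Sybil-proofness (there only implicitly, via the observation underlying \Cref{thm:strong_sybil_proofness_characterisation}). No gaps.
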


Moreover, strong Sybil-proofness uniquely characterizes \globalprop{}, given our neutrality assumption.
\begin{theorem}\label{thm:strong_sybil_proofness_characterisation}
    \globalprop{} is the only neutral rule satisfying strong Sybil-proofness.
\end{theorem}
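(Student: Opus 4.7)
The plan is to use strong Sybil-proofness to progressively reduce any instance to a canonical ``all-equal'' configuration on which neutrality together with the implicit budget condition $\sum_{j \in C} \phi_\I(j) = \alpha n$ pin down the payoff. I would proceed in three stages: reducing $\phi_\I(j)$ to a function of the engagement totals $W_j := \sum_{i \in N} w_{ij}$, establishing a splitting identity that lets an artist be subdivided into several, and invoking neutrality in the uniform case to extract the \globalprop{} formula.

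First, applying strong Sybil-proofness with $C' = C$ and $C^* = C \setminus \{j\}$ forces $W_{j'} = W'_{j'}$ for all $j'$ and yields $\phi_\I(j) = \phi_{\I'}(j)$, so $\phi_\I(j)$ depends only on the total-engagement vector $(W_1, \dots, W_m)$. Second, I would construct $\I'$ by keeping $C \subseteq C'$, fixing the totals of every $j' \neq j$, and redistributing artist $j$'s total across a group $\{j, j^{(1)}, \dots, j^{(p-1)}\}$ whose totals sum to $W_j$. Applying strong Sybil-proofness with $C^* = C \setminus \{j\}$ then delivers the splitting identity $\phi_\I(j) = \phi_{\I'}(j) + \sum_{r=1}^{p-1} \phi_{\I'}(j^{(r)})$. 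Third, invoking the WLOG rationality of weights allowed by the paper, I would write each $W_j = p_j/q$ and, iterating the splitting identity artist by artist, construct a terminal instance $\I^*$ with $P := \sum_j p_j$ artists all of total $1/q$ (with engagements chosen, say, uniform over users so that every artist has the same engagement vector). Neutrality then forces identical payoffs in $\I^*$, and the budget condition assigns each $\alpha n / P$; summing over the $p_j$ sub-artists of $j$ recovers $\phi_\I(j) = p_j \cdot \alpha n / P = (W_j / W) \, \alpha n$, matching \globalprop{}.

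The principal obstacle is carrying the splitting identity correctly through a sequence of splits, since the artist set grows at each step. To propagate $\phi_{\I_{k-1}}(j) = \phi_{\I_k}(j)$ whenever some other artist $k \neq j$ is split between stages, I would apply strong Sybil-proofness with $C^* = \{j\}$: only $W_k$ changes between the stages, so condition (i) holds on $\{j\}$, condition (ii) holds because the aggregate total outside $\{j\}$ is preserved by the split, and the conclusion combined with budget balance yields the required invariance. A similar multi-artist choice of $C^* = \{j, j^{(1)}, \dots, j^{(p_j-1)}\}$ handles the later stages after $j$ itself has been split. A minor boundary case is $W_j = 0$ (so $p_j = 0$, and the splitting argument is vacuous); comparing $\I$ to the instance $\I_0$ obtained by deleting $j$ via strong Sybil-proofness with $C^* = C \setminus \{j\}$ gives $\phi_\I(j) = \phi_{\I_0}(\emptyset) = 0$, consistent with the formula.
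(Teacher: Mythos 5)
Most of your architecture is sound, and it is genuinely different from the paper's route. The paper first uses strong Sybil-proofness with $C^* = \{c\}$ plus budget balance to show that $\phi_\I(c)$ depends only on the pair (own total, grand total) and the user set, then derives additivity of this function in the own total by splitting one artist into two pieces with totals $\beta W_c$ and $(1-\beta)W_c$ for real $\beta \in (0,1)$, concludes linearity, and pins the function down with the boundary values $f(0,T,N)=0$ and $f(T,T,N)=n\alpha$. You instead work with the full totals vector, build a common refinement of the instance into $P$ equal ``atoms'' of total $1/q$, and let neutrality plus budget balance give each atom $\alpha n/P$; your propagation steps (using $C^* = \{j\}$ or $C^* = G_j$ together with budget balance to keep a group's payoff invariant when another artist is split) and your handling of $W_j = 0$ via deletion are correct applications of the axiom, and the splitting identity with $C^* = C\setminus\{j\}$ is exactly right.

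The genuine gap is the step ``invoking the WLOG rationality of weights allowed by the paper.'' This theorem is a uniqueness statement over the whole domain of instances, and the model allows arbitrary real engagements $w_{ij} \geq 0$; the paper's footnote about rational weights is a device for \emph{constructing} instances in counterexample-style proofs, not a license to restrict the domain in a characterization. Your atomization into $p_j$ pieces of size $1/q$ requires all totals $W_j$ to be commensurable, so as written you only determine $\phi$ on rational-weight instances, and nothing you assume (no continuity, no monotonicity in the own total) lets you transfer the conclusion to instances with incommensurable totals. Indeed, without some regularity beyond neutrality and strong Sybil-proofness (e.g.\ non-negativity of payments), rules of the form $\phi_\I(j) = \frac{W_j}{W}\alpha n + H(W_j) - \frac{W_j}{W}H(W)$ for an additive but non-linear $H$ are neutral, budget-balanced, strongly Sybil-proof, and agree with \globalprop{} exactly on commensurable instances, so the rational case alone cannot close the argument. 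To finish along your lines you would need either the paper's real-parameter split (which yields additivity of the payoff in $W_j$ for arbitrary reals) or an explicit sandwich/monotonicity argument exploiting non-negativity of payments; as the proposal stands, this step is missing.
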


\globalprop{} also satisfies our fairness axioms.
\begin{theorem} \label{thm:rules_globalprop_fairness}
    \globalprop{} satisfies no free-ridership, engagement monotonicity, and Pigou-Dalton consistency.
\end{theorem}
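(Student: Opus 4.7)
The plan is to verify each of the three axioms directly from the closed-form formula for \globalprop, since in each case the claim reduces to a straightforward comparison of ratios.

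For no free-ridership, I would simply observe that if $\sum_{i\in N} w_{ij}=0$, then the numerator in the defining expression $\phi_{\I}(j)=\frac{\sum_i w_{ij}}{\sum_{j'}\sum_i w_{ij'}}\cdot \alpha n$ is zero, so $\phi_{\I}(j)=0$. (The denominator is strictly positive by the standing assumption that every user has at least one interaction.)

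For engagement monotonicity, write $W_{j^*}=\sum_i w_{ij^*}$, $W'_{j^*}=\sum_i w'_{ij^*}$, and $R=\sum_{j\neq j^*}\sum_i w_{ij}$, $R'=\sum_{j\neq j^*}\sum_i w'_{ij}$. The hypotheses give $W_{j^*}\le W'_{j^*}$ and $R\ge R'$. Then
\[
\phi_{\I}(j^*)=\frac{W_{j^*}}{W_{j^*}+R}\cdot\alpha n,\qquad \phi_{\I'}(j^*)=\frac{W'_{j^*}}{W'_{j^*}+R'}\cdot\alpha n,
\]
and the conclusion follows from the elementary fact that $x\mapsto x/(x+c)$ is weakly increasing in $x\ge 0$ and weakly decreasing in $c\ge 0$ (applied in two steps: first increase the numerator from $W_{j^*}$ to $W'_{j^*}$ holding the rest fixed, then decrease $R$ to $R'$). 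A minor edge case to note is if $W'_{j^*}+R'=0$, in which case $W_{j^*}=0$ too and both payoffs vanish, so the inequality is trivial.

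For Pigou-Dalton consistency, the key observation is that the move from $\vec{w}$ to $\vec{w}'$ only transfers mass $\delta$ between two users while keeping the same artist $j$; it leaves every per-artist total unchanged. Precisely, conditions (i)--(iii) give $w'_{ij}+w'_{i'j}=w_{ij}+w_{i'j}$ and $w'_{kj'}=w_{kj'}$ for all other $(k,j')$, so $\sum_i w'_{ij}=\sum_i w_{ij}$ for every $j\in C$, and consequently the total engagement $\sum_{j'}\sum_i w_{ij'}$ is also unchanged. Plugging into the \globalprop formula yields $\phi_{\I'}(j)=\phi_{\I}(j)$, which is stronger than the required inequality. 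Since no step involves anything more delicate than monotonicity of $x/(x+c)$ or the invariance of column sums under within-column transfers, I do not anticipate a real obstacle; the whole proof is essentially bookkeeping on the defining formula.
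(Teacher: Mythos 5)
Your proposal is correct and matches the paper's own proof in all essentials: no free-ridership via the zero numerator, engagement monotonicity via the monotonicity of $x \mapsto x/(x+c)$ (the paper carries out the same comparison by cross-multiplying and factorizing), and Pigou-Dalton consistency by noting the transfer preserves every artist's total engagement, giving exact equality of payoffs. The edge case you flag for monotonicity cannot actually occur, since the model assumes every user has strictly positive total engagement, so it can safely be omitted.
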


\paragraph{A Case Against \globalprop: The Computational Intractability of Fraud Detection.}
We have shown that \globalprop{} is not fraud-proof. One might hope that artists benefiting from fraud could be easily identified and removed. Unfortunately, detecting the artists who gain the most from fraudulent activity is computationally intractable.

Importantly, a user who streams music extensively is not inherently suspicious---some people naturally listen to music for most of their waking hours. Thus, instead of targeting individual active users, we should focus on identifying artists who may be used as vehicles for fraud by an adversary.\footnote{Our objective is to identify fraudulent artists as a means of detecting suspicious interactions between fake users and fake artists.}

\begin{definition}[Potentially Suspicious Profits]\label{def:psp}
    Given a set of artists $U \subseteq C$, their \emph{potentially suspicious profit (\textsc{PSP})} from \globalprop{} is their maximum marginal profits from a set of users $V$, less the cost of creating these users:
    \begin{align*}
        \textsc{PSP}&(U) = \max_{V \subseteq N} \left( \frac{\sum_{i \in N} \sum_{j \in U} w_{ij}}{\sum_{i \in N} \sum_{j \in C} w_{ij}} \times \alpha n \right. \\
        & \left. - \frac{\sum_{i \in N\setminus V} \sum_{j \in U} w_{ij}}{\sum_{i \in N\setminus V} \sum_{j \in C} w_{ij}} \times \alpha (n -|V|)  - |V| \right).
    \end{align*}
\end{definition}
Thus, our objective of identifying suspicious artists can be framed as finding a set of artists $U \subseteq C$ such that $\textsc{PSP}(U)$ is high.
However, the choice of $|U|$ is crucial.
If we restrict $U$ to a single artist ($|U|=1$), an adversary can easily evade detection by distributing fake users' listening activity across multiple fraudulent artists.
On the other hand, if we impose no constraint on $|U|$, we risk identifying a set of legitimate artists with dedicated fan bases. Also, while an adversary can create multiple fake artists, doing so incurs administrative overhead---such as setting up identification and banking details---which makes the creation of an arbitrarily large number of fake artists impractical in many circumstances.

Therefore, we define the problem of finding suspicious artists as finding the set $U \subseteq C$ of size at most $k$ artists that maximize $\textsc{PSP}(U)$. However, we show that this problem is computationally intractable, with the following result.

\begin{theorem}\label{thm: NP-c}
    Given an instance $\I = (N,C, \mathbf{w})$ and parameters $k \le |C|$ and $\gamma > 0$, it is NP-hard to determine if there exists a $U \subseteq C$ such that $|U| \leq k$ and $\textsc{PSP}(U) \geq \gamma$.
\end{theorem}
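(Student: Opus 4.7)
The plan is to reduce from the NP-hard \textsc{Densest-$k$-Subgraph} problem (DkS): given a graph $G = (V_G, E_G)$ and integers $k, t$, decide whether some $k$ vertices induce at least $t$ edges. The guiding intuition is that the \textsc{PSP} objective is structurally a ``density'' measure — it rewards artist sets $U$ that admit a small user set $V$ whose engagement is tightly concentrated on $U$, so that removing $V$ sharply deflates $U$'s share of total engagement at only the linear subscription cost $|V|$.

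Construction: set the artists $C = V_G$. For each edge $\{u,v\} \in E_G$, introduce an \emph{edge-user} with unit engagement on $u$ and on $v$, and none elsewhere. Append a calibrated pool of background users that engage uniformly across $C$, used to pin down a convenient aggregate baseline so that the \globalprop\ ratios behave predictably. Inherit the DkS budget $k$ as the \textsc{PSP} budget, and set $\gamma$ to be the \textsc{PSP} value attained on a candidate artist set whose induced subgraph contains exactly $t$ edges.

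Verification proceeds in two directions. The forward direction is direct: if $U \subseteq V_G$ with $|U| \le k$ contains $\geq t$ internal edges, then taking $V$ to be the corresponding edge-users forces the second term in \Cref{def:psp} to vanish on $U$ (these users engage only with $U$), giving $\textsc{PSP}(U) \geq \gamma$. The converse requires arguing that for any $U$ achieving $\textsc{PSP}(U) \geq \gamma$, the inner optimal $V$ may be assumed without loss to consist only of edge-users whose edge lies inside $U$. Once restricted to this class, the \textsc{PSP} value collapses to a strictly monotone function of $|E(G[U])|$, forcing $G[U]$ to contain at least $t$ edges and solving DkS.

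The main obstacle lies in this converse step. The ratio $\tfrac{w(N\setminus V, U)}{w(N\setminus V, C)}$ makes the marginal effect of adding a user to $V$ non-linear in the rest of $V$, so ruling out ``cross-boundary'' edge-users (those with engagement on both $U$ and $C \setminus U$) and background users (no engagement on $U$) demands careful calibration of the background pool. The key calculation will be to show that each such user, when added to an otherwise-internal $V$, strictly decreases \textsc{PSP} — either by shrinking the denominator of the second term without a proportional gain in the numerator, or by incurring the $+1$ cost with no matching marginal benefit. With that local-move analysis in place, the inner maximization reduces to a cardinality count of internal edges, and the NP-hardness transfer from DkS is immediate.
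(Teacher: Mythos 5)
Your high-level idea (reduce from a subgraph-density problem, encode edges as two-artist users, and calibrate a background pool so that \textsc{PSP} tracks the number of internal edges) is in the same spirit as the paper's reduction (which uses Small Set Bipartite Vertex Expansion, dummy users that pin the pay-per-stream near $1/d$, and a sink artist normalizing every real user's total engagement to $d+1$). However, as written there is a genuine gap, and it sits exactly where you locate "the main obstacle." First, your forward direction is incorrect as stated: taking $V$ to be the edge-users internal to $U$ does \emph{not} make the second term of \Cref{def:psp} vanish, because cross-boundary edge-users (one endpoint in $U$) and the background users still contribute engagement to $U$ in the residual instance. The forward direction therefore requires an actual computation of the per-user margin, which under \globalprop{} is roughly (pay-per-stream)$\times$(user's streams on $U$) minus the cost $1$, plus a correction term proportional to the shift in pay-per-stream times $U$'s residual engagement. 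With unit edge weights each internal edge-user streams only $2$, so you already need the construction to force the pay-per-stream strictly above $1/2$ and to control how it moves when $V$ is removed; none of this is pinned down by "a calibrated pool of background users."

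Second, and more seriously, the converse is not just a matter of ruling out cross-boundary and background users by a local-move argument: even after restricting $V$ to internal edge-users, $\textsc{PSP}(U)$ is \emph{not} a function of $|E(G[U])|$ alone. It also depends on $U$'s engagement from the non-removed users (its outside degree), through the term $(\textsc{PPS}(\I_1)-\textsc{PPS}(\I))\cdot L$ where $L$ is $U$'s residual engagement --- this is precisely the quantity the paper must bound ($L \le d|U|(\delta+1)$) and then suppress by choosing the dummy-pool size $t$ and the slack $\epsilon$ polynomially so that the correction is smaller than the gap between the thresholds. Without an analogous quantitative calibration (a normalization of every relevant user's total engagement, an explicit choice of pool size, and an $\epsilon$-analysis separating "$\ge t$ internal edges" from "$\le t-1$ internal edges"), your claim that the inner maximization "collapses to a strictly monotone function of $|E(G[U])|$" is unsupported, and a set $U$ with fewer internal edges but favorable boundary structure could cross your threshold $\gamma$. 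The non-linearity you flag (adding a below-average user with no engagement on $U$ to $V$ \emph{lowers} the subtracted term, so profitability of such moves must be excluded numerically, not just structurally) is exactly the kind of case the missing calculation has to close. In short, the reduction skeleton is plausible, but the hardness proof's real content --- the calibration and the two structural lemmas about the optimal $V$ --- is deferred rather than done.
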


\subsection{User-Additive Rules}
\label{sec:user-additive-rules}
At the opposite extreme from \globalprop are rules where each user's subscription fee is distributed solely based on their individual engagement profile. Under these rules, an artist's total payoff is simply the sum of the amounts they would receive from each user in a single-user setting.
We refer to this class of rules as \emph{user-additive}.\footnote{This term is distinct from \emph{user-centric}, which is sometimes used in the literature to refer to \userprop{}.}
\begin{definition}[User-additive rules]
    For each instance $\I = (N, C, \vec{w})$, define instances $\I_i = (\{i\}, C, \vec{w}_i)$ for each $i \in N$.
    Then, a rule $\phi$ is \emph{user-additive} if for all instances $\I$ and artists $j \in C$,
    $\phi_\I(j) = \sum_{i \in N} \phi_{\I_i}(j)$.
\end{definition}
We then show the following.
\begin{proposition}
    \label{thm:user-additive-implies-monotone-fp-bp}
    A user-additive rule is fraud-proof and bribery-proof.
\end{proposition}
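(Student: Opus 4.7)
The plan is to reduce to the single-user case using \Cref{prop:singleuserFP} and \Cref{prop:singleuserBP}, and then exploit the fact that user-additivity makes the payoff decomposition telescope cleanly against itself.

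For single-user fraud-proofness, I would take $\I = (N \setminus \{i^*\}, C, \vec{w})$ and $\I' = (N, C, \vec{w}')$ with $\vec{w}_i = \vec{w}'_i$ for every $i \neq i^*$, and fix an arbitrary $\Chat \subseteq C$. User-additivity gives $\phi_{\I'}(\Chat) = \sum_{i \in N} \phi_{\I'_i}(\Chat)$ and $\phi_\I(\Chat) = \sum_{i \in N \setminus \{i^*\}} \phi_{\I_i}(\Chat)$. Since $\I_i = \I'_i$ for every $i \neq i^*$, the corresponding summands cancel pairwise, leaving $\phi_{\I'}(\Chat) - \phi_\I(\Chat) = \phi_{\I'_{i^*}}(\Chat)$. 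Because a single-user instance has total distributable revenue $\alpha \cdot 1 \leq 1$ in non-negative shares over $C$, this residual is bounded by $1 = \widehat{n}$, which is exactly the single-user fraud-proofness inequality.

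For single-user bribery-proofness the argument is essentially the same. If $i^*$ is the unique user whose engagement was altered, then $\I_i = \I'_i$ for all $i \neq i^*$, and user-additivity yields $\phi_{\I'}(\Chat) - \phi_\I(\Chat) = \phi_{\I'_{i^*}}(\Chat) - \phi_{\I_{i^*}}(\Chat)$. Upper-bounding the first term by $\alpha \leq 1$ and dropping the non-negative second term gives the required bound of $1 = k$.

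The only non-trivial ingredient is the standing assumption that a rule redistributes at most $\alpha n$ in non-negative shares, which for a single-user instance caps $\phi_{\I'_{i^*}}(\Chat)$ at $\alpha$; beyond flagging this, the proof is a one-line cancellation and I do not anticipate any real obstacle.
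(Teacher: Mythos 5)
Your proof is correct, but it takes a different route from the paper's. The paper proves this proposition in two stages: it first observes that user-additivity implies \emph{user-addition monotonicity} (the marginal effect of adding a user on any artist is $\phi_{\I_{n+1}}(c)\geq 0$), and then invokes a separate lemma (\Cref{prop:user-addition-monotone-implies-fp-bp}) showing that monotonicity implies both axioms --- fraud-proofness via a budget-balance argument (the new user adds exactly $\alpha$ to the total payout, and the complement $C\setminus\Chat$ cannot lose payoff), and bribery-proofness by removing the bribed user (payoff to $\Chat$ weakly drops, by monotonicity) and re-adding her with the altered profile (increase at most $1$, by fraud-proofness). You instead combine the single-user reductions of \Cref{prop:singleuserFP,prop:singleuserBP} with a direct telescoping of the user-additive decomposition, so that the difference collapses to $\phi_{\I'_{i^*}}(\Chat)$ (resp.\ $\phi_{\I'_{i^*}}(\Chat)-\phi_{\I_{i^*}}(\Chat)$), which is bounded by the single-user budget $\alpha\leq 1$. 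Both arguments ultimately rest on the same two standing assumptions --- non-negative payments and total payout $\alpha n$ --- which you correctly flag and which the paper also uses implicitly. The trade-off: the paper's detour establishes user-addition monotonicity as a reusable intermediate property (of independent interest, and applicable to rules that need not be user-additive), whereas your argument is shorter and self-contained for the user-additive case.
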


We focus on two user-additive rules that have been discussed in the existing literature: \userprop and \usereq.
Under \userprop, an $\alpha$ fraction of each user's subscription fee is allocated to the artists proportional to the user's engagement. For example, if a user listens to three artists---the first artist $50\%$ of the time and the other two artists $25\%$ each---then under \userprop{}, the artists will receive payments of $\alpha/2$, $\alpha/4$, and $\alpha/4$ from this user, respectively. The total payment of an artist is the sum of such payments from each user.

\begin{tcolorbox}[title=\userprop]
    Given an instance $\I = (N, C, \mathbf{w})$ and for each $j \in C$, the payment rule \userprop{} is defined as follows.
    \tcblower
    \begin{equation*}
        \phi_{\I}(j) = \sum_{i \in N} \frac{w_{ij}}{\sum_{j' \in C} w_{ij'}} \times \alpha.
    \end{equation*}
\end{tcolorbox}
We show that it satisfies all of the manipulation-resistant axioms (excluding strong Sybil-proofness) and engagement monotonicity, but fails Pigou-Dalton consistency.
\begin{theorem} \label{thm:rules_userprop_strategy}
    \userprop{} is fraud-proof, bribery-proof, and Sybil-proof, but fails strong Sybil-proofness.
\end{theorem}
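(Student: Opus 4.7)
The plan is to dispatch the three positive claims independently and then exhibit a single small counterexample for the negative one. For fraud-proofness and bribery-proofness, I would simply observe that \userprop{} is user-additive: the summand $\frac{w_{ij}}{\sum_{j'} w_{ij'}}\alpha$ contributed by user $i$ to artist $j$ depends only on $\vec{w}_i$. Hence the rule falls under the hypothesis of \Cref{thm:user-additive-implies-monotone-fp-bp}, which immediately yields both manipulation-resistance axioms.

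For Sybil-proofness, I would argue by direct calculation on the total payoff to the complement $C \setminus C^*$. Writing
\begin{equation*}
\phi_\I(C \setminus C^*) = \sum_{i \in N} \frac{\sum_{j \in C \setminus C^*} w_{ij}}{\sum_{j' \in C} w_{ij'}} \, \alpha,
\end{equation*}
I would use condition (ii) of Sybil-proofness to rewrite the numerator as $\sum_{j \in C' \setminus C^*} w'_{ij}$ for each $i \in N$, and combine conditions (i) and (ii) to rewrite the denominator as $\sum_{j' \in C^*} w'_{ij'} + \sum_{j' \in C' \setminus C^*} w'_{ij'} = \sum_{j' \in C'} w'_{ij'}$, again on a per-user basis. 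The resulting expression is exactly $\phi_{\I'}(C' \setminus C^*)$, so the per-user preservation hypothesis is exactly what \userprop{} needs.

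The interesting part is failure of strong Sybil-proofness, whose hypotheses only preserve \emph{aggregate} engagement, not each user's individual slice. Since \userprop{}'s payout to an artist in $C^*$ depends on how that aggregate is distributed across users, I expect a very small example to suffice. Concretely I would take $n=m=2$, $C^*=\{1\}$, and compare the uniform profile $w_{ij}=1$ for all $i,j$ (under which each artist earns $\alpha$) with the profile $w'_{11}=1,\ w'_{12}=0,\ w'_{21}=1,\ w'_{22}=2$ (under which artist $1$ earns $\tfrac{4\alpha}{3}$). Column sums for artist $1$ coincide (both equal $2$), as do aggregate engagements with $C \setminus C^*$ (also both $2$), so both hypotheses of strong Sybil-proofness hold; however the payoffs to $\{2\} = C\setminus C^*$ differ ($\alpha$ versus $\tfrac{2\alpha}{3}$), witnessing the failure.

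The main obstacle, if any, is checking that my counterexample is compatible with the ambient assumption that every user has strictly positive total engagement; the profiles above satisfy this, so the argument goes through with essentially no additional work. All other steps are one-line calculations once the user-additive structure of \userprop{} is written out.
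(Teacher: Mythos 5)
Your proposal is correct, and for three of the four claims it matches the paper's proof exactly: you establish fraud-proofness and bribery-proofness by noting user-additivity and invoking \Cref{thm:user-additive-implies-monotone-fp-bp}, and you prove Sybil-proofness by the same per-user rewriting of numerator and denominator via conditions (i) and (ii). The only place you diverge is the failure of strong Sybil-proofness: the paper simply cites \Cref{thm:strong_sybil_proofness_characterisation} (only \globalprop{} among neutral rules is strongly Sybil-proof, and \userprop{} is neutral and distinct from \globalprop{}), whereas you give an explicit two-user, two-artist counterexample. Your example checks out: with $C^* = \{1\}$, both instances have column sum $2$ for artist $1$ and aggregate engagement $2$ with $C \setminus C^*$, so the hypotheses of strong Sybil-proofness hold, yet artist $2$ receives $\alpha$ in the uniform instance and $\tfrac{2\alpha}{3}$ in the skewed one, and both users keep strictly positive total engagement. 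The trade-off is mild: the paper's route is shorter but leans on the full characterization theorem, while your counterexample is self-contained and makes the failure mechanism transparent (strong Sybil-proofness only pins down aggregates, while \userprop{} is sensitive to how engagement is split across users), at the cost of a small calculation.
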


\begin{theorem} \label{thm:rules_userprop_fairness}
    \userprop{} satisfies no free-ridership and engagement monotonicity, but fails Pigou-Dalton consistency.
\end{theorem}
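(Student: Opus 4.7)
The plan has three parts: no free-ridership is a direct consequence of the definition, engagement monotonicity follows by showing the monotonicity user-by-user, and Pigou-Dalton consistency is refuted by a small explicit counterexample.

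For no free-ridership, if $\sum_{i \in N} w_{ij} = 0$ then $w_{ij} = 0$ for every $i$, so every summand in $\phi_\I(j) = \sum_{i \in N} \frac{w_{ij}}{\sum_{j' \in C} w_{ij'}}\alpha$ vanishes.

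For engagement monotonicity, the plan is to show that the contribution of each fixed user $i$ to artist $j^*$---the ratio $\frac{w_{ij^*}}{\sum_{j'} w_{ij'}}$---weakly increases from $\I$ to $\I'$; summing over $i$ then yields the claim. Fixing $i$, write $T_i = \sum_{j'} w_{ij'}$, $\Delta_i^+ = w'_{ij^*} - w_{ij^*} \ge 0$, and $\Delta_i^- = \sum_{j' \ne j^*}(w_{ij'} - w'_{ij'}) \ge 0$ using the two hypotheses of the axiom. Then $T_i' = T_i + \Delta_i^+ - \Delta_i^-$ (which is positive by the modelling assumption applied to $\I'$), and cross-multiplying reduces $\frac{w'_{ij^*}}{T_i'} \ge \frac{w_{ij^*}}{T_i}$ to the inequality
\[
\Delta_i^+\cdot(T_i - w_{ij^*}) \;+\; w_{ij^*}\cdot \Delta_i^- \;\ge\; 0,
\]
which holds because both summands are products of non-negative quantities ($T_i - w_{ij^*} = \sum_{j' \ne j^*} w_{ij'} \ge 0$).

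For the failure of Pigou-Dalton consistency, the plan is to exhibit a two-user, two-artist instance where rebalancing engagement toward equality strictly decreases the artist's payoff. Take $C = \{j, j_0\}$, $N = \{i, i'\}$, with $w_{ij} = 2$, $w_{ij_0} = 1$, $w_{i'j} = 0$, $w_{i'j_0} = 6$, and $\delta = 1$. After the transfer, $w'_{ij} = 1$ and $w'_{i'j} = 1$, so all three clauses of the axiom (including $w'_{i'j} \le w'_{ij}$) are satisfied. A direct computation gives $\phi_\I(j) = \tfrac{2}{3}\alpha$ while $\phi_{\I'}(j) = \tfrac{1}{2}\alpha + \tfrac{1}{7}\alpha = \tfrac{9}{14}\alpha < \tfrac{2}{3}\alpha$, contradicting the axiom. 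The main obstacle is not conceptual---all three parts reduce to elementary algebra---but rather in finding the counterexample: one needs the donor $i$ to already be near-saturated on $j$ (so the decrement $\Delta_i^-$ at the \emph{other} artists is small, making $i$'s loss large) while the recipient $i'$ has a large mass $B$ of competing engagement (so the gained $\delta$ is heavily diluted).
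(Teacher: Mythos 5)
Your proposal is correct and follows essentially the same route as the paper: no free-ridership by direct inspection of the formula, engagement monotonicity by showing each user's ratio $\frac{w_{ij^*}}{\sum_{j'} w_{ij'}}$ weakly increases via cross-multiplication and then summing over users, and a two-user, two-artist counterexample to Pigou-Dalton consistency. Your counterexample uses different numbers than the paper's (which takes $\vec{w}_1=(1,2)$, $\vec{w}_2=(9,0)$ with $\delta=1$, yielding $\tfrac{3}{5}\alpha < \tfrac{2}{3}\alpha$), but it is structurally the same and your computations check out.
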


Next, we consider the \usereq{} rule, first studied in \citet{bergantinos2024shapley}.
They established the equivalence between \usereq{} and the \emph{Shapley value}, a fundamental measure in cooperative game theory that ensures a fair distribution of payoffs among players based on their contributions \cite{shapley1953}.

Now, given an instance $\I = (N, C, \mathbf{w})$, for each $i \in N$ and $j \in C$, let $\mathbf{1}_{w_{ij}>0}$ be the indicator function that returns the value $1$ if $w_{ij} > 0$, and $0$ otherwise.
In \usereq, an $\alpha$ fraction of each user's subscription fee is distributed equally among the artists with strictly positive engagement from the user.
For example, if a user listens to only three artists---$80\%$, $19\%$, and $1\%$ of the time, respectively---and does not listen to other artists, then these three artists each receives a payment of $\alpha/3$ from this user, and the remaining artists do not receive any payment from the user. The total payment to an artist is the sum of such payments from each user.
\begin{tcolorbox}[title=\usereq]
    Given an instance $\I = (N, C, \mathbf{w})$ and for each $j \in C$, the payment rule \usereq{} is defined as follows.
    \tcblower
    \begin{equation*}
        \phi_{\I}(j) = \sum_{i \in N} \frac{\mathbf{1}_{w_{ij}>0}}{|\{j' \in C : w_{ij'} > 0\}|} \times \alpha.
    \end{equation*}
\end{tcolorbox}
\usereq{} has similar guarantees as \userprop{}, with the difference being that it fails Sybil-proofness, but satisfies Pigou-Dalton consistency.
\begin{theorem}
\label{thm:rules_usereq_strategy}
    \usereq{} is fraud-proof and bribery-proof, but  fails Sybil-proofness.
\end{theorem}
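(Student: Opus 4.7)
The plan is to decompose the theorem into two independent claims: the positive results (fraud-proofness and bribery-proofness) follow essentially for free from the fact that \usereq{} is user-additive, while the negative result (failure of Sybil-proofness) requires only a small, carefully chosen counterexample involving artist splitting. The main obstacle is simply designing the counterexample so that both conditions in the Sybil-proofness definition are met on the nose while still producing strictly unequal payoffs on $C \setminus C^*$; once the example is fixed, the rest is bookkeeping.

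For the positive direction, I would first observe that the formula for \usereq{} is a pure sum over users $i \in N$ of a quantity that depends only on $\vec{w}_i$, namely $\mathbf{1}_{w_{ij}>0} / |\{j' \in C : w_{ij'} > 0\}| \cdot \alpha$. This means \usereq{} fits the definition of a user-additive rule exactly, so \Cref{thm:user-additive-implies-monotone-fp-bp} immediately yields both fraud-proofness and bribery-proofness. I would state this explicitly in a single sentence each, without reproving the proposition.

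For the failure of Sybil-proofness, I would exhibit the following splitting instance. Take $N = \{1\}$, $C = \{a,b\}$ with $w_{1a} = w_{1b} = 1$, and $C' = \{a, b_1, b_2\}$ with $w'_{1a} = 1$ and $w'_{1b_1} = w'_{1b_2} = \tfrac{1}{2}$, so $b$ is split into two copies $b_1, b_2$ that together absorb all of the original engagement. Set $C^* = \{a\}$. Conditions (i) and (ii) of Sybil-proofness hold: $w_{1a} = w'_{1a}$, and $w_{1b} = 1 = w'_{1b_1} + w'_{1b_2}$. However, under \usereq{}, user $1$ listens to $2$ artists in $\I$ and $3$ artists in $\I'$, so
\begin{equation*}
\phi_\I(C \setminus C^*) = \phi_\I(b) = \tfrac{\alpha}{2}, \qquad \phi_{\I'}(C' \setminus C^*) = \phi_{\I'}(b_1) + \phi_{\I'}(b_2) = \tfrac{2\alpha}{3}.
\end{equation*}
Since $\tfrac{2\alpha}{3} \neq \tfrac{\alpha}{2}$ for any $\alpha \in (0,1]$, this violates Sybil-proofness; intuitively, splitting an artist into two pieces that a user already consumes inflates the splitter's aggregate share because \usereq{} rewards breadth rather than depth of engagement.

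The only real care needed is to ensure $C^*$ is chosen so that the splitting activity is confined to $C \setminus C^*$, and that the user whose engagement drives the gap actually listens to at least one artist in $C^*$ (otherwise the user's total payments cancel, since the user spreads $\alpha$ regardless). The example above is minimal in both dimensions, which is why I expect it to be the cleanest presentation.
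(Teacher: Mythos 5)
Your proposal is correct and mirrors the paper's own proof: it derives fraud- and bribery-proofness from user-additivity via \Cref{thm:user-additive-implies-monotone-fp-bp}, and refutes Sybil-proofness with the same one-user counterexample (engagement $(1,1)$ split into $(1,\tfrac{1}{2},\tfrac{1}{2})$, giving $\tfrac{\alpha}{2}$ versus $\tfrac{2\alpha}{3}$).
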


\begin{theorem} \label{thm:rules_usereq_fairness}
    \usereq{} satisfies no free-ridership, engagement monotonicity, and Pigou-Dalton consistency.
\end{theorem}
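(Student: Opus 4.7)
The proof will exploit the fact that \usereq{} is user-additive: $\phi_\I(j) = \sum_{i \in N} c_i(\vec{w}, j)$, where $c_i(\vec{w}, j) := \frac{\mathbf{1}_{w_{ij}>0}}{|\{j' \in C : w_{ij'} > 0\}|} \times \alpha$ is the contribution of user $i$ to artist $j$. Hence it suffices to reason user-by-user. No free-ridership is immediate: if $\sum_{i \in N} w_{ij} = 0$ then $w_{ij} = 0$ for all $i$, so every indicator in the sum vanishes and $\phi_\I(j) = 0$. The two non-trivial parts are engagement monotonicity and Pigou-Dalton consistency, each reducing to a small case analysis on how the indicator $\mathbf{1}_{w_{ij}>0}$ and the support size $|\{j' : w_{ij'} > 0\}|$ evolve.

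\textbf{Engagement monotonicity.} Fix $i \in N$ and compare $c_i(\vec{w}, j^*)$ with $c_i(\vec{w}', j^*)$. If $w'_{ij^*} = 0$, then by hypothesis (i) $w_{ij^*} = 0$ as well, and both contributions are zero. If $w'_{ij^*} > 0$ and $w_{ij^*} = 0$, the contribution increases from $0$ to something non-negative. The only case requiring work is $w_{ij^*}, w'_{ij^*} > 0$: then both indicators equal $1$, and I would argue that the support can only shrink off of $j^*$. Concretely, by hypothesis (ii), for every $j \ne j^*$ we have $w'_{ij} \le w_{ij}$, so $\mathbf{1}_{w'_{ij}>0} \le \mathbf{1}_{w_{ij}>0}$; since both supports contain $j^*$, $|\{j' : w'_{ij'} > 0\}| \le |\{j' : w_{ij'} > 0\}|$, so the denominator does not grow and $c_i(\vec{w}', j^*) \ge c_i(\vec{w}, j^*)$. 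Summing over $i$ gives $\phi_{\I'}(j^*) \ge \phi_\I(j^*)$.

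\textbf{Pigou-Dalton consistency.} For every user $k \notin \{i, i'\}$, clause (iii) gives $\vec{w}_k = \vec{w}'_k$, so $c_k$ is unchanged. For user $i$, only the $j$-th coordinate changes and it remains positive, since $w'_{ij} = w_{ij} - \delta > 0$ by hypothesis (i); thus both the indicator and the support set $\{j' : w_{ij'} > 0\}$ are identical under $\vec{w}$ and $\vec{w}'$, so $c_i(\vec{w}', j) = c_i(\vec{w}, j)$. For user $i'$, the $j$-th coordinate changes from $w_{i'j}$ to $w_{i'j} + \delta$. If $w_{i'j} > 0$, the indicator stays at $1$ and the support is unchanged, so $c_{i'}$ is unchanged. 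If $w_{i'j} = 0$, then $c_{i'}(\vec{w}, j) = 0$ while $c_{i'}(\vec{w}', j) = \alpha / |\{j' : w'_{i'j'} > 0\}| > 0$, a strict increase. In all cases the total payoff to $j$ weakly increases, giving $\phi_\I(j) \le \phi_{\I'}(j)$.

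\textbf{Expected obstacle.} None of the steps is hard, but the one place I would double-check is Pigou-Dalton when $w_{i'j} = 0$: it is tempting to worry that bringing $j$ into user $i'$'s support reduces $i'$'s contribution to other artists (which it does, since their denominator grows), but the statement only requires $\phi_{\I}(j) \le \phi_{\I'}(j)$, so this side effect is irrelevant. The bookkeeping of which supports change and which do not under each axiom's hypothesis is the only subtlety.
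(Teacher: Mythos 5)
Your proposal is correct and follows essentially the same route as the paper: decompose \usereq{} user-by-user and do a case analysis on the indicator $\mathbf{1}_{w_{ij}>0}$ and the support size $|\{j' : w_{ij'}>0\}|$ for each axiom. Your Pigou-Dalton argument for user $i'$ (direct split on whether $w_{i'j}=0$ or $w_{i'j}>0$) is a slightly more streamlined presentation of the same facts the paper establishes by algebraic manipulation, so there is nothing to add or fix.
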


\subsection*{A Generalization of Portioning}

We also make an important observation: our model can be viewed as a generalization of \emph{portioning} under cardinal preferences \cite{elkind2023portioning,freeman2021truthfulbudget},\footnote{We refer the reader to a recent survey by \citet{SuTe2026VotingDivisible} on works in this area.} where each agent subjectively divides a contiguous resource (such as time or money) among a given set of \emph{candidates}, and the goal is to aggregate these preferences to obtain one (fair) division.
This is similar to our model if we let agents be users, candidates be artists, and preferences be interactions.\footnote{Note that this requires imposing rational number constraints on preferences, as assumed in the preliminaries.}
However, portioning rules require that the engagement of each user is normalized (i.e., sums to $1$).
We can then generate rules for our setting by normalizing each $\vec{w}_i$ and applying a portioning rule to the instance.
There are eight portioning rules cataloged in \citet{elkind2023portioning}.
One of them is equivalent to \userprop{}, but the other seven fail fraud-proofness, bribery-proofness and Sybil-proofness.
We present these rules and prove the results in \Cref{sec:portioning-appendix}.

\section{\scaledUP: A Fairer Mechanism} \label{sec:novelmech}
The three rules we considered above are conceptually distinct: \globalprop allows dedicated fans to exert a disproportionate influence on revenue distribution, but this also creates opportunities for fraud by fabricating users who may \emph{appear} as dedicated fans.
In contrast, \userprop{} is often viewed by policymakers as a more desirable alternative to \globalprop{}. However, \userprop is not necessarily fairer \cite{lei2023proratamusic}, and user-additive rules in general may fail to meaningfully reward artists for increasing the engagement within their existing fanbase.

To better understand differences in \emph{payment fairness}, it is useful to examine the \emph{pay-per-stream} metric \cite{dimont2018switchlegal,meyn2023monetary}.
Given an instance $\I$ and an artist $j$, let the artist \emph{pay-per-stream} (\textsc{PPS}) for rule $\phi$ be $\textsc{PPS}(\phi, \I ,j) = \frac{\phi_{\I}(j)}{\sum_{i \in N} w_{ij}}$.
Using this, we define the \emph{maximum envy} (\textsc{ME}) of $\I$ as $\textsc{ME}(\phi, \I) = \frac{\max_{j \in C} \textsc{PPS}(\phi, \I, j) }{\min_{j' \in C} \textsc{PPS}(\phi, \I, j') }$.
This ratio quantifies the disparity in \textsc{PPS} between the highest-paid and lowest-paid artists, providing a measure of the maximum envy in revenue distribution.

Then, we obtain the following result, which essentially implies that any fraud-proof or bribery-proof rule has the potential to be extremely unfair (unbounded maximum envy).
\begin{proposition} \label{prop:pps_fpbp}
    For all $\alpha \in (0,1]$ and rules $\phi$, if there exists $k \in \mathbb{R}$ such that for all instances $\I$, $\textsc{ME}(\phi, \I) \leq k$, then $\phi$ fails fraud-proofness and bribery-proofness.
\end{proposition}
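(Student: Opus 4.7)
The plan is to show that any rule $\phi$ with uniformly bounded maximum envy can be manipulated profitably both by fabricating a single fake user and by bribing a single user, contradicting (single-user) fraud- and bribery-proofness. Fix $\alpha \in (0,1]$ and suppose $\textsc{ME}(\phi, \I) \le k$ for every instance $\I$ in which $\textsc{ME}$ is defined.

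I would begin by choosing a positive integer $u$ with $(u+1)\alpha > 1$; such $u$ exists for every $\alpha > 0$. Construct a baseline instance $\I$ with $u+1$ users and two artists in which every user has engagement profile $(1, \epsilon)$ for a small parameter $\epsilon > 0$. Both artists then have positive total engagement, so $\textsc{ME}(\phi, \I)$ is defined; applying $\textsc{PPS}(\phi, \I, 2) \le k \cdot \textsc{PPS}(\phi, \I, 1)$ together with $\phi_\I(1) + \phi_\I(2) = (u+1)\alpha$ yields
\[
\phi_\I(2) \;\le\; k\epsilon\, \phi_\I(1) \;\le\; k\epsilon(u+1)\alpha,
\]
which tends to $0$ as $\epsilon \to 0$.

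For bribery-proofness, I would modify $\I$ into $\I'$ by reassigning one user's engagement to $(0, T)$ for a large $T > 0$; the total engagement is $u$ for artist $1$ and $u\epsilon + T$ for artist $2$, both positive. The bound $\textsc{PPS}(\phi, \I', 1) \le k \cdot \textsc{PPS}(\phi, \I', 2)$ yields $\phi_{\I'}(1) \le k u\, \phi_{\I'}(2)/(u\epsilon + T)$, and combining with $\phi_{\I'}(1) + \phi_{\I'}(2) = (u+1)\alpha$ gives
\[
\phi_{\I'}(2) \;\ge\; (u+1)\alpha \cdot \frac{u\epsilon + T}{u\epsilon + T + ku} \;\xrightarrow[T \to \infty]{}\; (u+1)\alpha > 1.
\]
Choosing $\epsilon$ sufficiently small and $T$ sufficiently large makes $\phi_{\I'}(2) - \phi_\I(2) > 1$, violating single-user bribery-proofness and hence bribery-proofness by \Cref{prop:singleuserBP}. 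For fraud-proofness I would instead add a single fake user with engagement $(0, T)$ to $\I$, obtaining $\I''$ with $u+2$ users; the analogous calculation gives $\phi_{\I''}(2) \to (u+2)\alpha$ as $T \to \infty$, so $\phi_{\I''}(2) - \phi_\I(2) > 1 = \widehat{n}$ for suitable $\epsilon, T$, contradicting single-user fraud-proofness and hence fraud-proofness by \Cref{prop:singleuserFP}.

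The main obstacle is purely bookkeeping: ensuring $\textsc{ME}$ is well-defined in every instance that appears, which is why I use the $\epsilon$-perturbation to keep artist $2$'s total engagement strictly positive at the baseline (avoiding an undefined $0/0$ \textsc{PPS}). Once that is handled, everything reduces to a clean two-parameter limit, $\epsilon \to 0$ to erase the baseline payment to artist $2$ and $T \to \infty$ to drive the post-manipulation payment toward the full collected revenue, which by the choice of $u$ exceeds the $1$-unit allowance for a single-user change or addition.
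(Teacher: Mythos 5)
Your proposal is correct and follows essentially the same route as the paper's proof: a two-artist instance in which the bounded maximum-envy condition forces the target artist's payment to be small before, and close to the full distributed revenue after, a single-user addition (for fraud-proofness) or a single-user profile change (for bribery-proofness), so the jump exceeds $1$ once $n\alpha$ is large enough. The only difference is cosmetic---you drive the bounds via limits in $\epsilon$ and $T$, while the paper fixes explicit engagement values ($1/(4k)$ versus $3k$) to get the same sandwich.
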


However, not all such rules may perform equally bad on this front---we will analyze this later through experiments (in \Cref{sec:experiments}), with a slight variant of the \textsc{ME} definition.

Given this, we attempt to achieve a compromise by designing a rule that has the same axiomatic guarantees as \userprop{}, while offering empirically (in \Cref{sec:experiments}) stronger fairness guarantees than \userprop{} and \usereq{}.
$\scaledUP$ works by having the platform take a \emph{disproportionate amount} of commission from low-engagement users.
The platform then runs \userprop on the remaining subscription fees. It is defined as follows.

\begin{tcolorbox}[title=\scaledUP]
    Given an instance $\I = (N, C, \mathbf{w})$, let $\gamma$ be a constant such that $\sum_{i \in N}  \min\left(\gamma \cdot \sum_{j \in C}w_{ij}, 1\right)= \alpha n$. Then, for each $j \in C$, the payment rule \scaledUP{} is defined as follows.
    \tcblower
    \vspace{-0.3cm}
    \begin{equation*}
        \phi_\I(j) = \sum_{i \in N} \left( \min(\gamma \cdot \sum_{j' \in C} w_{ij'}, 1) \times \frac{w_{ij}}{\sum_{j' \in C} w_{ij'}} \right).
    \end{equation*}
\end{tcolorbox}

Note that when $\alpha = 1$, we have $\min(\gamma \cdot \sum_{j' \in C} w_{ij'}, 1) = 1$ for all $i \in N$, making \scaledUP equivalent to \userprop.
For $\alpha < 1$, if no user's engagement exceeds $\frac{1}{\alpha}$ times the average engagement, then \scaledUP is equivalent to \globalprop, which we show below.

\begin{theorem}
    \label{thm:scaled-up-similar-to-gp}
    Fix an instance $\I = (N, C, \mathbf{w})$. If $\sum_{j \in C} w_{ij} \leq \frac{1}{n\alpha} \sum_{i \in N} \sum_{j \in C} w_{ij}$ for all $i \in N$, then \scaledUP is equivalent to \globalprop.
\end{theorem}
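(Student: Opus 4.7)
The plan is to exhibit an explicit candidate for the normalization constant $\gamma$ appearing in the definition of \scaledUP{} and show that, under the hypothesis, this candidate makes the inner $\min$ inactive for every user, which causes the rule to collapse into the \globalprop{} formula.

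Concretely, let $W = \sum_{i \in N}\sum_{j \in C} w_{ij}$ and set $\gamma^{*} = \alpha n / W$. The key observation is that the hypothesis $\sum_{j} w_{ij} \leq W/(n\alpha)$ is a direct restatement of $\gamma^{*} \sum_{j} w_{ij} \leq 1$ for every $i \in N$. Hence $\min(\gamma^{*}\sum_{j} w_{ij},\,1) = \gamma^{*}\sum_{j} w_{ij}$ for all $i$, and summing over $i$ yields $\gamma^{*} W = \alpha n$. This verifies that $\gamma^{*}$ satisfies the defining equation for $\gamma$ in \scaledUP{}. Substituting $\gamma^{*}$ into the \scaledUP{} formula, the factor $\gamma^{*}\sum_{j'} w_{ij'}$ cancels the denominator $\sum_{j'} w_{ij'}$, leaving $\phi_{\I}(j) = \gamma^{*}\sum_{i} w_{ij} = \frac{\alpha n \sum_{i} w_{ij}}{W}$, which is exactly the \globalprop{} payment.

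The only subtlety is the uniqueness of $\gamma$, since the \scaledUP{} definition assumes a specific value. For $\alpha < 1$, I would argue that the map $\gamma \mapsto \sum_i \min(\gamma \sum_j w_{ij},\,1)$ is continuous and strictly increasing on the interval where at least one term is non-saturated (using the assumption that every user has strictly positive total engagement), and plateaus at $n > \alpha n$ once every term saturates; thus $\gamma^{*}$ is the unique admissible choice. For $\alpha = 1$, the hypothesis forces $\sum_{j} w_{ij} = W/n$ for every $i$ (equality throughout), and one checks directly that on such uniform instances \scaledUP{} and \globalprop{} produce identical payments regardless of which admissible $\gamma$ is selected. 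Handling this $\alpha = 1$ boundary cleanly is the only non-routine step; everything else is a one-line substitution once $\gamma^{*}$ is identified.
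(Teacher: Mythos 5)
Your proposal is correct and follows essentially the same route as the paper's proof: exhibit the candidate $\gamma^{*} = \alpha n / \sum_{i}\sum_{j} w_{ij}$, observe that the hypothesis makes every $\min$ inactive so that $\gamma^{*}$ satisfies the defining equation, and substitute to recover the \globalprop{} formula. Your additional care about uniqueness of $\gamma$ and the saturated $\alpha = 1$ boundary case goes slightly beyond the paper (which simply declares $\gamma^{*}$ ``the appropriate $\gamma$''), but it is a refinement of the same argument rather than a different one.
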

Thus, \scaledUP can be viewed as a variant of \globalprop that ``limits the influence'' of users who have engagement significantly above average.
We then show that \scaledUP has exactly the same axiomatic  guarantees as \userprop{}, with the following results.

\begin{theorem}\label{thm:rules_scaled_strategy}
    \scaledUP satisfies fraud-proofness, bribery-proofness, and Sybil-proofness, but fails strong Sybil-proofness.
\end{theorem}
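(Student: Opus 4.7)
My plan is to handle the three positive claims and the negative one separately, exploiting the fact that $\scaledUP$ is \emph{almost} user-additive: each user $i$ contributes a total of $c_i := \min(\gamma s_i, 1) \in [0,1]$ to the pool (where $s_i = \sum_j w_{ij}$), which is then split among the artists in proportion to $w_{ij}/s_i$. What prevents me from invoking \Cref{thm:user-additive-implies-monotone-fp-bp} directly is that $\gamma$ is defined implicitly by the budget equation $\sum_i c_i = \alpha n$, so it depends on the entire engagement profile. The main obstacle throughout will be to control how the shift from $\gamma$ (in $\I$) to $\gamma'$ (in $\I'$) reshuffles the contributions of the \emph{unchanged} users.

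For fraud-proofness, I reduce to a single extra fake user via \Cref{prop:singleuserFP}. Writing $p_i := (\sum_{j \in \widehat{C}} w_{ij})/s_i \in [0,1]$, the quantity $\phi_{\I'}(\widehat{C}) - \phi_\I(\widehat{C})$ decomposes as $\sum_{i \in N}(c'_i - c_i)\,p_i + c'_{n+1}\,p_{n+1}$, and I split into two cases. If $\gamma' \le \gamma$ then $c'_i \le c_i$ for every old user, the first sum is nonpositive, and the bound $c'_{n+1}p_{n+1} \le 1$ suffices. If $\gamma' > \gamma$, I use $p_i \le 1$ to bound the first sum by $\sum_{i \in N}(c'_i - c_i)$; combining the two budget equations $\sum_{i \in N}c_i = \alpha n$ and $\sum_{i \in N}c'_i + c'_{n+1} = \alpha(n+1)$ collapses this to $\alpha - c'_{n+1}$, and adding $c'_{n+1}p_{n+1} \le c'_{n+1}$ yields a total of at most $\alpha \le 1$.

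For bribery-proofness, $n$ is fixed and only user $i^*$ changes, so the decomposition becomes $\sum_{i \ne i^*}(c'_i - c_i)\,p_i + (c'_{i^*}p'_{i^*} - c_{i^*}p_{i^*})$. The same dichotomy applies: if $\gamma' \le \gamma$ the first sum is nonpositive and $c'_{i^*}p'_{i^*} \le 1$ closes it; if $\gamma' > \gamma$ the (now-common) budget forces $\sum_{i \ne i^*}(c'_i - c_i) = c_{i^*} - c'_{i^*}$, and after simplification the whole expression is bounded by $c_{i^*}(1-p_{i^*}) - c'_{i^*}(1-p'_{i^*}) \le 1$. Sybil-proofness is the easiest of the three: its hypotheses guarantee $s'_i = s_i$ for every user, so the implicit equation forces $\gamma' = \gamma$ and hence $c'_i = c_i$; user $i$'s contribution to $C \setminus C^*$ is $c_i \cdot (\sum_{j \in C \setminus C^*} w_{ij})/s_i$, which matches their contribution to $C' \setminus C^*$ by hypothesis (ii), and summing over $i$ gives the required equality.

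To show that $\scaledUP$ fails strong Sybil-proofness I simply invoke \Cref{thm:strong_sybil_proofness_characterisation}: \globalprop{} is the \emph{unique} neutral rule satisfying strong Sybil-proofness, so it suffices to exhibit one instance on which $\scaledUP$ and \globalprop{} disagree. At $\alpha = 1$ the paper already observes that $\scaledUP$ coincides with \userprop{}, which differs from \globalprop{} on any instance with sufficiently heterogeneous engagement profiles, so $\scaledUP$ cannot be strongly Sybil-proof.
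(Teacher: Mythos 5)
Your proposal is correct, and for the two substantive parts (fraud- and bribery-proofness) it takes a genuinely different route from the paper. The paper first collapses the coalition $\widehat{C}$ and its complement into two artists, then invokes engagement monotonicity of \scaledUP (proved only in \Cref{thm:rules_scaled_fairness}) to argue that the worst case is an extreme profile $(0,\ldots,0,M)$ with $M$ large enough that the cap binds, which forces $\gamma' \leq \gamma$ and lets it bound the gain by the manipulator's own capped contribution. You instead work with the per-user decomposition $\phi_\I(\widehat{C}) = \sum_i c_i p_i$ with $c_i = \min(\gamma s_i,1)$ directly, split on the sign of $\gamma'-\gamma$, and in the awkward case $\gamma' > \gamma$ (which the paper's WLOG reduction sidesteps rather than confronts) use the budget identities $\sum_i c_i = \alpha n$ and $\sum_i c_i' + c_{n+1}' = \alpha(n+1)$ together with $p_i \leq 1$ to collapse the drift of the unchanged users to $\alpha - c_{n+1}'$ (respectively $c_{i^*} - c_{i^*}'$ for bribery), giving bounds of $\alpha \leq 1$ and $c_{i^*}(1-p_{i^*}) - c_{i^*}'(1-p_{i^*}') \leq 1$; I checked these computations and they are sound, noting that $\gamma' > \gamma$ indeed gives $c_i' \geq c_i$ for every unchanged user so your replacement of $p_i$ by $1$ is valid. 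What your approach buys is self-containedness (no forward reference to engagement monotonicity, no collapsing argument, no extremal-profile limit), at the cost of the paper's explicit identification of the worst-case manipulation. Your Sybil-proofness argument and the appeal to \Cref{thm:strong_sybil_proofness_characterisation} mirror the paper's. One small touch-up: for the strong Sybil-proofness failure you exhibit disagreement with \globalprop only at $\alpha = 1$; for $\alpha < 1$ you should add one line, e.g.\ an instance with a single very heavy user (total engagement exceeding $\frac{1}{\alpha}$ times the average) on which the cap binds and \scaledUP and \globalprop pay differently, so the characterization applies for every $\alpha \in (0,1]$.
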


\begin{theorem} \label{thm:rules_scaled_fairness}
    \scaledUP satisfies no free-ridership, engagement monotonicity, but fails Pigou-Dalton consistency.
\end{theorem}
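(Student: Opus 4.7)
The plan is to handle the three subclaims separately. No free-ridership is immediate: if $\sum_i w_{ij^*} = 0$ then $w_{ij^*} = 0$ for every $i$, so every summand of $\phi_\I(j^*)$ vanishes (recall $s_i > 0$ for each user).

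For engagement monotonicity, the first step is to rewrite the rule as
\begin{equation*}
    \phi_\I(j) = \sum_i r_i w_{ij}, \qquad r_i := \min(\gamma, 1/s_i),
\end{equation*}
so that $\min(\gamma s_i, 1) = r_i s_i$ and the defining equation for $\gamma$ becomes the budget identity $\sum_i r_i s_i = \alpha n$. I would then split the move from $\I$ to $\I'$ into two sub-steps: Step~A, only (weakly) increase the entries $w_{ij^*}$ to $w'_{ij^*}$; Step~B, only (weakly) decrease the entries $w_{ij}$ for $j \neq j^*$ to $w'_{ij}$. In Step~A every $s_i$ weakly increases; strict monotonicity of the map $\gamma \mapsto \sum_i \min(\gamma s_i, 1)$ on the relevant range forces the new multiplier $\gamma'$ to weakly decrease, whence every rate $r'_i = \min(\gamma', 1/s'_i)$ also weakly decreases. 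Setting $\Delta_i = s'_i - s_i \geq 0$ and subtracting the old budget identity from the new one yields $\sum_i r'_i \Delta_i = \sum_i (r_i - r'_i) s_i$; substituting this into the definition of $\phi_{\I'}(j^*) - \phi_\I(j^*)$ collapses the difference to
\begin{equation*}
    \phi_{\I'}(j^*) - \phi_\I(j^*) = \sum_i (r_i - r'_i)(s_i - w_{ij^*}),
\end{equation*}
which is nonnegative since both factors are. Step~B is simpler: $w_{ij^*}$ is untouched and every rate weakly increases by the symmetric argument, so $\phi_{\I'}(j^*) - \phi_\I(j^*) = \sum_i w_{ij^*}(r'_i - r_i) \geq 0$. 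Composing Step~A and Step~B yields engagement monotonicity.

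The main obstacle is that $\gamma$ is a global quantity coupled to the entire engagement profile, so a naive user-by-user comparison fails---increasing a single user's engagement depresses the rates of all other users. The key trick is to fold the change in $\gamma$ into the budget identity so that the net change in $\phi(j^*)$ collapses into a sum of manifestly nonnegative terms.

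To refute Pigou-Dalton consistency I would give a direct counterexample. Take $n = m = 2$, $\alpha = 3/4$, with $\vec w_1 = (5,5)$ and $\vec w_2 = (1,0)$: solving $\min(10\gamma,1) + \min(\gamma,1) = 3/2$ gives $\gamma = 1/2$, so $\phi_\I(1) = 5 \cdot \tfrac{1}{10} + 1 \cdot \tfrac{1}{2} = 1$. Transferring $\delta = 2$ from $w_{11}$ to $w_{21}$ gives $\vec w'_1 = (3,5)$ and $\vec w'_2 = (3,0)$; solving $\min(8\gamma',1) + \min(3\gamma',1) = 3/2$ gives $\gamma' = 1/6$, so $\phi_{\I'}(1) = 3 \cdot \tfrac{1}{8} + 3 \cdot \tfrac{1}{6} = \tfrac{7}{8}$. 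All Pigou-Dalton hypotheses are satisfied (notably $w'_{21} = 3 \leq w'_{11} = 3$), yet $\phi_{\I'}(1) = \tfrac{7}{8} < 1 = \phi_\I(1)$, so the axiom fails.
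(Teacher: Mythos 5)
Your proposal is correct in substance, and for engagement monotonicity it takes a genuinely different and arguably cleaner route than the paper. The paper proves monotonicity by perturbing a single entry $w_{nm}$ and running a case analysis on whether the affected user's cap binds ($\gamma \norm{\vec{w}_n}_1 \geq 1$ or $<1$), with a further intermediate-instance trick when the cap switches from non-binding to binding; strictly speaking it only writes out increases of the target artist's entries. Your reformulation $\phi_\I(j) = \sum_i r_i w_{ij}$ with $r_i = \min(\gamma, 1/s_i)$ and the budget identity $\sum_i r_i s_i = \alpha n$ handles the full two-sided definition directly: Step~A's collapse of the payoff difference to $\sum_i (r_i - r'_i)(s_i - w_{ij^*})$ is a nice way to absorb the global coupling through $\gamma$, and Step~B is symmetric. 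Two small points you should still make explicit: define $s_i = \sum_j w_{ij}$, and note that although $\gamma$ need not be unique, the capped values $\min(\gamma s_i,1)$ (hence the rates $r_i$) are uniquely determined by the budget equation, so choosing the solution $\gamma' \leq \gamma$ in Step~A (and $\gamma' \geq \gamma$ in Step~B) is legitimate. Your no free-ridership argument matches the paper's.

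The one genuine shortfall is the Pigou--Dalton part. The rule \scaledUP{} is parameterized by $\alpha$, and the paper's proof (and the natural reading of the theorem, since the positive axioms are established for every $\alpha \in (0,1]$) exhibits a violating instance for \emph{every} $\alpha \in (0,1]$, using $n = \ceil*{1/\alpha}+1$ users and an engagement scale $M$ tuned to $\alpha$. Your counterexample is arithmetically correct --- I verified $\gamma = 1/2$, $\phi_\I(1) = 1$, $\gamma' = 1/6$, $\phi_{\I'}(1) = 7/8$, and all the transfer conditions including $w'_{21} \leq w'_{11}$ --- but it is pinned to $\alpha = 3/4$ (and $n=2$), so as written it only refutes Pigou--Dalton consistency for that particular platform cut. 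To close the gap you would need to let the construction scale with $\alpha$, e.g., by adding enough unit-engagement users (or rescaling the avid user's engagement) so that the cap binds for the avid user but not for the others, exactly as in the paper's construction; the mechanism of the failure you identified (the transfer raises the avid user's share of a capped budget while depressing $\gamma$ for everyone else) is the right one and generalizes without difficulty.
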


\section{Experiments} \label{sec:experiments}
To complement our theoretical analysis, we conduct experiments to evaluate our fraud-proof (and bribery-proof) mechanisms---\userprop{}, \usereq{}, \scaledUP{}---using both synthetic and real-world datasets. Motivated by our definition of \emph{maximum envy} in \Cref{prop:pps_fpbp}, for each rule, we analyze the top and bottom few artists based on their \emph{pay-per-stream (\textsc{PPS})} relative to \globalprop{}'s \textsc{PPS}, as the revenue share ($\alpha$) varies.\footnote{Note that in \Cref{prop:pps_fpbp}, maximum envy is defined with respect to the single top and bottom user, which differs from the metric used in this section.
In our experiments, we chose to report metrics for the top and bottom few users rather than just the
single best and worst, as we believe this provides a more robust assessment—mitigating the impact of
potential outliers that may disproportionately affect the extremes.
However, our definition and theoretical results would easily extend to top and bottom few users, making it consistent with that used for the experiments.} Notably, only $\scaledUP$ is influenced non-linearly by changes in $\alpha$ (the other rules scale linearly with $\alpha$). Consequently, the pay-per-stream values for the other three rules remain constant across different values of $\alpha$.

\textbf{Synthetic datasets} \quad We generate synthetic problem instances involving $10,000$ users and $1,000$ artists.
For each user, we first determine the number of artists they interact with by drawing a value uniformly at random from the range $[1, 100]$.
Based on this value, we randomly select the corresponding number of artists from the pool of $1,000$. For each chosen artist, the number of times the user streams their music is sampled from a Poisson distribution with \(\lambda = 1\). We repeat the experiments $100$ times.

\textbf{Real-world datasets} \quad We utilize data from the \emph{Music Listening Histories Dataset} \cite{vigliensoni17music}, that contains the listening history of approximately $583,000$ users, $439,000$ artists, and a cumulative total of $27$ billion \emph{listening events} (i.e., user-artist interactions).\footnote{Our code is accessible at \url{https://github.com/nicteh/Fraud-Proof-Revenue-Division}.} %

\textbf{Discussion} \quad
\begin{figure}[t]
    \centering
    \subfigure[Real data, top $100$ artists' \textsc{PPS} relative to \textsc{GP}]{%
        \includegraphics[width=0.48\columnwidth]{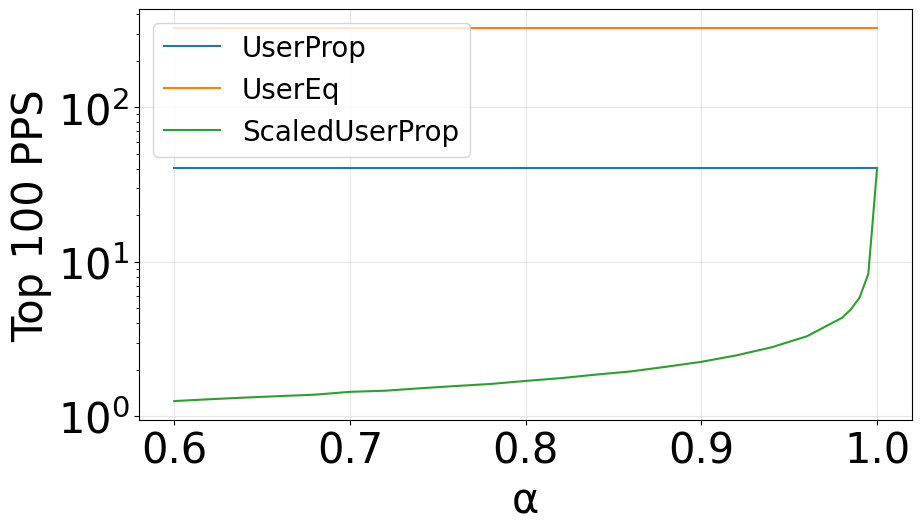}
    }%
    \hfill
    \subfigure[Real data, bottom $100$ artists' \textsc{PPS} relative to \textsc{GP}]{%
        \includegraphics[width=0.48\columnwidth]{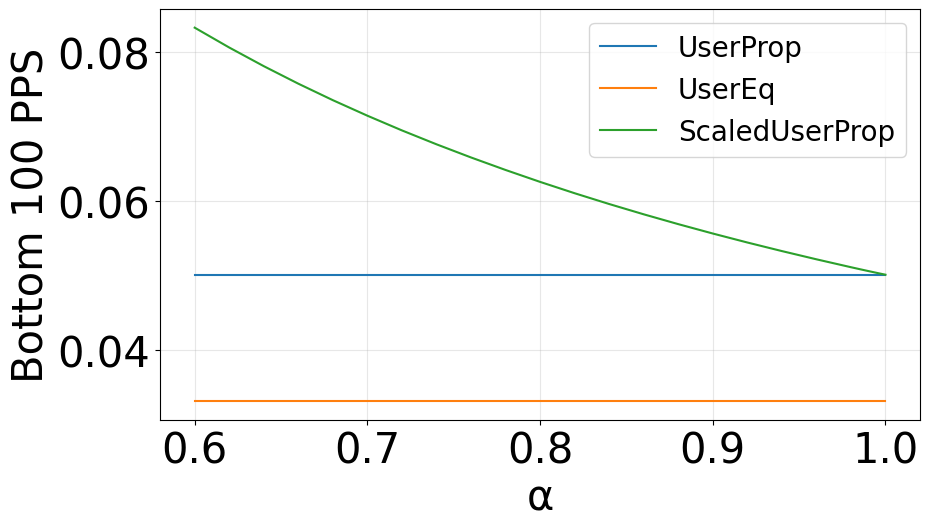}
    }%
    \vskip\baselineskip
    \subfigure[Synthetic data, top $10$ artists' \textsc{PPS} relative to \textsc{GP}]{%
        \includegraphics[width=0.48\columnwidth]{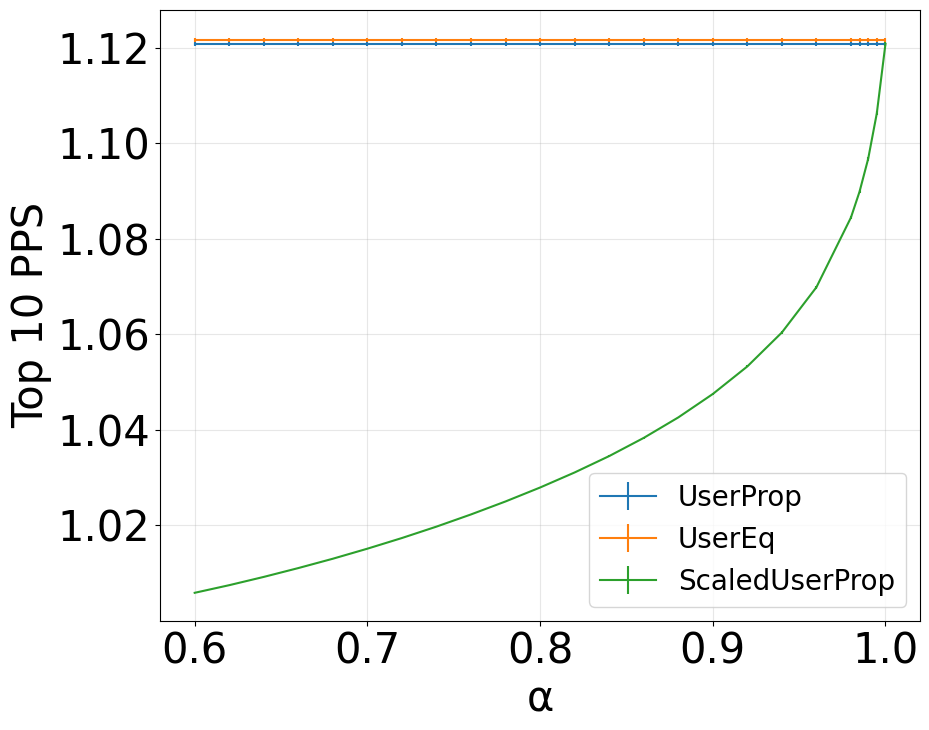}
    }%
    \hfill
    \subfigure[Synthetic data, bottom $10$ artists' \textsc{PPS} relative to \textsc{GP}]{%
        \includegraphics[width=0.48\columnwidth]{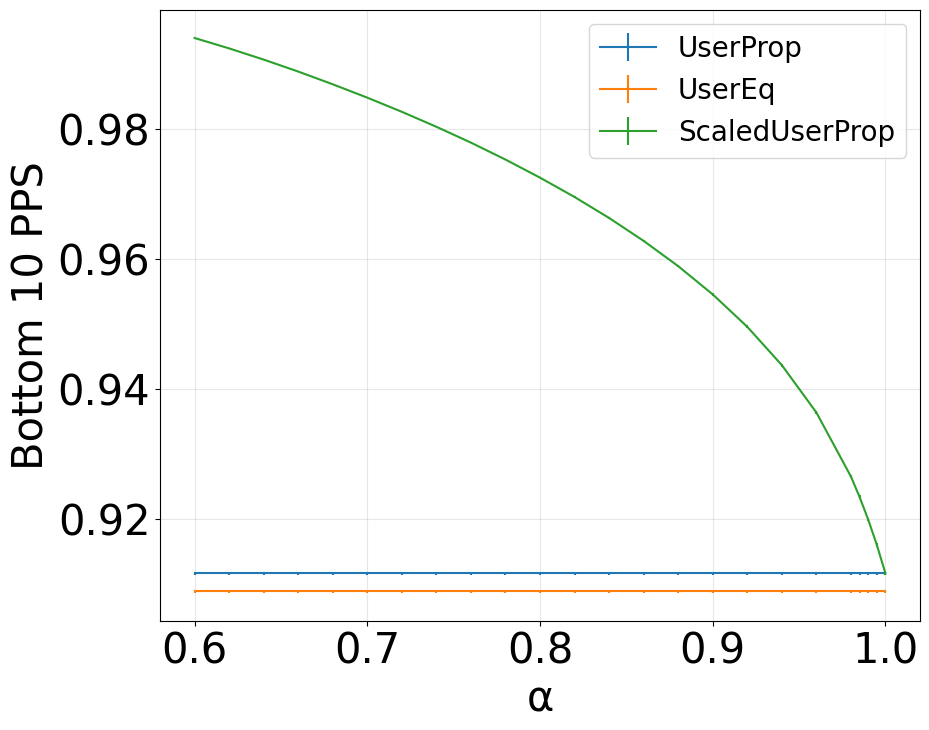}
    }%
    \caption{Overview of graphs from real and synthetic data. (a) and (b) show results for real data, while (c) and (d) show results for synthetic data. \textsc{GP} is short for \globalprop.}
    \label{fig:graphs}
\end{figure}
On real-world data, \scaledUP{} emerges as fairest mechanism among those considered, especially for values of \(\alpha\) not close to $1$; whereas \usereq{}, which treats avid and casual listeners equally, is the least fair.
\scaledUP{} significantly reduces the top $100$ artists' \textsc{PPS} even for \(\alpha > 0.9\), but it only gradually increases the bottom $100$ \textsc{PPS} as \(\alpha\) decreases.
To understand this outcome, we first observe that artists with high \textsc{PPS} typically attract infrequent listeners, while those with low \textsc{PPS} tend to have a more dedicated, avid fanbase.

We also observe that under \scaledUP, each stream from a user contributes \(\min(\gamma, \frac{1}{\sum_{j \in C} w_{ij}})\), whereas under \userprop{}, it contributes \(\frac{\alpha}{\sum_{j \in C} w_{ij}}\).
For avid listeners with high \(\sum_{j \in C} w_{ij}\), a stream under \scaledUP is worth $\frac{1}{\alpha}$ times its value under \userprop.
Conversely, for infrequent listeners, \scaledUP{} caps a stream's worth at \(\gamma\), while under \userprop, it can reach up to \(\alpha\) in the extreme case where \(\sum_{j \in C} w_{ij} = 1\).

On synthetic data, \scaledUP{} remains the fairest mechanism as \(\alpha\) decreases.
However, in contrast to the real-world data, we observe two key differences:
(1) the top and bottom \textsc{PPS} are much closer in magnitude, and (2) \userprop and \usereq perform nearly identically.
These differences can be partly attributed to the way synthetic instances are generated.
While our model accounts for users with varying streaming frequencies, it does not capture the real-world tendency of certain artists to attract predominantly avid or infrequent listeners.

\section{Conclusion}

In this work, we formalized three types of manipulation by fraudulent agents in subscription-based platforms, motivated by a real-world multi-million dollar fraud case.
We show that the axioms we introduced are not equivalent and study the rules that satisfy them.
\globalprop{}, which is used by streaming platforms, does not satisfy fraud-proofness or bribery-proofness.
However, we show that \userprop{} and \usereq do.
We introduce a novel rule, \scaledUP. It is as strong in resisting manipulation as \userprop{} and incentivizes artists to increase their overall engagement similarly to \globalprop{}.
Our empirical study on real and synthetic data of fraud-proof rules support \scaledUP is a fairer fraud-proof alternative to other rules.

A natural follow-up direction would be to study a \emph{freemium} model, by incorporating users who have to watch advertisements to gain access to content on the platform, and have been adopted by streaming platforms such as YouTube and Spotify, among others. Revenue division in this context would have different considerations and call for more appropriate axioms to be defined. Machine learning approaches have been adopted here as well \cite{goli2024personalads}; it would be interesting to explore these questions from a mechanism design perspective.

\bibliography{icml25}
\bibliographystyle{icml2025}

\newpage
\appendix
\onecolumn

\begin{center}
\LARGE\bf
Appendix\end{center}

\section{Omitted Proofs from Section \ref{sec:prelim}}
\subsection{Proof of \Cref{prop:singleuserFP}}
If $\phi$ is fraud-proof then by definition it is single-user fraud-proof.
    Now, suppose rule $\phi$ is single-user fraud-proof.
    Consider instances $\I = (N, C, \vec{w})$ and $\I' = (N \cup \widehat{N}, C, \vec{w}')$ with $N \cap \widehat{N} = \varnothing$ and let $\widehat{C} \subseteq C$.
    Enumerate $\widehat{N} = \{\widehat{n}_1, \ldots, \widehat{n}_k\}$, then for $j \leq k$ we construct instances $\I_j = (N \cup \{\widehat{n}_1, \ldots, \widehat{n}_j\}, C, \vec{w} \mid \vec{w}_{\widehat{n}_1}, \ldots, \vec{w}_{\widehat{n}_j})$ where we adjoin engagement vectors $\vec{w}_{\widehat{n}_1}, \ldots, \vec{w}_{\widehat{n}_j}$ to $\vec{w}$.
    We have $\I_0 = \I$ and $\I_{n_k} = \I'$.

    By single user fraud-proofness, for all $j$: $\phi_{\I_{j+1}}(\widehat{C}) - \phi_{\I_j}(\widehat{C}) \leq 1$.
    So, $\sum_{j = 0}^{k-1} \phi_{\I_{j+1}}(\widehat{C}) - \phi_{\I_{j}}(\widehat{C}) \leq k$, but as a telescoping sum, $\phi_{\I_{k}}(\widehat{C}) - \phi_{\I_{0}}(\widehat{C}) = \phi_{\I'}(\widehat{C}) - \phi_{\I}(\widehat{C}) \leq k$.
    So, $\phi$ is fraud-proof.

\subsection{Proof of \Cref{prop:singleuserBP}}
    If a rule is bribery-proof it is also by definition single-user bribery proof.
    Suppose a rule is not bribery-proof.
    Then, there are instances $\I$, $\I'$ with $\vec{w}_i \neq \vec{w}_i'$ precisely for users $\{1, \ldots, k\}$ and $C^+ \subseteq C$ with $\phi_{{\I'}}({C^+}) -  \phi_{{\I}}({C^+}) > k$.
    Now, consider instances $\I_0 = \I, \I_1, \ldots, \I_k = \I'$ with the profile of user $i$ in instance $\I_j$ being $\vec{w}_i'$ if $i \leq j$ and $\vec{w}_i$ otherwise.
    Then $\sum_{j = 0}^{k-1} \phi_{\I_{j+1}}({C^{+}}) - \phi_{\I_{j}}(C^{+}) = \phi_{\I'}(C^+) -  \phi_{\I}(C^+) > k$ and so in particular at least one term in the sum is greater than $1$.
    So the rule is not single-user bribery-proof.

\subsection{Proof of \Cref{thm:fp_bp_relationship}}

\paragraph{(i)} Suppose rule $\phi$ is not bribery-proof and consider a pair of instances $\I$ and $\I'$ such that bribery-proofness is violated.
    Let $C^+$ the set of artists with a higher payoff in $\I'$, namely $C^+ = \{c \mid \phi_{\I'}(c) > \phi_{\I}(c)\}$.
    We similarly define $C^= = \{c \mid \phi_{\I'}(c) = \phi_{\I}(c)\}$ and $C^- = \{c \mid \phi_{\I'}(c) < \phi_{\I}(c)\}$.
    Since $\phi$ violates bribery-proofness on $\I$ and $\I'$, $\phi_{\I'}(C^+)-\phi_{\I}(C^+) > 1$.

    Now, consider an instance with one less user: $\mathcal{F}$.
    As $\alpha = 1$, $\phi_{\mathcal{F}}(C) = \phi_{\I}(C) - 1$.
    By fraud-proofness, $\phi_{\mathcal{F}}({C^+ \cup C^=}) \geq \phi_{\I'}({C^+ \cup C^=}) - 1$ and $\phi_{\mathcal{F}}({C^-}) \geq \phi_{\I}({C^-}) - 1$. So, adding up the inequalities, $\phi_{\mathcal{F}}(C) \geq \phi_{\I'}({C^+ \cup C^=}) + \phi_{\I}({C^-}) - 2$. As this is a bribery-proofness violation, $\phi_{\I'}({C^+}) > \phi_{\I}({C^+}) + 1$, and by definition $\phi_{\I}({C^=}) = \phi_{\I'}({C^=})$. So, $\phi_{\mathcal{F}}(C) > \phi_{\I}({C}) - 1$, but $\phi_{\mathcal{F}}(C) = \phi_{\I}(C) - 1$, giving rise to a contradiction.

\paragraph{(ii)}
    We define a rule that is bribery-proof but not fraud-proof.
    This rule is anonymous, neutral and satisfies no free-ridership.
    To do so, we will modify the rule \userprop{} which is defined in \cref{sec:user-additive-rules}.
    We will set a threshold value of $\beta = 2 \floor*{\frac{n\alpha}{20}}$.
    Consider an instance $\I = (N, \{0, 1\}, \vec{w})$ with 2 artists.
    For each artist $j \in \{0, 1\}$, we compute $p_j = \sum_{i \in N}\frac{w_{ij}}{w_{i, j} + w_{i, 1 - j}} \alpha$.
    If $\min(p_0, p_1) \geq 2 \floor*{\frac{n\alpha}{20}}$, for $j \in \{0, 1\}$,
    $\phi_\I(j) = p_j$.
    Otherwise, let $j$ the artist with $p_j < p_{1-j}$. Let the number of users that have positive engagement with artist $j$ be $a_j$.
    Then, $\phi(j) = \min(a_j, \beta)$ and $\phi(1-j) = n\alpha - \min(a_j, \beta)$.

    This rule is bribery-proof. Suppose we have a bribery-proofness violation in instances $\I = (N, \{0, 1\}, \vec{w}), \I'= (N, \{0, 1\}, \vec{w}')$. Let $k$ the unique user that modifies her engagement profile and $\phi_{\I'}(j) > \phi_\I(j) + 1$.
    We define $p_j'$ for instance $\I'$ analogously with $p_j$, $p_j' =\sum_{i \in N}\frac{w_{ij}'}{w_{i, j}' + w_{i, 1 - j}'} \alpha$.
    Then, since $w_{ij} = w_{ij}'$ for all $i \neq k$, we have that $p_j' - p_j = \frac{w_{kj}'}{w_{k, j}' + w_{k, 1 - j}'} \alpha - \frac{w_{kj}}{w_{k, j} + w_{k, 1 - j}} \alpha \leq \alpha \leq 1$.
    Also, notice that $\beta$ is equal in both instances as $n$ is unchanged.

    We proceed by a case analysis. If $\phi_{\I}(j) = p_j$, then $\phi_{\I'}(j) \leq \max(p_j, p_j') \leq p_j + 1$. So, there can be no bribery-proofness violation if $\min(p_0, p_1) \geq \beta$.
    Suppose instead that for artist $j$, $p_j < \beta$. Notice that the number of users engaging with artist $j$, $a_j$, is greater than $p_j$. So $\phi_{\I}(j) \geq p_j$.
    Also, note that since exactly one user modifies her engagement between $\I$ and $\I'$, $a_j' \leq a_j + 1$.
    If $\phi_{\I}(j) = a_j$, then $\phi_{\I'}(j) \leq \max(p_j', a_j')$.
    But we have that $p_j' \leq p_j + 1 \leq a_j+1$ and $a_j' \leq a_j+1$. So, $\phi_{\I'}(j) \leq a_j+1 = \phi_{\I}(j) + 1$.
    Suppose instead that $\phi_{\I}(j) = \beta$. Then, $\phi_{\I'}(j) > \phi_{\I}(j)$ implies that $\phi_{\I'}(j) = p_j'$. But $p_j \leq \beta$ and $p_j' \leq p_j+1$, so $\phi_{\I'}(j) \leq \phi_{\I}(j) + 1$. This concludes the proof.

    However, this rule is not fraud-proof. For any $\alpha$, take $n = \ceil*{\frac{40}{\alpha} - 1}$. Then, $\beta = 2\floor*{\frac{n\alpha}{20}} = 2$. Construct an instance with $n$ users and where each user's engagement is $\vec{w}_i = (0.01, 0.99)$. Then, $p_0 = 0.01 \cdot  n\alpha \leq \beta = 2$. So $\phi_{\I}(j) = 2$. Suppose we add an extra user with profile $(0.01, 0.99)$. Then $\beta = 4$ as the number of users is now greater than $\frac{40}{a}$. As $a_0 = n > 4$, $\phi_{\I'}(j) = 4 > \phi_\I(j) + 1 = 3$. This constitutes a fraud-proofness violation.

    This rule is anonymous, neutral and satisfies no-freeridership. Anonymity and neutrality should be immediate. In the case that no user engages with an artist $j$ then $p_j = a_j = 0$ and so $\phi_\I(j) = 0$, satisfying no-freeridership.

\paragraph{(iii)}
    Suppose $\alpha < 1$, then we construct a rule $\phi$ that is fraud-proof but not bribery-proof. For ease of presentation, we add a surrogate rule $\psi$, that is then modified to make $\phi$ satisfy no-freeridership.
    Consider an instance with two candidates.
    Let $n$ the number of users and $\varepsilon$ a small positive constant, such that $ 0 < \varepsilon < 1 - \alpha$.
    Let $k$ the smallest integer such that $k\alpha > 2(1 + \varepsilon)$.
    For $n \leq k$, $\psi(j) = \frac{n\alpha}{2}$.
    For $n > k$, the rule distributes the payoff based on the number of users approving an artist.
    Let $a_j$ the number of users $i$ with $w_{ij} > 0$.
    If $a_0 = a_1$, then $\psi(0) = \psi(1) = \frac{n\alpha}{2}$.
    For $a_j> a_{1-j}$, then $\psi(j) = \frac{n\alpha + 1 + \varepsilon}{2}$ and $\psi(1 - j) = \frac{n\alpha + 1 - \varepsilon}{2}$.

    We construct $\phi$ using $\psi$. If for $j \in \{0, 1\}$, $\psi_\I(j) \leq a_j$ then $\phi_\I(j) = \psi_\I(j)$.
    If $a_j < \psi_\I(j)$ for some $j$ then $\phi_\I(j) = a_j$ and $\phi_\I(1 - j) = n\alpha - a_j$. Notice that since we disallow users with $(0, 0)$ engagement, $a_0 + a_{1} \geq n > n\alpha$. As $\psi_\I(0) + \psi_\I(1) = n\alpha$, $a_j < \psi_\I(j)$ implies that $\psi_\I(1 - j) < a_{1-j}$, so $\phi$ is well defined.
    Equationally, $\phi_\I(j) = \max(\min(\psi_\I(j), a_j), n\alpha - a_{1-j})$

    Now, to show that the rule is fraud-proof. Suppose there was a fraud-proofness violation with $\I = (N, \{0, 1\}, \vec{w}), \I' = (N \cup \{k\}, \{0, 1\}, \vec{w}')$ with $\vec{w}_i = \vec{w}_i'$ for all $i < k$. Let $j$ be the artist benefiting from fraud so $\phi_{\I'}(j) > \phi_{\I}(j) + 1$.

    First we prove that $\psi_{\I'}(j) \leq \psi_\I(j) + 1$.
    If $a_j < a_{1 - j}$ in $\I$ then we cannot have $a_j' > a_{1 - j}$ in $\I'$ as we add exactly one user.
    So, if $a_j < a_{1 - j}$ in $\I$, then $\psi_{\I'}(j) \leq \frac{(n+1)\alpha}{2}$. So, $\psi_{\I'}(j) - \psi_\I(j) \leq \frac{(n+1)a}{2} - \frac{n\alpha - 1 - \varepsilon}{2} = \frac{\alpha + 1 + \varepsilon}{2} < 1$ by our choice of $\varepsilon$. If $a_j \geq a_{1-j}$ then $\psi_\I(j) \geq \frac{n\alpha}{2}$ and $\psi_{\I'}(j) \leq \frac{n\alpha + 1 +\varepsilon}{2}$ and again $\psi_{\I'}(j) \leq \psi_\I(j) + 1$.

    Now, $\phi_\I(j) = \max(\min(\psi_\I(j), a_j), n\alpha - a_{1-j})$ and $\phi_{\I'}(j) = \max(\min(\psi_{\I'}(j), a_j'), (n+1)\alpha - a_{1-j}')$.
    We have proven that $\psi_{\I'}(j) \leq \psi_\I(j) + 1$. As we add one user, $a_j' \leq a_j + 1$. Finally, no agent is removed so $a_{1 - j}' \geq a_{1-j}$, so $(n+1)\alpha - a_{1-j}' \leq n\alpha - a_{1-j} + 1$.
    So, $\phi_{\I'}(j) \leq \max(\min(\psi_\I(j) + 1, a_j+ 1), n\alpha - a_{1-j}+1) = \max(\min(\psi_\I(j), a_j), n\alpha - a_{1-j}) + 1 = \phi_\I(j) + 1$ proving fraud-proofness.

    However, the rule is not bribery-proof. For a concrete example, let $\alpha = \frac{1}{2}$, $\varepsilon = \frac{1}{4}$. Let $\I =([5], \{0,1\}, \vec{w})$ with $\vec{w}_1=\vec{w}_2=\vec{w}_3=(1, 0)$ and $\vec{w}_4 = \vec{w}_5=(0,1)$.
    Then, $\phi_\I(1) = \frac{n\alpha + 1+ \varepsilon}{2}=\frac{5}{8}$.
    But, if we construct $\I'$ by setting $\vec{w}_3=(0, 1)$, then $\phi_{\I'}(1) = \frac{15}{8}$. But, $\phi_{\I'}(1) -\phi_{\I}(1) = \frac{10}{8} > 1$, which violates bribery-proofness.

\subsection{User-addition monotonicity}
As an additional tool, we consider the \emph{user-addition monotonicity} property, which will be frequently used in proving several of our axioms.
Intuitively, it states that adding a user should not decrease an artist's payoff.
This property is considerably strong and implies fraud-proofness and bribery-proofness. With user-addition monotonicity the axiom implications are captured by \cref{fig:relationship}.
\begin{figure}[h]
    \centering
    \begin{tikzpicture}[scale=1]

        \node (FP) at (2,2) {Fraud-proofness};
        \node (BP) at (-2,2) {Bribery-proofness};
        \node (Mon) at (0,1) {User-addition monotonicity};

        \draw[->] (Mon) -- (FP);
        \draw[->] (Mon) -- (BP);
        \draw[dashed, ->] (FP) -- (BP) node[midway, above] {$\alpha = 1$};

        \draw[rounded corners] (-4.1,0.5) rectangle (4.1,2.5);
    \end{tikzpicture}
    \caption{Relationship of axioms, arrows denote implications. The dashed arrow denotes conditional implication. }

    \label{fig:relationship}
\end{figure}
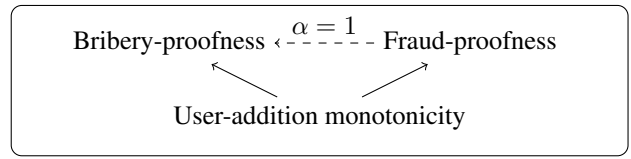

\begin{definition}[User-addition monotonicity]
    For an instance $\I$ and any engagement profile $\vec{w}_{n+1}$ consider instance $\I^{n+1}$ constructed by adding a user with profile $\vec{w}_{n+1}$ to $\I$.
    A rule $\phi$ satisfies \emph{user-addition monotonicity} if for all $\I$, $\vec{w}_{n+1}$ and $\I^{n+1}$ for all artists $c \in C$, $\phi_{\I}(c) \leq \phi_{\I^{n+1}}(c)$.
\end{definition}

\begin{proposition}
    \label{prop:user-addition-monotone-implies-fp-bp}
    If a rule is user-addition monotone, then it is both fraud-proof and bribery-proof.
\end{proposition}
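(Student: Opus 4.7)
The plan is to reduce the two conclusions to their single-user versions via \Cref{prop:singleuserFP} and \Cref{prop:singleuserBP}, and then exploit the budget-balance identity $\phi_{\I}(C) = \alpha n$ (which holds because every rule in the paper distributes exactly the revenue pool $\alpha n$ among artists) together with user-addition monotonicity applied to the complementary set of artists $C\setminus\widehat{C}$.

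\textbf{Single-user fraud-proofness.} Let $\I = (N,C,\vec{w})$ and let $\I' = (N\cup\{k\},C,\vec{w}')$ with $\vec{w}'_i = \vec{w}_i$ for $i\in N$; fix $\widehat{C}\subseteq C$. By user-addition monotonicity applied to every artist in $C\setminus\widehat{C}$, we have $\phi_{\I'}(C\setminus\widehat{C}) \geq \phi_{\I}(C\setminus\widehat{C})$. Combining with budget balance,
\begin{equation*}
\phi_{\I'}(\widehat{C}) - \phi_{\I}(\widehat{C}) = \bigl(\alpha(n+1) - \phi_{\I'}(C\setminus\widehat{C})\bigr) - \bigl(\alpha n - \phi_{\I}(C\setminus\widehat{C})\bigr) \leq \alpha \leq 1,
\end{equation*}
which gives single-user fraud-proofness, and hence fraud-proofness by \Cref{prop:singleuserFP}.

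\textbf{Single-user bribery-proofness.} Now let $\I,\I'$ differ only in the profile of a single user $k$. The key trick is to interpose the instance $\I_0$ obtained by deleting user $k$ entirely: both $\I$ and $\I'$ are single-user extensions of $\I_0$. By user-addition monotonicity applied to the complementary set $C\setminus\widehat{C}$ going from $\I_0$ to $\I$, we get $\phi_{\I}(C\setminus\widehat{C}) \geq \phi_{\I_0}(C\setminus\widehat{C})$, and applied to $\widehat{C}$ going from $\I_0$ to $\I'$ we get $\phi_{\I'}(\widehat{C}) \leq \alpha n - \phi_{\I_0}(C\setminus\widehat{C})$ (using budget balance on $\I'$). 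Hence
\begin{equation*}
\phi_{\I'}(\widehat{C}) - \phi_{\I}(\widehat{C}) \leq \bigl(\alpha n - \phi_{\I_0}(C\setminus\widehat{C})\bigr) - \phi_{\I_0}(\widehat{C}) = \alpha n - \phi_{\I_0}(C) = \alpha \leq 1,
\end{equation*}
so single-user bribery-proofness holds, and \Cref{prop:singleuserBP} yields the full statement.

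\textbf{Main obstacle.} There is no deep obstacle, since user-addition monotonicity is essentially a per-artist strengthening of what fraud/bribery-proofness demand on the aggregate $\widehat{C}$. The only subtlety is bribery-proofness: one must resist the temptation to compare $\I$ and $\I'$ directly (user-addition monotonicity does not apply), and instead route through the instance $\I_0$ with user $k$ removed so that monotonicity can be invoked in each direction separately.
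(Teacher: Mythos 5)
Your proof is correct and takes essentially the same route as the paper: budget balance plus user-addition monotonicity for the single-user fraud case, and interposing the instance $\I_0$ with the affected user deleted for bribery (the paper invokes the just-proved fraud-proofness for the step from $\I_0$ to $\I'$, which is exactly what your complement-plus-budget-balance computation re-derives). One small slip: in the bribery paragraph your verbal attributions are swapped---$\phi_{\I}(\widehat{C}) \geq \phi_{\I_0}(\widehat{C})$ needs monotonicity applied to $\widehat{C}$, while $\phi_{\I'}(\widehat{C}) \leq \alpha n - \phi_{\I_0}(C\setminus\widehat{C})$ needs it applied to $C\setminus\widehat{C}$ together with budget balance on $\I'$---but the displayed inequalities themselves are correct.
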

\begin{proof}
    Consider instances $\I$ and $\I^{n+1}$ for some $\vec{w}_{n+1}$.
    A user adds $\alpha$ to the total payoff: $\phi_{\I^{n+1}}(C) - \phi_\I(C) = \alpha$. So for
    $\Chat \subseteq C$:
    \begin{equation*} \phi_{\I^{n+1}}(\Chat) - \phi_{\I}(\Chat) + \phi_{\I^{n+1}}(C \setminus \Chat) - \phi_\I(C \setminus \Chat) = \alpha
    \end{equation*}
    But, by monotonicity, for $S \subseteq C$ the marginal contribution of user $n+1$ is non-negative: $\phi_{\I^{n+1}}(S) - \phi_{\I}(S) \geq 0$.
    So, $\phi_{\I^{n+1}}(\Chat) - \phi_{\I}(\Chat) \leq \alpha \leq 1$
    and $\phi$ is fraud-proof.

    Now to prove bribery-proofness.
    Consider an instance $\I = (N, C, \vec{w})$. We construct instance $\I^{-n} = (N \setminus \{n\}, C, \vec{w}_{-n})$ with user $n$ removed. Take any instance $\I' = (N,C, \vec{w}')$ with $\vec{w}_{i} = \vec{w}_i'$ for all $i \neq n$.
    Then, for all $\Chat \subseteq C$, $\phi_{\I^{-n}}(\Chat) - \phi_\I(\Chat) \leq 0$ by monotonicity.
    By fraud-proofness, $\phi_{\I'}(\Chat) - \phi_{\I^{-n}}(\Chat) \leq 1$. Adding up, for all $\I$ and $\I'$ with engagement differing for a single user $\phi_{\I'}(\Chat) - \phi_{\I}(\Chat) \leq 1$, proving bribery-proofness.
\end{proof}

\section{Omitted Proofs from \Cref{sec:mechanisms}}

\subsection{Proof of \Cref{thm:rules_globalprop_strategy}}
We will prove each property separately.

\paragraph{\globalprop is not fraud-proof.}
Consider an instance $\I = (N, \{1, 2\}, \vec{w})$ with $\abs{N} > \frac{2}{\alpha} + 1$.
Let $\vec{w}_i = (1, 0)$ for all $i \in N$, so that $\phi_\I(2) = 0$.
Then, constructing an instance $\I'$ by adding a single profile $\vec{w}_{n+1} = (0, n)$, would result in a payoff of $\phi_{\I'}(2) = \frac{n}{2n} (n+1) \alpha > 1$ by assumption on $n$, contradicting fraud-proofness.

\paragraph{\globalprop is not bribery-proof.}
Similarly, for an instance $\I = (N, \{1, 2\}, \vec{w})$ with $\abs{N} > \frac{2}{\alpha} + 1$ and for each $i$, $\vec{w}_i = (1, 0)$ we have that $\phi_\I(2) = 0$.
However, if we construct $\I'$ by bribing user $n$ to change their profile to $\vec{w}_n' = (0, n)$, $\phi_{\I'}(2) = \frac{n}{2n} n\alpha > 1$ by assumption.

\subsection{Proof of \Cref{thm:strong_sybil_proofness_characterisation}}
Suppose $\phi$ is strongly Sybil-proof and neutral.

    Observe first, that if $\phi$ is strongly Sybil-proof, there exists a function $f$ such that:
    \[
        \phi_\I(c) = f\left(\sum_{i \in N} w_{ic},\sum_{i \in N} \sum_{j \in C} w_{ij},N\right)
    \]

    To see this, suppose there are instances $\I = (N, C, \mathbf{w})$ and $\I' = (N, C', \mathbf{w}')$ with
    $\sum_{i \in N} w_{ic} =\sum_{i \in N} w'_{ic}$ and $\sum_{i \in N} \sum_{j \in C'} w_{ij} = \sum_{i \in N} \sum_{j \in C} w'_{ij}$.
    So, $\sum_{i \in N} \sum_{j \neq c} w_{ij} = \sum_{i \in N} \sum_{j \neq c} w'_{ij}$ and
    the criteria for strong Sybil-proofness hold for $C^* = \{c\}$.
    So, $\phi_\I(C \setminus \{c\}) = \phi_{\I'}(C' \setminus \{c\})$. Because the number of users is equal in $\I$ and $\I'$, $\phi_\I(C) = \phi_{\I'}(C') = \abs{N}\alpha$.
    Hence,
        \begin{align*}
        \phi_\I(c) & = \phi_\I(C) - \phi_\I(C \setminus \{c\}) \\
        & = \phi_{\I'}(C') - \phi_{\I'}(C' \setminus \{c\}) \\
        & = \phi_{\I'}(c).
    \end{align*}

    We now claim that $f$ is a linear function of $\sum_{i \in N} w_{ic}$.\footnote{Here, we consider linearity as typically defined in linear algebra, and thus exclude affine functions.}
    To see this, observe that $f \left( \sum_{i \in N} w_{ic}, \sum_{i \in N} \sum_{j \in C} w_{ij},N \right) = \sum_{i \in N} w_{ic} \times g \left( \sum_{i \in N} \sum_{j \in C} w_{ij}, N \right)$.
    Clearly, if $\sum_{i \in N} w_{ic} = 0$, then for all $T$ and $N$, $f(0,T,N) = 0$.
    For any instance $\I = (N, C^* \cup\{c\}, \vec{w})$ with $c, d, e \notin C^*$ and $\beta \in (0, 1)$,  we construct $\I^\beta = (N, C^* \cup \{d, e\}, \vec{w}')$.
    For $j \notin \{d, e\}$, $w_{ij}' = w_{ij}$.
    We let $w_{id}'= \beta w_{ic}$ and $w_{ie}'= (1 - \beta) w_{ic}$.
    So, strong Sybil-proofness applies for $C^*$ and so $\phi_\I(c) = \phi_{\I^\beta}(d) + \phi_{\I^\beta}(e)$.

    But the total engagement of the users and the number of users is equal in $\I$ and $\I^\beta$. So, $f$ is linear on $\sum_{i \in N} w_{ij}$.
    Now, suppose we fix $\sum_{i \in N} \sum_{j \in C} w_{ij} = T$ and $N$.
    By linearity, if $\sum_{i \in N} w_{ic} = 0$ then $ f\left(0, T, N\right) = 0$.
    Conversely, if all artists other than $c$ receive $0$ engagement from all users, user $c$ will receive the entire payoff of $n\alpha$:
    $\phi_\I(c) = f\left(T, T, N\right) = n \alpha$. This determines $f$ uniquely:

    \[
        f\left(\sum_{i \in N} w_{ic},\sum_{i \in N} \sum_{j \in C} w_{ij},N\right) = \frac{\sum_{i \in N} w_{ic}}{\sum_{i \in N} \sum_{j \in C} w_{ij}}\times n \alpha.
    \]
    Which is equivalent to \globalprop{}.

\subsection{Proof of \Cref{thm:rules_globalprop_fairness}}
We will prove each property separately.

\paragraph{\globalprop satisfies no free-ridership.}
    Consider an instance $\I = (N, C, \mathbf{w})$.
    For every $j \in C$ where $\sum_{i \in N} w_{ij} = 0$,
    \begin{equation*}
        \phi_{\I}(j) = \frac{\sum_{i \in N} w_{ij}}{\sum_{j'\in C}\sum_{i \in N} w_{ij'}} \times \alpha n = 0,
    \end{equation*}
    since we assume $\sum_{j' \in C} w_{ij'} > 0$ for all $i \in N$.

\paragraph{\globalprop{} is engagement monotone.}
Consider any two instances $\I = (N, C, \mathbf{w})$ and $\I' = (N, C, \mathbf{w}')$ whereby for some $j^* \in C$, we have that (i) $w_{ij^*} \leq w'_{ij^*}$ for all $i \in N$, and (ii) $w_{ij} \geq w'_{ij}$ for all $i \in N$ and $j \in C \setminus \{j^*\}$.

    Now, since
    \begin{equation*}
        \sum_{i \in N} w'_{ij^*} \geq \sum_{i \in N} w_{ij^*}
        \quad \text{and} \quad
        \sum_{j \in C \setminus \{j^*\}} \sum_{i \in N} w_{ij} \geq \sum_{j \in C \setminus \{j^*\}} \sum_{i \in N} w'_{ij},
    \end{equation*}
    we get that
    \begin{equation*}
        \sum_{i \in N} w'_{ij^*} \cdot \sum_{j \in C \setminus \{j^*\}} \sum_{i \in N} w_{ij}
        \geq
        \sum_{i \in N} w_{ij^*} \cdot \sum_{j \in C \setminus \{j^*\}} \sum_{i \in N} w'_{ij}.
    \end{equation*}
    Adding $\sum_{i \in N} w'_{ij^*} \cdot \sum_{i \in N} w_{ij^*}$ to both sides of the equation, we can factorize the expressions on each side to obtain
    \begin{equation*}
        \sum_{i \in N} w'_{ij^*} \cdot \left( \sum_{j \in C \setminus \{j^*\}} \sum_{i \in N} w_{ij} + \sum_{i \in N} w_{ij^*} \right) \geq
        \sum_{i \in N} w_{ij^*} \cdot \left( \sum_{j \in C \setminus \{j^*\}} \sum_{i \in N} w'_{ij} + \sum_{i \in N} w'_{ij^*} \right).
    \end{equation*}
    Algebraic manipulation (note that by our model assumption, for each $i \in N$, $\sum_{j' \in C} w_{ij'} > 0$ and $\sum_{j' \in C} w'_{ij'} > 0$) gives us
    \begin{equation*}
        \frac{\sum_{i \in N} w_{ij^*}}{\sum_{j \in C \setminus \{j^*\}} \sum_{i \in N} w_{ij} + \sum_{i \in N} w_{ij^*}}
        \leq
        \frac{\sum_{i \in N} w'_{ij^*}}{\sum_{j \in C \setminus \{j^*\}} \sum_{i \in N} w'_{ij} + \sum_{i \in N} w'_{ij^*}},
    \end{equation*}
    which simplifies to
    \begin{equation*}
        \frac{\sum_{i \in N} w_{ij^*}}{\sum_{j' \in C} \sum_{i \in N} w_{ij'}}
        \leq
        \frac{\sum_{i \in N} w'_{ij^*}}{\sum_{j' \in C} \sum_{i \in N} w'_{ij'}}.
    \end{equation*}
    Consequently, we have that
    \begin{equation*}
        \phi_\I(j^*) = \frac{\sum_{i \in N} w_{ij^*}}{ \sum_{j' \in C} \sum_{i \in N} w_{ij'}} \times \alpha n
        \leq
        \frac{\sum_{i \in N} w'_{ij^*}}{ \sum_{j' \in C} \sum_{i \in N} w'_{ij'}} \times \alpha n = \phi_{\I'}(j^*). \qedhere
    \end{equation*}

    \paragraph{\globalprop{} is Pigou-Dalton consistent.}
    Consider any two instances $\I = (N, C, \mathbf{w})$ and $\I' = (N, C, \mathbf{w}')$ where there exists some $i,i' \in N$ and $j \in C$ such that
    \begin{enumerate}[(i)]
        \item $w'_{ij} = w_{ij} - \delta$ (where $\delta > 0$ and $w_{ij} - \delta > 0$);
        \item $w'_{i'j} = w_{i'j} + \delta$ and $w_{i'j} \leq w_{ij}$; and
        \item $w_{kj'} = w'_{kj'}$ for all $k \in N$ and $j' \in C \setminus \{j\}$, and $w_{kj} = w'_{kj}$ for all $k \in N \setminus \{i,i'\}$.
    \end{enumerate}
    Then, we get that
    \begin{align*}
        \phi_\I(j) & = \frac{\sum_{k \in N} w_{kj}}{\sum_{j' \in C} \sum_{k \in N} w_{kj'}} \times \alpha n\\
        & = \frac{w_{ij} + w_{i'j} + \sum_{k \in N \setminus \{i,i'\}} w_{kj}}{w_{ij} + w_{i'j} + \left(\sum_{j' \in C} \sum_{k \in N} w_{kj'} - w_{ij} - w_{i'j} \right)} \times \alpha n \\
        & = \frac{w'_{ij} + \delta + w'_{i'j} - \delta + \sum_{k \in N \setminus \{i,i'\}} w'_{kj}}{w'_{ij} + \delta + w'_{i'j} - \delta + \left(\sum_{j' \in C} \sum_{k \in N} w'_{kj'} - w'_{ij} - \delta - w'_{i'j} + \delta \right)} \times \alpha n \quad \text{(using (i), (ii), and (iii))}\\
        & = \frac{\sum_{k \in N} w'_{kj}}{\sum_{j' \in C} \sum_{k \in N} w'_{kj'}} \times \alpha n \\
        & = \phi_{\I'}(j),
    \end{align*}
    as desired.

\subsection{Proof of \Cref{thm: NP-c}}

We reduce from the Small Set Bipartite Vertex Expansion (SSBVE) problem. The SSBVE problem is known to be NP-complete and cannot be approximated better than $O(|V|^{1/4})$, where $V$ is the set of vertices, under plausible complexity conjectures \cite{chlamtavc2017minimizing}. We first define the neighborhood of a set of vertices in a graph and then formally define the decision variant of SSBVE.

\begin{definition}[Neighborhood]
    For a graph $G = (V, E)$ and a subset of vertices $S \subseteq V$, the neighborhood of $S$ is defined as $N(S) = \{ v \mid \exists (u,v) \in E \land u \in S\}$. Slightly overloading notation, let $N(v) = N(\{v\})$ for $v \in V$.
\end{definition}
\begin{definition}[Small Set Bipartite Vertex Expansion (SSBVE)]
    Given a bipartite graph $(U,V,E)$ and integers $\ell \leq |U|$ and $\delta \le |V|$, is there an $S \subseteq U$ with $|S| \geq \ell$ and $|N(S)| \leq \delta$?
\end{definition}

We are given an arbitrary instance of SSBVE: $(U, V, E, \ell, \delta)$. Let $U = \{u_1, \dots, u_{|U|})$ and $V = \{v_1, \dots v_{|V|}\}$. Let $d = \max_{u \in U} |N(u)|$ be the maximum number of neighbors, i.e., degree, of any vertex in $U$.

For our reduction, we will construct an instance $\I = (N, C, \vec{w})$ with $|N| = t + |U|$ users and $|C| = t + |V| + 1$ artists, where the value of $t$ is specified later, and with $\vec{w}$ defined as follows
\begin{equation*}
    w_{ij} = \begin{cases}
       \alpha d,  &\text{ if $i \in [t]$ and $j = i$}, \\
       1,  &\text{ if $i - t \in [|U|]$, $j - t \in [|V|]$ and $(u_{i - t} , v_{j-t}) \in E$}, \\
       d + 1 - |N(u_{i-t})|,  &\text{ if $i - t \in [|U|]$ and $j = t + |V| + 1$}, \\
       0, & \text{ otherwise}.
    \end{cases}
\end{equation*}
Intuitively, the first $t$ users and $t$ artists are dummies, where user $i$ listens to only artist $j = i$, and does so $\alpha d$ times,\footnote{In this proof, we allow the $w_{ij}$ values to be non-integers. As long as these are rational numbers, e.g., if $\alpha$ is a rational number, which is a reasonable assumption in practice, we could scale the weights to make everything integral.} but does not listen to any other artists.
The next $|U|$ users and the next $|V|$ artists correspond to the nodes in $U$ and $V$, respectively.
The final artist, artist $t + |V| + 1$, ensures that the total listening activity of each user $i \in N \setminus [t]$ is $d+1$, i.e., for all $i \in [t+1, \dots, t+|U|], \sum_{j \in C} w_{ij}  =  d+1$.

Note that each user streams at least $\alpha d$ times, i.e., $\forall i \in N, \sum_{j \in C} w_{ij}  \geq  \alpha d$. Thus, the pay-per-stream (\textsc{PPS}) of \globalprop{} for the instance will be $\textsc{PPS}(\I) \leq \frac{\alpha}{\alpha d} = \frac{1}{d}$. Note that the pay-per-stream remains bounded above by $\frac{1}{d}$ even if we remove some users from instance $\I$ as we maintain the property that each user streams at least $\alpha d$ times.

Next, we show that for all $\epsilon > 0$, as long as $t \geq \frac{(d+1)|U|}{\alpha d \epsilon}$, the pay-per-stream of $\I$ is at least $\frac{1}{d} - \epsilon$.
\begin{lemma}
    If $t \geq \frac{(d+1)|U|}{\alpha d \epsilon}$, then $\textsc{PPS}(\I) \geq \frac{1}{d} - \epsilon$.
\end{lemma}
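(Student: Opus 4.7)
The plan is to compute $\textsc{PPS}(\I)$ in closed form and verify that the hypothesis on $t$ closes the gap to $\tfrac{1}{d}$ to within $\epsilon$. First, I would add up the engagement weights: the $t$ dummy users each stream $\alpha d$ times, and each of the $|U|$ real users streams exactly $d+1$ times (their edges to $V$ contribute $|N(u_{i-t})|$ streams and the padding artist $t+|V|+1$ absorbs the remaining $d+1-|N(u_{i-t})|$, which is non-negative by the definition of $d$). This gives $\sum_{i}\sum_{j} w_{ij} = t\alpha d + (d+1)|U|$. Since \globalprop{} awards every artist the same per-stream amount $\alpha n / \sum_{i}\sum_{j} w_{ij}$ with $n = t+|U|$, we obtain
\[
  \textsc{PPS}(\I) \;=\; \frac{\alpha\,(t+|U|)}{t\alpha d + (d+1)|U|}.
\]

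The bulk of the work is then bounding this from below by $\tfrac{1}{d} - \epsilon$. A direct simplification yields
\[
  \frac{1}{d} - \textsc{PPS}(\I) \;=\; \frac{(d+1-\alpha d)\,|U|}{d\bigl(t\alpha d + (d+1)|U|\bigr)}.
\]
Using $\alpha \in (0,1]$ to bound the numerator by $(d+1)|U|$, and dropping the non-negative $(d+1)|U|$ summand from the denominator (which leaves $\geq t\alpha d^{2}$), I get the clean estimate $\tfrac{1}{d} - \textsc{PPS}(\I) \leq \frac{(d+1)|U|}{t\alpha d^{2}}$. Substituting the hypothesis $t \geq \frac{(d+1)|U|}{\alpha d \epsilon}$---which, since $d \geq 1$, is strictly stronger than the bound $\frac{(d+1)|U|}{\alpha d^{2} \epsilon}$ actually required---forces the right-hand side below $\epsilon$, finishing the argument.

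There is no real conceptual obstacle: the whole proof is mechanical once the total engagement has been tallied. The only mild worry is that the two crude inequalities used at the end do not eat more than the slack factor of $d \geq 1$ built into the stated hypothesis on $t$; this is the one spot worth double-checking, but the slack is exactly a factor of $d$, so the hypothesis as stated is sufficient.
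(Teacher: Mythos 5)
Your proof is correct and follows essentially the same route as the paper: compute the total engagement $t\alpha d + (d+1)|U|$, write $\textsc{PPS}(\I) = \frac{\alpha(t+|U|)}{t\alpha d + (d+1)|U|}$, bound $\frac{1}{d}-\textsc{PPS}(\I)$ by dropping terms, and invoke the hypothesis on $t$. The only difference is cosmetic—you retain the factor $d^2$ in the denominator (giving the slightly tighter bound $\frac{(d+1)|U|}{t\alpha d^{2}} \leq \epsilon/d$), whereas the paper relaxes to $\frac{(d+1)|U|}{t\alpha d}$; both close the argument under the stated assumption on $t$.
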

\begin{proof}
    The pay-per-stream in $\I$ is $\textsc{PPS}(\I) = \frac{\alpha(t+|U|)}{\sum_{i \in N}\sum_{j \in C} w_{ij}} = \frac{\alpha(t+|U|)}{t\alpha d + |U| (d+1)}$. Thus,
    \begin{align*}
      \frac{1}{d} - \textsc{PPS}(\I) &=  \frac{1}{d} -\frac{\alpha(t+|U|)}{t\alpha d + |U| (d+1)} \\
      &= \frac{|U|(d+1)d  - \alpha |U| d }{t \alpha d^2 + d (d+1) |U|} \\
      &< \frac{|U|(d+1)}{t \alpha d }, \qquad\text{as $\alpha , d , |U|$ are all positive.}
    \end{align*}
If $t \geq \frac{(d+1)|U|}{\alpha d \epsilon}$, then $  \frac{1}{d} - \textsc{PPS}(\I) < \epsilon$, and thus  $\textsc{PPS}(\I) \geq \frac{1}{d} - \epsilon$, as required.
\end{proof}

We note that if $\epsilon < \frac{1}{d |U| ( d(\delta+1) + 1)}$,
then $\frac{\ell - 1}{d} = \frac{\ell}{d} - \frac{1}{d} < \frac{\ell}{d} - \epsilon |U| (d (\delta +1) + 1)$.
Furthermore, at this value of $\epsilon$, as $d \leq |V|$ and $\delta \leq |V|$, we have $t = \mathcal{O}(d^2 \delta |U|^2) = (|V|^3 |U|^2)$, and this reduction can be done in polynomial time.

We now prove that there is a $C' \subseteq C$ such that $|C'| \leq k = \delta + 1$ and $\textsc{PSP}(C') \ge \gamma = \frac{\ell - 1}{d}$ if and only if there is an $S \subseteq U$ with $|S| \geq \ell$ and $|N(S)| \leq \delta$.

Let $C' \subseteq C$ be the subset of artists that maximizes $\textsc{PSP}(C')$ among all subsets of size at most $\delta + 1$, i.e., $C' = \argmax_{\widehat{C} \in C, |\widehat{C}| \le \delta+1} \textsc{PSP}(\widehat{C}) $. Further, $\textsc{PSP}(C')$ is maximized using some subset of users as defined in \Cref{def:psp}; let $N' \subseteq N$ be the smallest among those subsets, i.e.,
\begin{align*}
    \mathcal{N} &= \argmax_{\widehat{N} \subseteq N} \left( \frac{\sum_{i \in N} \sum_{j \in C'} w_{ij}}{\sum_{i \in N} \sum_{j \in C} w_{ij}}  \alpha (t + |U|) - \frac{\sum_{i \in N\setminus\{ \widehat{N} \}} \sum_{j \in C'} w_{ij}}{\sum_{i \in N\setminus\{ \widehat{N} \}} \sum_{j\in C} w_{ij}}  \alpha (t + |U| - |\widehat{N}|)  - |\widehat{N}| \right), \\
    N' &= \argmin_{\widehat{N} \in \mathcal{N}} |\widehat{N}|.
\end{align*}

Next, we show that $N'$ does not contain any of the first $t$ users.
\begin{lemma}
    $[t] \cap N' = \emptyset$.
\end{lemma}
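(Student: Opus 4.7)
The plan is to argue by contradiction, exploiting the minimality of $N'$. Suppose for contradiction that some dummy user $u \in [t] \cap N'$. Since the first term in the $\textsc{PSP}$ expression does not depend on $V$, comparing the values attained by $V = N'$ and $V = N' \setminus \{u\}$ reduces to examining the marginal contribution $\Delta_u := g(N' \setminus \{u\}) - g(N')$, where $g(V) := \frac{\sum_{i \in N \setminus V} \sum_{j \in C'} w_{ij}}{\sum_{i \in N \setminus V} \sum_{j \in C} w_{ij}} \, \alpha (n - |V|)$ is the payment that $C'$ receives when users $V$ are removed. A direct calculation gives $\textsc{PSP}(N' \setminus \{u\}) - \textsc{PSP}(N') = 1 - \Delta_u$, so it suffices to show $\Delta_u \leq \alpha \leq 1$: this produces either a strictly larger $\textsc{PSP}$ value (contradicting optimality of $N'$) or an equally good smaller witness (contradicting minimality of $N'$).

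For the key bound on $\Delta_u$, let $n_1 = |N \setminus N'|$, $B_1 = \sum_{i \in N \setminus N'} \sum_j w_{ij}$, and $A_1 = \sum_{i \in N \setminus N'} \sum_{j \in C'} w_{ij}$, with $A_2, B_2$ denoting the analogous quantities after adding $u$ back. Because $u$ is a dummy, $B_2 = B_1 + \alpha d$, and $A_2 - A_1 \in \{0, \alpha d\}$ depending on whether the dummy artist $u$ lies in $C'$. Writing the instance-wide pay-per-stream as $\rho_1 = \alpha n_1 / B_1$ and $\rho_2 = \alpha(n_1 + 1)/(B_1 + \alpha d)$, I would decompose $\Delta_u = \rho_2 A_2 - \rho_1 A_1 = A_1(\rho_2 - \rho_1) + \rho_2 (A_2 - A_1)$. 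The crucial feature of the construction is that every user has engagement at least $\alpha d$ (so $B_1 \geq n_1 \alpha d$), which gives $\rho_2 \leq 1/d$ and, after a short manipulation, the identity $\frac{B_1 - n_1 \alpha d}{B_1 + \alpha d} = 1 - \rho_2 d$. Combined with $A_1 \leq B_1$ and $A_2 - A_1 \leq \alpha d$, these imply $A_1(\rho_2 - \rho_1) \leq \alpha(1 - \rho_2 d)$ and $\rho_2(A_2 - A_1) \leq \rho_2 \alpha d$; summing cancels the $\rho_2 d$ contributions and yields $\Delta_u \leq \alpha$.

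The main obstacle is the tight algebraic bookkeeping in the second step: bounding the two summands individually would give only $\Delta_u \leq 2\alpha$, which is insufficient. The cancellation between the \emph{redistribution term} $A_1(\rho_2 - \rho_1)$ and the \emph{direct-contribution term} $\rho_2(A_2 - A_1)$ is essential, and it relies on the dummy user $u$ having engagement exactly $\alpha d$---the same minimum-engagement quantity that caps the instance-wide pay-per-stream at $1/d$. With $\Delta_u \leq \alpha$ in hand, the dichotomy between the strict case ($\alpha < 1$ or $\Delta_u < 1$, contradicting optimality) and the tight case ($\alpha = 1$ and $\Delta_u = 1$, contradicting minimality) immediately completes the contradiction.
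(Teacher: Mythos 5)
Your proof is correct, and the key step is carried out by a genuinely different route than the paper's. Both arguments reduce the lemma to the same quantity---the marginal payment to $C'$ from adding the dummy user back, which you call $\Delta_u$ and bound by $\alpha$---but the paper establishes this bound by constructing an auxiliary instance in which the dummy user's engagement is inflated until it equals the average engagement of the remaining users, observing that such a user controls exactly a $\frac{1}{|N|-|N'|+1}$ fraction of the \globalprop{} pool (hence at most $\alpha$), and then sandwiching the true instance via engagement monotonicity of \globalprop{}. You instead compute $\Delta_u = A_1(\rho_2-\rho_1) + \rho_2(A_2-A_1)$ directly and exploit the exact identity $\frac{B_1 - n_1\alpha d}{B_1+\alpha d} = 1-\rho_2 d$ (valid because the dummy's total engagement is exactly $\alpha d$, the instance-wide minimum), so the redistribution and direct-contribution terms cancel to give $\Delta_u \le \alpha$ with no auxiliary instance and no appeal to monotonicity. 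Your version is more self-contained, and it is also slightly more careful at the boundary: the paper concludes with ``$\le \alpha < 1$'', which silently assumes $\alpha<1$, whereas your dichotomy (strict improvement contradicting optimality of $N'$ when $1-\Delta_u>0$, versus an equally good strictly smaller witness contradicting the minimality of $N'$ when $\alpha=1$ and $\Delta_u=1$) covers $\alpha=1$ as well; the price is the tight bookkeeping you flag, which the paper's monotonicity trick avoids. Two cosmetic points: you write $\textsc{PSP}(N'\setminus\{u\})-\textsc{PSP}(N')$ for what is really the difference of the objective inside the $\max$ at two choices of $V$ (\textsc{PSP} is defined on artist sets), and the degenerate case $N'=N$ makes $\rho_1$ formally undefined---there $A_1=B_1=0$ and the bound still goes through (a corner the paper's write-up also leaves implicit), but it is worth a clause.
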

\begin{proof}
For the purpose of contradiction, let $[t] \cap N' \neq \emptyset$. Let us pick an $i' \in [t] \cap N'$.
Consider the three instances $\I_1$, $\I_2$, and $\I_3$ defined as follows:
\begin{itemize}
    \item $\I_1$ removes all users in $N'$ from $\I$.
    \item $\I_2$ removes all users in $N' \setminus \{i'\}$ from $\I$.
    \item $\I_3$ is constructed as follows: In the instance $\I_2$, for some $j \in C'$, increase $w_{i' j}$ until $\sum_{j \in C} w_{i'j} = \frac{\sum_{i \in N\setminus\{N'\}} \sum_{j \in C} w_{ij'}}{|N| - |N'|}$ (note that $\sum_{j \in C} w_{i'j}$ was originally $\alpha d$ because $i' \in [t]$, which is the minimum possible total engagement for any user, so we are in fact increasing $w_{ij'}$).
\end{itemize}
Notice that $\I_1$, $\I_2$, and $\I_3$ differ only with respect to user $i'$, where $\I_1$ does not contain $i'$, $\I_2$ contains $i'$ with its original engagement vector, while $\I_3$ contains $i'$ with an increased engagement for artist $j \in C'$ to ensure that the total engagement of user $i'$, and therefore, the average total engagement per user of $\I_3$ matches that of $\I_1$. As the engagement of $i'$ is exactly equal to the average engagement of users in $\I_3$ and $\I_1$, so $i'$ controls exactly $\frac{1}{|N| - |N'| + 1}$ fraction of the \globalprop{} allocation of $\I_3$. Therefore,
\[
    \phi_{I_3}(C') - \phi_{I_1}(C') \le \frac{1}{|N| - |N'| + 1}  \alpha (|N| - |N'| + 1) \le \alpha.
\]
Furthermore, as \globalprop{} is engagement monotone (\Cref{thm:rules_globalprop_fairness}), we have $\phi_{\I_3}(C') \geq \phi_{\I_2}(C')$. So,
\[
    \phi_{I_2}(C') - \phi_{I_1}(C') \leq \phi_{I_3}(C') - \phi_{I_1}(C') \le \alpha < 1.
\]
As the difference in the total payment to the artists in $C'$ from instances $I_1$ and $I_2$ is less than $1$, so the marginal profit of $N'$ is less than $N' \setminus \{ i \}$, which is a contradiction.
\end{proof}

Next, we show that $N'$ does not contain any user $i \in N \setminus [t]$, if $\sum_{j \in C'} w_{ij} \leq d$.

\begin{lemma}
    If $i \in [t+1, \dots, t+|U|] \cap N'$, then $\sum_{j \in C'} w_{ij} = d + 1$.
\end{lemma}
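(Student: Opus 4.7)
The plan is to argue by contradiction. Suppose $i \in [t+1,\dots,t+|U|] \cap N'$ has $b := \sum_{j \in C'} w_{ij} \le d$. I will show that $N'' := N' \setminus \{i\}$ achieves a strictly larger value of $\textsc{PSP}(C')$ than $N'$, contradicting the choice of $N'$ as a minimum-cardinality member of the PSP-maximizing set $\mathcal{N}$.

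The first term in the definition of $\textsc{PSP}$ does not depend on $V$, so, writing $\phi_{\I \setminus V}(C')$ for the \globalprop{} payoff to $C'$ after deleting the users in $V$,
\[
\textsc{PSP}(N'') - \textsc{PSP}(N') \;=\; 1 - \Delta, \qquad \Delta \;:=\; \phi_{\I \setminus N''}(C') - \phi_{\I \setminus N'}(C').
\]
Intuitively, $\Delta$ is the marginal \globalprop{} payoff to $C'$ that user $i$ contributes when re-inserted into the counterfactual instance, while the ``$1$'' records the saved subscription cost from shrinking $V$. The task therefore reduces to establishing $\Delta < 1$.

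For the key bound I would use a pay-per-stream argument. Let $k := |N|-|N'|$, $a := \sum_{j \in C} w_{ij} = d+1$, $p := \sum_{i' \in N \setminus N'} \sum_{j \in C'} w_{i'j}$, and $q := \sum_{i' \in N \setminus N'} \sum_{j \in C} w_{i'j}$; write $r_{\mathrm{bef}} := \alpha k/q$ and $r_{\mathrm{aft}} := \alpha(k+1)/(q+a)$ for the pay-per-stream rates of $\I \setminus N'$ and $\I \setminus N''$, respectively. A direct expansion yields
\[
\Delta \;=\; b\,r_{\mathrm{aft}} \;+\; p\,\bigl(r_{\mathrm{aft}} - r_{\mathrm{bef}}\bigr).
\]
Two structural facts, both baked into the specific weights of the reduction, close the argument. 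First, every user in $N \setminus N''$ has total engagement at least $\alpha d$ while user $i$ has $a = d+1 > \alpha d$, so $q + a > (k+1)\alpha d$ strictly, giving $r_{\mathrm{aft}} < 1/d$. Second, by the previous lemma all $t \ge 1$ dummy users lie in $N \setminus N'$ and each has engagement $\alpha d < a$, so the average engagement satisfies $q/k < a$; this rearranges to $r_{\mathrm{aft}} \le r_{\mathrm{bef}}$, making the $p$-term non-positive. Combining, $\Delta \le b\,r_{\mathrm{aft}} < b/d \le 1$, where the final inequality uses $b \le d$.

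The delicate point is squeezing out the \emph{strict} inequality $\Delta < 1$ rather than merely $\Delta \le 1$, since only strictness yields $\textsc{PSP}(N'') > \textsc{PSP}(N')$ and thus contradicts maximality. Strictness comes exactly from $r_{\mathrm{aft}} < 1/d$, which in turn rests on the ``$d+1 - |N(u_{i-t})|$'' engagement to the junk artist $t+|V|+1$: this construction forces each real user's total engagement to be $d+1 > \alpha d$, ensuring that adding user $i$ to the counterfactual strictly dilutes its pay-per-stream below $1/d$ and precludes the degenerate boundary case.
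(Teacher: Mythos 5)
Your proof is correct and follows essentially the same route as the paper's: remove $i$ from $N'$, decompose the marginal \globalprop{} payoff to $C'$ via the two pay-per-stream rates, and use that every remaining user streams at least $\alpha d$ (so the post-insertion rate is at most $1/d$) and that $i$'s total engagement $d+1$ is maximal (so re-inserting $i$ cannot raise the rate), yielding the bound $b/d \le 1$. The only cosmetic difference is that you push for the strict bound $\Delta < 1$ so as to contradict maximality, whereas the paper settles for $\Delta \le 1$ and instead contradicts the minimality of $|N'|$ among the maximizers in $\mathcal{N}$ --- both are valid given how $N'$ is defined.
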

\begin{proof}
For the purpose of contradiction, let there be an $i' \in (N \setminus [t]) \cap N'$ such that $\sum_{j \in C'} w_{i'j} \leq d$. Let us consider the two instances $\I_1$ and $\I_2$ defined as: $\I_1$ removes all users in $N'$ from $\I$, and $\I_2$ removes all uses in $N' \setminus \{i'\}$ from $\I$.
As user $i'$ streams $d+1$ times, which is the maximum possible, we have $\textsc{PPS}(\I_2) \le \textsc{PPS}(\I_1)$. Further, as each user streams at least $\alpha d$ times, we have $\textsc{PPS}(\I_2) \le \frac{\alpha}{\alpha d} = \frac{1}{d}$. Using assumption $\sum_{j \in C'} w_{i'j} \leq d$, we have
\begin{align*}
    \phi_{I_2}(C') &- \phi_{I_1}(C') \\
    &= \textsc{PPS}(\I_2)  \left(\sum_{j \in C'} w_{i'j} + \sum_{i \in N\setminus\{N'\}} \sum_{j \in C'} w_{ij}\right) - \textsc{PPS}(\I_1)  \left(\sum_{i \in N\setminus\{N'\}} \sum_{j \in C'} w_{ij}\right) \\
    &\leq \textsc{PPS}(\I_2)  \left(d + \sum_{i \in N\setminus\{N'\}} \sum_{j \in C'} w_{ij}\right) - \textsc{PPS}(\I_1)  \left(\sum_{i \in N\setminus\{N'\}} \sum_{j \in C'} w_{ij}\right)\\
    & \leq \textsc{PPS}(\I_2)   d, &\text{as $\textsc{PPS}(\I_2) \le \textsc{PPS}(\I_1)$,} \\
    &\leq 1, &\text{as $\textsc{PPS}(\I_2) < \frac{1}{d}$.}
\end{align*}
As the difference in the total payment to the artists in $C'$ from instances $I_1$ and $I_2$ is at most $1$, so the marginal profit of $N' \setminus \{ i \}$ is at least as good as $N'$, which contradicts the minimality of $N'$.
\end{proof}

The above two lemmas prove that $N'$ consists only of users $i \in [t+1, \dots, t+|U|]$ satisfying $\sum_{j \in C'} w_{ij} = d + 1$.
Let $\I_1$ be the the instance that removes all users in $N'$ from $\I$. Note that all users in $\I$ either stream $d+1$ times or stream $\alpha d$ times. As the removed set of users $N'$ contains only users who stream $d+1$ times, so $\textsc{PPS}(I) < \textsc{PPS}(I_1)$.

Let $L = \sum_{i \in N\setminus N'}\sum_{j \in C'} w_{ij}$. All artist in $[t]$ are streamed $\alpha d$ times, all artists in $[t+1, \dots, t+|V|]$ are streamed at most $|U|$ times, and the artist $N + |V| + 1$ is streamed at most $d |U|$ times. Thus,
\[
    L = \sum_{i \in N\setminus N'}\sum_{j \in C'} w_{ij} \leq |C'| \max_{j \in C'} \sum_{i \in N\setminus N'} w_{ij} \le |C'| d |U| \le d |U| (\delta + 1).
\]

If $|N'| < \ell$, then
\begin{align*}
    \textsc{PSP}(U) &= \textsc{PPS}(\I)  (L + (d+1)|N'| ) - \textsc{PPS}(\I_1)  L - |N'|\\
           &< \textsc{PPS}(\I)  (d+1) |N'| -|N'|, &\text{ as $\textsc{PPS}(\I) < \textsc{PPS}(\I_1)$,}\\
           & \leq \frac{d+1}{d}|N'| - |N'|, &\text{ as $\textsc{PPS}(\I) \leq \frac{1}{d}$,} \\
           & \le \frac{\ell - 1}{d}, &\text{ as $|N'| < \ell$.}
\end{align*}

If $|N'| \geq \ell$, then
\begin{align*}
    \textsc{PSP}(U) &= \textsc{PPS}(\I)  (L + (d+1)|N'|) - \textsc{PPS}(\I_1)  L - |N'|\\
            &=\textsc{PPS}(\I)  (d+1)|N'| - |N'| - (\textsc{PPS}(\I_1) - \textsc{PPS}(\I)) L\\
            &\geq \left( \frac{1}{d} - \epsilon \right) (d+1)|N'| - |N'| - \epsilon L, & \text{ as $\frac{1}{d} - \epsilon \le \textsc{PPS}(\I) \le \frac{1}{d}$ and $\textsc{PPS}(\I_1) \leq \frac{1}{d}$,} \\
            &\geq \frac{|N'|}{d} - \epsilon(L + |N'|) \\
            &\geq \frac{\ell}{d} - \epsilon |U| (d (\delta +1) + 1), &\text{as $|N'| \leq \ell$ and $|N'| \leq |U|$,}\\
            &\ge \frac{\ell - 1}{d}, &\text{by our choice of $\epsilon$.}
\end{align*}

Thus, we have shown that there is a $C' \subseteq C$ such that $|C'| \leq \delta + 1 = k$ and $\textsc{PSP}(C') \ge \frac{\ell - 1}{d} = \gamma$
if and only if there are users $N' \subseteq [t+1, \dots, t+|U|]$ such that $|N'| \geq \ell$ and $\sum_{j \in C'} w_{ij} = d + 1$ for all $i \in N'$.

We claim that the final artist $t + |V| + 1$ is in $C'$.
Notice that the streams of the users in $N \setminus [t] = [t+1, \dots, t+|U|]$ for the artists in $[t+1, \dots, t+|V|]$ have one-to-one correspondence with the edges of the graph, by construction. Therefore, for any user $i \in N \setminus [t]$, the total streams for the artists in $[t+1, \dots, t+|V|]$ is at most the maximum degree $d$ of the graph, i.e., $\sum_{j \in [t+1, \dots, t+|V|]} w_{ij} \le d$ for all $i \in N \setminus [t]$, which implies that $\sum_{j \in [t+1, \dots, t+|V|]} w_{ij} \le d$ for all $i \in N'$ because $N' \subseteq N \setminus [t]$. Further, users in $N \setminus [t]$ do not listen to the first $t$ artists. Therefore, as $\sum_{j \in C'} w_{ij} = d + 1$ for all $i \in N'$, we must have the final artist $t + |V| + 1 \in C'$. This also implies that $|C' \cap [t+1, \dots, t+|V|]| \le |C'|-1 \le \delta$.

Let $S \subseteq U$ be the set that corresponds to $N'$. It is clear that $N(S)$ is a subset of the nodes in $V$ that correspond to $C'$. We note that $|S| = |N'| \geq \ell$ and $|N(S)| = |C' \cap [t+1, \dots, t+|V|]| \leq \delta$. Thus, there is a straightforward bijection between the sets $(N', C')$ such that  $|N'| \geq \ell$ and $|C'| \leq \delta + 1$ and the sets $(S, N(S))$ such that $|S| \geq \ell$ and $|N(S)| \leq \delta$.

\subsection{Proof of \cref{thm:user-additive-implies-monotone-fp-bp}}

A user-additive rule is user-addition monotone.
Let $\I$ be any instance, $\I^{n+1}$ an instance with the addition of a user $n+1$ and arbitrary engagement profile, and $\I_{n+1}$ the instance containing only user $n+1$.
As $\phi$ is user-additive, $\phi_{\I^{n+1}}(c) - \phi_{\I}(c) = \phi_{\I_{n+1}}(c) \geq 0$. By \cref{prop:user-addition-monotone-implies-fp-bp}, it is also fraud-proof and bribery-proof.

\subsection{Proof of \Cref{thm:rules_userprop_strategy}}
We will prove each property separately. Note that the fact that \userprop{} fails strong Sybil-proofness follows from \Cref{thm:strong_sybil_proofness_characterisation}.

We first show that \userprop is user-additive, which will be useful in proving it is also fraud-proof and bribery-proof.
\paragraph{\userprop{} is user-additive.}

This follows immediately from the definition. For any instance $\I = (N, C, \vec{w})$, let $\I^{n+1}$ an instance with a profile $\vec{w}_{n+1}$ appended to $\I$.
Then, for all artists $j$, $\phi_{\I^{n+1}}(j) - \phi_\I(j) = \alpha \frac{w_{n+1, j}}{\sum_{k \in C}w_{n+1, j}}$ which is exactly the payoff of user $j$ in a single user instance with only user $n+1$.

\paragraph{\userprop{} is user-addition monotone, fraud-proof and bribery-proof.}
This claim is just an application of \cref{thm:user-additive-implies-monotone-fp-bp}.

\paragraph{\userprop is Sybil-proof.}
Consider any two instances $\I = (N, C, \mathbf{w})$ and $\I' = (N, C', \mathbf{w}')$ such that $C \subseteq C'$.
Suppose for any subset of artists $C^* \subseteq C$,
\begin{enumerate}[(i)]
    \item $w_{ij} = w'_{ij}$ for all $i \in N, j \in C^*$, and
    \item $\sum_{j \in C \setminus C^*} w_{ij} = \sum_{j \in C' \setminus C^*} w'_{ij}$ for all $i \in N$,
\end{enumerate}
Then, we get that
    \begin{align*}
        \phi_\I(C\setminus C^*)
        & = \sum_{j\in C\setminus C^*} \sum_{i \in N} \frac{w_{ij}}{\sum_{j' \in C} w_{ij'}} \times \alpha \\
        & =  \sum_{i \in N} \frac{\sum_{j\in C\setminus C^*} w_{ij}}{\sum_{j' \in C} w_{ij'}} \times \alpha \\
        & = \sum_{i \in N} \frac{\sum_{j \in C' \setminus C^*} w'_{ij}}{\sum_{j' \in C} w'_{ij'}} \times \alpha \quad \text{(by (i) and (ii))}\\
        & = \sum_{j \in C' \setminus C^*} \sum_{i \in N} \frac{ w'_{ij}}{\sum_{j' \in C} w_{ij'}} \times \alpha\\
        & = \phi_{\I'}(C' \setminus C^*).
    \end{align*}

\paragraph{\userprop{} fails strong Sybil-proofness}
By \cref{thm:strong_sybil_proofness_characterisation}, only \globalprop{} is strongly Sybil-proof.
Hence, \userprop{} is not strongly Sybil-proof.

\subsection{Proof of \Cref{thm:rules_userprop_fairness}}
We will prove each property separately.

\paragraph{\userprop satisfies no free-ridership.}
    Consider an instance $\I = (N, C, \mathbf{w})$.
    For every $j \in C$ where $\sum_{i \in N} w_{ij} = 0$,
    \begin{equation*}
        \phi_{\I}(j) = \sum_{i \in N} \frac{w_{ij}}{\sum_{j' \in C} w_{ij'}} \times \alpha = 0,
    \end{equation*}
    since we assume $\sum_{j' \in C} w_{ij'} > 0$ for all $i \in N$.

\paragraph{\userprop{} is engagement monotone.}
Consider any two instances $\I = (N, C, \mathbf{w})$ and $\I' = (N, C, \mathbf{w}')$ whereby for some $j^* \in C$, we have that (i) $w_{ij^*} \leq w'_{ij^*}$ for all $i \in N$, and (ii) $w_{ij} \geq w'_{ij}$ for all $i \in N$ and $j \in C \setminus \{j^*\}$.

    Now, consider any $i \in N$. Since
    \begin{equation*}
        w'_{ij^*} \geq w_{ij^*}
        \quad \text{and} \quad
        \sum_{j \in C \setminus \{j^*\}} w_{ij} \geq \sum_{j \in C \setminus \{j^*\}} w'_{ij},
    \end{equation*}
    we get that
    \begin{equation*}
        w'_{ij^*} \cdot \sum_{j \in C \setminus \{j^*\}} w_{ij}
        \geq
        w_{ij^*} \cdot \sum_{j \in C \setminus \{j^*\}} w'_{ij}.
    \end{equation*}
    Adding $w'_{ij^*} \cdot w_{ij^*}$ to both sides of the equation, we can factorize the expressions on each side to obtain
    \begin{equation}
        w'_{ij^*} \cdot \left( \sum_{j \in C \setminus \{j^*\}} + w_{ij^*} \right) \geq w_{ij^*} \cdot \left( \sum_{j \in C \setminus \{j^*\}} + w'_{ij^*} \right).
    \end{equation}
    Algebraic manipulation (note that by our model assumption, for each $i \in N$, $\sum_{j' \in C} w_{ij'} > 0$ and $\sum_{j' \in C} w'_{ij'} > 0$) gives us
    \begin{equation*}
        \frac{w_{ij^*}}{\sum_{j \in C \setminus \{j^*\}} w_{ij}} \leq \frac{w'_{ij^*}}{\sum_{j \in C \setminus \{j^*\}} w'_{ij}},
    \end{equation*}
    which simplifies to
    \begin{equation*}
        \frac{w_{ij^*}}{\sum_{j' \in C} w_{ij'}} \leq \frac{w'_{ij^*}}{\sum_{j' \in C} w'_{ij'}}.
    \end{equation*}
    Taking the sum over all users $i \in N$ on both sides, we have that
    \begin{equation*}
        \phi_{\I}(j^*) = \sum_{i \in N} \frac{w_{ij^*}}{\sum_{j' \in C} w_{ij'}} \times \alpha n \leq \sum_{i \in N}
 \frac{w'_{ij^*}}{\sum_{j' \in C} w'_{ij'}} \times \alpha n= \phi_{\I'}(j^*). \qedhere
    \end{equation*}

\paragraph{\userprop{} fails Pigou-Dalton consistency.}
Consider an instance $\I$ with two users and two artists.
Let $\vec{w}_i = (1, 2)$ and $\vec{w}_2= (9, 0)$. Then $\phi_\I(2) = \frac{2}{3}\alpha$.
Suppose instead we consider $\I'$, with $\vec{w}_1' = (1, 1)$ and $\vec{w}_2' = (9, 1)$.
Then, $\I'$ is a Pigou-Dalton improvement on $\I$ as engagement is transferred from a user with higher engagement to a user with a lower engagement.
But, $\phi_{\I'}(2) = \frac{3}{5}\alpha < \phi_\I(2)$ contradicting Pigou-Dalton consistency.

\subsection{Proof of \Cref{thm:rules_usereq_strategy}}

We first show that \userprop is user-additive, which will be useful in proving it is also fraud-proof and bribery-proof.

\paragraph{\usereq is user-additive.}
This follows immediately from the definition. For any instance $\I = (N, C, \vec{w})$, let $\I^{n+1}$ an instance with a profile $\vec{w}_{n+1}$ appended to $\I$.
Then, for all artists $j$, $\phi_{\I^{n+1}}(j) - \phi_\I(j) = \frac{\mathbf{1}_{w_{ij}>0}}{|\{j' \in C : w_{ij'} > 0\}|} \times \alpha$ which is exactly the payoff of artist $j$ in a single user instance with only user $n+1$.

\paragraph{\usereq is user-addition monotone, fraud-proof and bribery-proof.}
As \usereq is user-additive, by \cref{thm:user-additive-implies-monotone-fp-bp}, we have that \usereq is user-addition monotone, fraud-proof and bribery-proof.

\paragraph{\usereq fails Sybil-proofness.}
Consider an instance with one user and two artists, $C = \{1, 2\}$.
Suppose $\vec{w}_1 = (1, 1)$, then $\phi_\I(1) = \half \alpha$.
Suppose instead we consider splitting artist 2 to artists $2'$ and $3'$, with $C' = \{1, 2', 3'\}$. If $\vec{w}_1' = (1, \half, \half)$, \usereq will assign payoff of $\third\alpha$ to each artist, and so the combined payoff of $2'$ and $3'$ in instance $\I'$ is greater than that in $\I$, contradicting Sybil-proofness.

\subsection{Proof of \Cref{thm:rules_usereq_fairness}}
We will prove each property separately.
\paragraph{\usereq satisfies no free-ridership.}
    Consider an instance $\I = (N, C, \mathbf{w})$.
    For every $j \in C$ where $\sum_{i \in N} w_{ij} = 0$,
    \begin{equation*}
        \phi_{\I}(j) = \sum_{i \in N} \frac{\mathbf{1}_{w_{ij}>0}}{|\{j' \in C : w_{ij'} > 0\}|} \times \alpha = 0,
    \end{equation*}
    since we assume $\sum_{j' \in C} w_{ij'} > 0$ for all $i \in N$, and so ${|\{j' \in C : w_{ij'} > 0\}|} > 0$ for all $i \in N$.

\paragraph{\usereq{} is engagement monotone.}
Consider any two instances $\I = (N, C, \mathbf{w})$ and $\I' = (N, C, \mathbf{w}')$ whereby for some $j^* \in C$, we have that (i) $w_{ij^*} \leq w'_{ij^*}$ for all $i \in N$, and (ii) $w_{ij} \geq w'_{ij}$ for all $i \in N$ and $j \in C \setminus \{j^*\}$.

    Consider any $i \in N$. If $w_{ij^*} = 0$, then we trivially get that
    \begin{equation*}
        \phi_{\I}(j^*) = \frac{\mathbf{1}_{w_{ij^*} > 0}}{|\{j' \in C : w_{ij'} > 0 \}|} = 0 \leq \frac{\mathbf{1}_{w'_{ij^*} > 0}}{|\{j' \in C : w'_{ij'} > 0 \}|} = \phi_{\I'}(j^*).
    \end{equation*}
    Note that by our model assumption, $\sum_{j' \in C} w_{ij'} > 0$ and $\sum_{j' \in C} w'_{ij'} > 0$, and thus the fractions are well-defined.
    If $w_{ij^*} > 0$, then $w'_{ij^*} \geq w_{ij^*} > 0$, by (i). Together with (ii), this means that
    \begin{equation*}
        |\{j' \in C : w_{ij'} > 0 \}| \geq |\{j' \in C : w'_{ij'} > 0\}| > 0.
    \end{equation*}
    Then, taking the reciprocal, we get that
    \begin{equation*}
        \frac{1}{|\{j' \in C : w_{ij'} > 0 \}|} \leq \frac{1}{|\{j' \in C : w'_{ij'} > 0\}|}.
    \end{equation*}
    Since $\mathbf{1}_{w_{ij^*} > 0} = \mathbf{1}_{w'_{ij^*} > 0} = 1$, taking the sum over all $i \in N$, we get that
    \begin{equation*}
        \phi_{\I}(j^*) = \sum_{i \in N} \frac{\mathbf{1}_{w_{ij^*} > 0}}{|\{j' \in C : w_{ij'} > 0 \}|} \times \alpha \leq \sum_{i \in N} \frac{\mathbf{1}_{w'_{ij^*} > 0}}{|\{j' \in C : w'_{ij'} > 0 \}|}  \times \alpha  = \phi_{\I'}(j^*). \qedhere
    \end{equation*}

\paragraph{\usereq{} is Pigou-Dalton consistent.}
Consider any two instances $\I = (N, C, \mathbf{w})$ and $\I' = (N, C, \mathbf{w}')$ whereby there exists some $i,i' \in N$ and $j \in C$ such that
\begin{enumerate}[(i)]
    \item $w'_{ij} = w_{ij} - \delta$ (where $\delta > 0$ and $w_{ij} - \delta > 0$);
    \item $w'_{i'j} = w_{i'j} + \delta$ and $w_{i'j} \leq w_{ij}$; and
    \item $w_{kj'} = w'_{kj'}$ for all $k \in N$ and $j' \in C \setminus \{j\}$, and $w_{kj} = w'_{kj}$ for all $k \in N \setminus \{i,i'\}$.
\end{enumerate}
Then, since $w_{ij} > \delta > 0$ (by (i)), this implies $w'_{ij} = w_{ij} - \delta > 0$, giving us
\begin{equation} \label{eqn:usereq_pd_1}
    \mathbf{1}_{w_{ij} > 0} = \mathbf{1}_{w'_{ij} > 0} = 1.
\end{equation}
Also, since $w'_{i'j} > \delta$, we get that
\begin{equation} \label{eqn:usereq_pd_2}
    \mathbf{1}_{w_{i'j} > 0} \leq 1 = \mathbf{1}_{w'_{i'j} > 0}.
\end{equation}

Then, a direct implication from (\ref{eqn:usereq_pd_1}) is
\begin{align*}
    \frac{\mathbf{1}_{w_{ij}>0}}{|\{j' \in C : w_{ij'} > 0\}|} & = \frac{\mathbf{1}_{w_{ij}>0}}{|\{j' \in C \setminus \{j\}: w_{ij'} > 0\}| + \mathbf{1}_{w_{ij}>0}} \\
    & =  \frac{\mathbf{1}_{w'_{ij}>0}}{|\{j' \in C \setminus \{j\}: w_{ij'} > 0\}| + \mathbf{1}_{w'_{ij}>0}} \\
    & = \frac{\mathbf{1}_{w'_{ij}>0}}{|\{j' \in C: w_{ij'} > 0\}|}.
\end{align*}
Moreover, we also get that
\begin{align*}
    \frac{\mathbf{1}_{w_{i'j}>0}}{|\{j' \in C: w_{i'j'} > 0\}|}
    & = \frac{\mathbf{1}_{w_{i'j}>0}}{|\{j' \in C \setminus \{j\}: w_{i'j'} > 0\}| + \mathbf{1}_{w_{i'j}>0}} \\
    & = 1 - \frac{|\{j' \in C \setminus \{j\}: w_{i'j' > 0}\}|}{|\{j' \in C \setminus \{j\}: w_{i'j'} > 0\}| + \mathbf{1}_{w_{i'j}>0}} \\
    & \leq 1 - \frac{|\{j' \in C \setminus \{j\}: w_{i'j' > 0}\}|}{|\{j' \in C \setminus \{j\}: w_{i'j'} > 0\}| + \mathbf{1}_{w'_{i'j} > 0}} \quad \text{(by (\ref{eqn:usereq_pd_2}))}\\
    & = 1 - \frac{|\{j' \in C \setminus \{j\}: w'_{i'j' > 0}\}|}{|\{j' \in C \setminus \{j\}: w'_{i'j'} > 0\}| + \mathbf{1}_{w'_{i'j} > 0}} \quad \text{(by (iii))}\\
    & = \frac{\mathbf{1}_{w'_{i'j}>0}}{|\{j' \in C \setminus \{j\}: w'_{i'j'} > 0\}| + \mathbf{1}_{w'_{i'j}>0}} \\
    & = \frac{\mathbf{1}_{w'_{i'j}>0}}{|\{j' \in C: w'_{i'j'}> 0\}|}.
\end{align*}

Utilizing the two implications obtained above, together with (iii), we get that

\begin{align*}
    \phi_\I(j) & = \sum_{k \in N} \frac{\mathbf{1}_{w_{kj}>0}}{|\{j' \in C : w_{kj'} > 0\}|} \times \alpha \\
    & = \alpha \times \left(  \frac{\mathbf{1}_{w_{ij}>0}}{|\{j' \in C : w_{ij'} > 0\}|} + \frac{\mathbf{1}_{w_{i'j}>0}}{|\{j' \in C : w_{i'j'} > 0\}|} + \sum_{k \in N \setminus \{i,i'\}} \frac{\mathbf{1}_{w_{kj}>0}}{|\{j' \in C : w_{kj'} > 0\}|}\right)\\
    & \leq \alpha \times \left(  \frac{\mathbf{1}_{w'_{ij}>0}}{|\{j' \in C : w'_{ij'} > 0\}|} + \frac{\mathbf{1}_{w'_{i'j}>0}}{|\{j' \in C : w'_{i'j'} > 0\}|} + \sum_{k \in N \setminus \{i,i'\}} \frac{\mathbf{1}_{w'_{kj}>0}}{|\{j' \in C : w'_{kj'} > 0\}|}\right)\\
    & = \sum_{i \in N} \frac{\mathbf{1}_{w'_{ij}>0}}{|\{j' \in C : w'_{ij'} > 0\}|} \times \alpha \\
    & = \phi_{\I'}(j),
\end{align*}
as desired.

\section{Connections to Portioning}
\label{sec:portioning-appendix}

We first formally define a \emph{portioning instance} and \emph{portioning rule}.
\begin{definition}[Portioning Instance]
    A \emph{portioning instance} is an instance $\I = (N, C, \vec{w})$ such that for all $i \in N$, $\norm{\vec{w}_i}_1 = 1$.
\end{definition}

\begin{definition}[Portioning Rule]
    A \emph{portioning rule} is a function $\psi$ that maps each \emph{portioning instance} $\I$ to an $m$-valued vector $ (\psi_{\I}(1), \dots, \psi_{\I}(m))$. Each $\psi_\I(j) \geq 0$ and we require additionally that $\sum_{j \in C} \psi_\I(j) = 1$.
\end{definition}

Because of this relationship, we can generate payment rules by normalizing the engagement vectors and using existing portioning mechanisms.
So, for an instance $\I = (N, C, (w_{ij}))$ we can construct a portioning instance $\I^* = \left(N, C, \left(\frac{w_{ij}}{\norm{\vec{w}_i}_1}\right)\right)$ where $\norm{\vec{w}_i}$ is the $\ell_1$ norm, $\norm{\vec{w}_i} = \sum_{j \in C} w_{ij}$. For a portioning rule $\psi$, we construct a payment rule $\phi$ such that for all artists $j$, the payment is given by the portioning rule $\phi_\I(j) = \psi_{\I^*}(j) \times n \alpha$.

Major portioning rules are cataloged in \citet{elkind2023portioning}.
One broad category of portioning rules are \emph{coordinate-wise} rules.
We can construct these from a function that aggregates the engagement of each artist and then normalize it.

\begin{definition}
    Given a family of functions $f^n \colon (\mathbb{R}_{\geq 0})^n \rightarrow (\mathbb{R}_{\geq 0})$ we can construct a coordinate-wise portioning rule such that the payoff to an artist $j$ is $\psi_\I(j) = \frac{f^n(w_{1j}, w_{2j}, \ldots, w_{nj})}{\sum_{k \in C} f^n(w_{1k}, w_{2k}, \ldots, w_{nk})}$.
\end{definition}

The functions mentioned in \citet{elkind2023portioning} aggregate preferences based on the coordinate-wise \emph{average}, the \emph{maximum}, the \emph{minimum}, the \emph{median} and the \emph{geometric mean}.
From these portioning rules, we can construct analogous payment rules \textsc{Avg}, \textsc{Max}, \textsc{Min}, \textsc{Med} and \textsc{Geo} respectively. We then obtain the following results.

\begin{theorem}
    \textsc{Avg} is equivalent to \userprop. As such it satisfies fraud-proofness, bribery-proofness and Sybil-proofness.
\end{theorem}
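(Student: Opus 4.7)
The plan is to verify the equivalence by direct computation and then invoke \Cref{thm:rules_userprop_strategy} to transfer the strategic axioms. The \textsc{Avg} rule is the coordinate-wise portioning rule induced by $f^n(x_1,\ldots,x_n) = \sum_{i} x_i$ (the arithmetic mean; the constant $1/n$ cancels under the subsequent normalization). So given an instance $\I = (N, C, \vec{w})$, I would apply $f^n$ coordinate-wise to the normalized instance $\I^* = (N, C, (w_{ij}/\|\vec{w}_i\|_1))$, and then scale by $n\alpha$ to obtain the induced payment rule $\phi$.

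The key step is computing the normalizing denominator of $\psi_{\I^*}$. I would swap the order of summation to get
\begin{equation*}
    \sum_{k \in C}\sum_{i \in N} \frac{w_{ik}}{\|\vec{w}_i\|_1} \;=\; \sum_{i \in N} \frac{\sum_{k \in C} w_{ik}}{\|\vec{w}_i\|_1} \;=\; \sum_{i \in N} 1 \;=\; n,
\end{equation*}
using the model assumption that $\sum_{k \in C} w_{ik} > 0$ so the fractions are well-defined. Hence $\psi_{\I^*}(j) = \frac{1}{n}\sum_{i \in N} w_{ij}/\sum_{j' \in C} w_{ij'}$, and multiplying by $n\alpha$ gives $\phi_\I(j) = \sum_{i \in N} \alpha \cdot w_{ij}/\sum_{j' \in C} w_{ij'}$, which is exactly the closed-form expression defining \userprop{}.

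Having established $\textsc{Avg} \equiv \userprop$, the three axiomatic properties are immediate from \Cref{thm:rules_userprop_strategy}, which already proves that \userprop{} is fraud-proof, bribery-proof, and Sybil-proof. There is no real obstacle in this argument; the only subtlety is making sure the normalization in the denominator of $\psi_{\I^*}$ telescopes back to $n$, which is exactly what allows the $n\alpha$ scaling factor to reconstruct \userprop{}'s per-user $\alpha$ contribution rather than some distorted reweighting.
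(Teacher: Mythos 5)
Your proof is correct and follows essentially the same route as the paper: compute the \textsc{Avg} payoff on the normalized instance, swap the order of summation so the normalizing denominator collapses to $n$, conclude the closed form coincides with \userprop{}, and then inherit fraud-proofness, bribery-proofness, and Sybil-proofness from \Cref{thm:rules_userprop_strategy}. No gaps.
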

\begin{proof}
    Given a problem instance $\I = (N, C, \vec{w})$ with unnormalized $\vec{w}$, \textsc{Avg} will assign artist $j$ a payoff
    $n\alpha \times \frac{\sum_{i \in N}\frac{w_{ij}}{\norm{\vec{w}_i}_1} \frac{1}{n}}{\sum_{k\in C} \sum_{i \in N}\frac{w_{ik}}{\norm{\vec{w}_i}_1} \frac{1}{n}}
    = n\alpha \times
    \frac{\sum_{i \in N}\frac{w_{ij}}{\norm{\vec{w}_i}_1} }{ \sum_{i \in N} \sum_{k\in C}\frac{w_{ik}}{\norm{\vec{w}_i}_1}}$.
    But note that the denominator simplifies to $n$ giving payoff to each artist $j$ equal to $\alpha \sum_{i \in N}\frac{w_{ij}}{\norm{\vec{w}_i}_1}$, which is identical to \userprop.
\end{proof}

\begin{theorem}
    \label{thm:portioning-fraud}
    Rules \Max, \Min, \Geo, \Med, \textsc{Util}, \textsc{Egal} and \textsc{IndependentMarkets} fail fraud-proofness, bribery-proofness and Sybil-proofness for all $\alpha \in (0, 1]$.
\end{theorem}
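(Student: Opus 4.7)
The plan is to exhibit, rule by rule, compact counterexamples that violate each of the three axioms, exploiting the fact that none of the listed rules is user-additive (so \Cref{thm:user-additive-implies-monotone-fp-bp} does not apply); each aggregates normalized preferences in a coordinate-wise or global-objective manner whose discontinuous response to extreme users can be amplified by scaling the user population.

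For fraud-proofness and bribery-proofness I use a common template, justified by \Cref{prop:singleuserFP} and \Cref{prop:singleuserBP} whenever single-user manipulations suffice. Take a base instance with $m = 2$ artists and $n$ users, each with normalized profile $(1/2, 1/2)$, so every symmetric coordinate-wise rule splits the pool evenly and each artist receives $n\alpha/2$. Now add a user (for FP) or change a user (for BP) to profile $(1,0)$ or $(0,1)$, whichever helps the adversary. Under \Max the artist-$1$ coordinate jumps from $1/2$ to $1$ while artist $2$'s stays $1/2$, so artist $1$'s share becomes roughly $2/3$ of an $(n+1)\alpha$-sized pool, yielding a payoff swing of order $\Theta(n\alpha)$ that exceeds $1$ once $n$ is taken large relative to $1/\alpha$. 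Under \Min the symmetric construction drives artist $1$'s minimum to $0$ and redirects essentially the entire pool to artist $2$, and \Geo behaves analogously through a fractional-power shift of the geometric mean. \Med is more delicate because a single user cannot move the median of $n+1$ values, so I instead appeal to the multi-user form and add $k$ fake users---legitimate by \Cref{prop:singleuserFP}---with $k$ chosen large enough to flip the median on artist $1$ while the cost $k$ is dominated by the resulting payoff swing $\Theta(n\alpha)$. The rules \textsc{Util}, \textsc{Egal}, and \textsc{IndependentMarkets} are handled by the same template applied to their specific definitions from \citet{elkind2023portioning}: each is a global optimizer that responds sharply to a single extreme profile.

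For Sybil-proofness I use a splitting argument. For each rule I construct a small instance with two artists and split artist $1$ into two artists $1a, 1b$, letting different users concentrate their artist-$1$ weight on different sides of the split while keeping $w'_{i,1a} + w'_{i,1b} = w_{i,1}$ as the axiom requires. Because none of the coordinate-wise or optimizer-based rules is linear across users, the sum of the aggregates for $1a$ and $1b$ generically differs from the aggregate for $1$, producing a total payoff to $\{1a, 1b\}$ distinct from the original payoff to $\{1\}$. The main obstacle is quantitative rather than conceptual: the counterexamples must work uniformly for every $\alpha \in (0, 1]$, so $n$ must be taken as a growing function of $1/\alpha$, and one must avoid parity artefacts---e.g.\ with even $n$ the median reduces to an average, which would tame \Med---by choosing $n$ odd when needed.
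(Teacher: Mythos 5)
Your counterexamples for \Max, \Min, \Geo, and \Egal (symmetric $(\half,\half)$ base plus one extreme user) and your Sybil-proofness strategy (split an artist and let different users concentrate on different copies) essentially coincide with the paper's constructions. However, the \Med case has a genuine gap, and your premise there is exactly backwards. With your $(\half,\half)$ base, flipping the median of artist $1$'s column requires on the order of $k \geq n$ fake users (or bribing about $n/2$ users), and the resulting payoff swing is at most about $\frac{3}{2}n\alpha$; since the cost is about $n$, the manipulation is profitable only when $\alpha$ exceeds roughly $2/3$, so the construction fails for small $\alpha$, and taking $n$ larger does not help because gain and cost scale together. The paper instead uses a knife-edge base ($k$ users at $(1,0)$ and $k+1$ at $(0,1)$, $n$ odd), where adding a \emph{single} user makes the count even and the median of artist $1$'s column jumps from $0$ to $\half$ by averaging the two middle values---precisely the parity/averaging effect you propose to ``avoid''---yielding a gain of $\frac{(n+1)\alpha}{2} > 1$ from one fake user. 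Your claim that a single user cannot move the median is thus the wrong lesson: the discontinuity at the parity change is what breaks \Med.

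The same issue undermines \Util and \textsc{IndependentMarkets}. With the common $(\half,\half)$ base, the $\ell_1$-utilitarian optimum does not move when one extreme user is added (shifting the allocation by $\delta$ costs the $n$ existing users $2\delta$ each and saves the new user only $2\delta$), so the swing is at most $\alpha/2$ and no violation arises; the paper needs a nearly tied base ($k+1$ users on artist $1$, $k$ on artist $2$) so that one added user flips the optimizer discontinuously through the tie-breaking rule, sending artist $2$'s payoff from $0$ to about $\frac{n\alpha}{2}$. For \textsc{IndependentMarkets} you give no construction at all, and a two-artist template is unlikely to work: it is a moving-phantom mechanism, and the paper's counterexample exploits $n+1$ artists with a fake user spreading weight $\frac{1}{n}$ across $n$ of them so that all their phantom-median payoffs rise simultaneously, totaling about $\frac{(n+1)\alpha}{2}$. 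So as written, the \Med, \Util, and \textsc{IndependentMarkets} cases would not go through, even though the remaining rules and the Sybil arguments match the paper's proof.
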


The eighth rule, \textsc{Avg}, assigns payout proportional to the average engagement of an artist.
This is equivalent to the rule \userprop{}.
The strong axiomatic guarantees of \textsc{Avg} in the portioning setting add an extra layer of support towards \userprop{}.
Conversely, our results that \textsc{Avg} satisfies fraud-proofness and bribery-proofness in our expanded setting add an extra layer of support towards \textsc{Avg}.

To simplify our analysis, we will prove the \Cref{thm:portioning-fraud} using four separate results as follows.

\begin{lemma}
    Coordinate-wise rules \textsc{Max}, \textsc{Min}, \textsc{Med}, \textsc{Geo} fail fraud-proofness and bribery-proofness for all $\alpha \in (0, 1]$, even if there are only two artists.
\end{lemma}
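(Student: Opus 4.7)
For each of the four coordinate-wise rules \Max, \Min, \Med, \Geo, the plan is to exhibit, on an instance with only two artists, a pair $(\I, \I')$ witnessing the failure of fraud-proofness (with $\widehat{n}=1$) and the failure of bribery-proofness ($k=1$). Because adding or replacing a single user injects at most $1$ unit of budget, it suffices to design $\I$ so that one artist captures essentially all of the $n\alpha$ payment, and a single targeted modification swings a large constant fraction of that budget to the other artist. The leverage comes from the fact that each aggregator is highly sensitive at a particular boundary: $\max$ can jump from $0$ to a positive value on a single insertion, $\min$ and the geometric mean collapse to $0$ the moment one coordinate vanishes, and the median jumps by the diameter of the support when a bimodal sample tips past its midpoint.

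\textbf{Constructions I would use.} For \Max, take $\I$ with $n$ real users all having profile $(1,0)$, giving artist~$2$ a max of $0$ and hence payoff $0$; replacing or appending a single user with profile $(0,1)$ forces $\max_i w_{i2}=1$, so artist~$2$'s payment becomes $\tfrac{n\alpha}{2}$ or $\tfrac{(n+1)\alpha}{2}$, which exceeds $1$ for large $n$. For \Min and \Geo, take $\I$ with $n$ users all having profile $(\tfrac12,\tfrac12)$, so both artists receive $\tfrac{n\alpha}{2}$; a single user changed to $(1,0)$ (bribery) or a single fake user with profile $(1,0)$ (fraud) makes $\min_i w_{i2}=0$ and $\prod_i w_{i2}=0$, so artist~$1$'s payoff jumps to the entire budget, a swing of $\Theta(n\alpha)$. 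For \Med, take $n=2k+1$ users split into $k+1$ with profile $(0,1)$ and $k$ with profile $(1,0)$, so the coordinate-wise medians are $0$ for artist~$1$ and $1$ for artist~$2$, and artist~$2$ receives all $n\alpha$. Adding a single fake user with $(1,0)$ yields an $n+1$ sample that is balanced $k{+}1$ to $k{+}1$, pushing both medians to $\tfrac12$ and giving each artist $\tfrac{(n+1)\alpha}{2}$; bribing one $(0,1)$ user into $(1,0)$ produces a $(k, k{+}1)$ split that flips the medians entirely, moving the full $n\alpha$ from artist~$2$ to artist~$1$. In each pair, the payoff shift to the winning artist is $\Omega(n\alpha)$ and therefore exceeds $1$ once $n$ is chosen large enough.

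\textbf{What needs verifying and where the friction is.} The \Max, \Min, \Geo cases reduce to a one-line calculation once the normalization convention from the portioning embedding is written down (each user still has $\ell_1$-norm $1$ after adding the new user, so $\I^*$ is obtained simply by appending the new row). The only slightly delicate case is \Med, because the conclusion depends on which convention is adopted for the median of an even-sized multiset. The cleanest route is to commit to the convention of \citet{elkind2023portioning} (averaging the two middle order statistics for even $n$) and to choose parities so that $\I$ has odd size $2k{+}1$ and $\I'$ has even size $2k{+}2$ for fraud, and both have odd size $2k{+}1$ for bribery; then the median values at the four relevant configurations are respectively $(0,1)$, $(\tfrac12,\tfrac12)$, and $(1,0)$, and the swings are exactly what is claimed. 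With those parities fixed, each violation is immediate by plugging into $\phi_\I(j) = \psi_{\I^*}(j)\cdot n\alpha$ and checking that the difference is at least $2$ (say) for $n$ large, which holds uniformly for every $\alpha \in (0,1]$.
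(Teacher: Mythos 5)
Your proposal is correct and follows essentially the same route as the paper: small two-artist instances where a single added or modified user exploits the boundary sensitivity of each aggregator (\Max{} jumping from $0$, \Min{}/\Geo{} collapsing to $0$, \Med{} tipping at the midpoint under the averaging convention), with the gain scaling as $\Theta(n\alpha)$ so it exceeds $1$ for $n$ large relative to $1/\alpha$. Your \Min, \Geo, and \Med{} constructions coincide with the paper's (which states the bribery cases are analogous, as you spell out), and your \Max{} instance is a minor, equally valid variant of the paper's.
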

\begin{proof}
    We prove that the rules fail fraud-proofness, the counterexamples for bribery-proofness are very similar.
    For \textsc{Max}, let $n = \ceil*{\frac{6}{\alpha}} + 1$. Let $\vec{w}_i = (\half, \half)$, so that each artist receives a payment of $\frac{n\alpha}{2}$.
    If an adversary in support of $1$ adds $\vec{w}_{n+1} = (1, 0)$ then the payment to $1$ is $\frac{2(n+1)\alpha}{3}$. But, $\frac{2(n+1)\alpha}{3} - \frac{n \alpha}{2}=\frac{4(n+1)\alpha - 3n\alpha}{6} = \frac{n\alpha + \alpha}{6}$. But $n > \frac{6}{\alpha}$ so that the benefit from fraud is greater than 1.

    For \textsc{Min}, let $n = 2 \ceil{\frac{1}{\alpha}}$, $C= \{1, 2\}$ and for all $i \in N$, $\vec{w}_i = (\frac{1}{2}, \frac{1}{2})$, so that each user receives a payoff of $\frac{n\alpha}{2}$.
    Suppose we construct instance $\I'$ by adding profile $\vec{w}_{n+1} = (1, 0)$.
    Then, $\phi_{\I'}(1) = (n+1)\alpha$ and $\phi_{\I'}(1)-\phi_{\I}(1) = (n+1)\alpha - \frac{n\alpha}{2} = \frac{(n+2)\alpha}{2} > 1$ by $n \geq \frac{2}{\alpha}$.

    For \Geo, we can reuse the counterexample for \Min.

    For \textsc{Med}, let $n = \ceil*{\frac{2}{\alpha}}$ if odd or $\ceil*{\frac{2}{\alpha}} + 1$ otherwise and $n = 2k + 1$ for a natural number $k$.
    Then for $i \leq k$, $\vec{w}_i = (1, 0)$ and for $k + 1 \leq i \leq 2k + 1$ let $\vec{w}_i = (0, 1)$.
    Then $\phi_\I(1) = 0$.
    Adding in profile $\vec{w}_{n+1} = (1, 0)$ means $\phi_{\I'}(1) = \frac{(n+1)\alpha}{2} > 1$ by construction.
\end{proof}

Another class of rules focuses on welfare maximization.
For a portioning rule $\psi$ we can measure the disutility of a user $i$ as the $\ell_1$-difference between their engagement and the output payment profile, $d_\I(i) = \sum_{j \in C} \abs{\psi_\I(j) - w_{ij}}$, the user's welfare is then $-d_\I(i)$.
Rule \textsc{Util} maximizes utilitarian welfare $-\sum_{i\in N} d_\I(i)$ and \textsc{Egal} maximizes egalitarian welfare $\min_{i \in N}(-d_\I(i))$.
Ties are broken in favour of the maximum entropy distribution in the case of \textsc{Util}.
For \textsc{Egal}, we break ties in a \emph{leximin} manner, however, our counterexamples do not rely on the tie-breaking method.

\begin{lemma}
    \textsc{Util} and \textsc{Egal} fail fraud-proofness and bribery-proofness for all $\alpha \in (0, 1]$.
\end{lemma}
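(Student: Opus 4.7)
I will work with $m = 2$ candidates throughout. In this case a portioning is determined by a single parameter $p \in [0,1]$ (artist 1 receives $p$, artist 2 receives $1-p$), and since $w_{i2} = 1 - w_{i1}$ on a portioning instance, the disutility collapses to $d_\I(i) = 2\,|p - w_{i1}|$. Hence \textsc{Util} picks the median of $\{w_{i1}\}_{i \in N}$ and \textsc{Egal} picks the midpoint $p = (\min_i w_{i1} + \max_i w_{i1})/2$. The payment to artist 1 on the original (unnormalized) instance with $n$ users is then $p \cdot n\alpha$. The proof strategy is to exhibit, for each rule, an instance where a single added user (for fraud-proofness) or a single re-wired user (for bribery-proofness) forces $p$ to jump by an amount bounded below by a positive constant, so that by taking $n$ large the marginal payoff to $\{1\}$ exceeds $1$.

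\textbf{Counterexamples for \textsc{Util}.} For fraud-proofness, take $n$ even with $n/2$ users at $\vec{w}_i = (1,0)$ and $n/2$ at $(0,1)$. Every $p \in [0,1]$ is a minimizer of $\sum_i |p - w_{i1}|$, so the maximum-entropy tie-break gives $p = 1/2$ and $\phi_\I(1) = n\alpha/2$. Now add a fake user with $\vec{w}_{n+1} = (1,0)$: the set of medians collapses to $\{1\}$, so $p = 1$ and $\phi_{\I'}(1) = (n+1)\alpha$, which exceeds $\phi_\I(1) + 1$ once $n$ is chosen large in terms of $1/\alpha$. Bribery-proofness fails by the same instance, bribing one $(0,1)$-user to switch to $(1,0)$: this breaks the tie toward median $1$ and yields $\phi_{\I'}(1) = n\alpha$, again exceeding $\phi_\I(1) + 1$.

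\textbf{Counterexamples for \textsc{Egal}.} The idea is to start with a degenerate range so that one inserted or rewired extreme point shifts the midpoint by a fixed amount. Let all $n$ users have $\vec{w}_i = (1/2, 1/2)$, so $\min = \max = 1/2$, $p = 1/2$, and $\phi_\I(1) = n\alpha/2$. For fraud-proofness, add $\vec{w}_{n+1} = (1,0)$: now $\min = 1/2$, $\max = 1$, so $p = 3/4$ and $\phi_{\I'}(1) = 3(n+1)\alpha/4$, giving a marginal gain of $(n+3)\alpha/4$, which exceeds $1$ for $n$ sufficiently large. For bribery-proofness, rewire any single user to $(1,0)$ in the same starting instance; this again forces $p = 3/4$ and yields a gain of $n\alpha/4$, again exceeding $1$ for large $n$.

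\textbf{Anticipated obstacle.} The only subtlety is the tie-breaking in \textsc{Util}: we must be certain that the ``pre-fraud'' instance produces $p = 1/2$ rather than any other optimum, and that the ``post-fraud'' instance has a \emph{unique} median so that tie-breaking is irrelevant. Both are handled by the symmetric $n/2$--$n/2$ split and the parity shift induced by adding or flipping a single user. For \textsc{Egal} the argmin is already unique (a single midpoint), so no tie-breaking issue arises and the only care needed is to ensure that the starting range is degenerate enough that the inserted point moves $\max$ (or $\min$) by a constant independent of $n$.
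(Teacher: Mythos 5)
Your proposal is correct and follows essentially the same approach as the paper: two-artist instances where adding or rewiring a single user shifts the \textsc{Util} median or \textsc{Egal} midpoint by a constant, with $n$ chosen large relative to $1/\alpha$; your \textsc{Egal} instance is identical to the paper's, and your \textsc{Util} instance is just the mirror image (you exploit the maximum-entropy tie-break in the pre-manipulation instance and break the tie afterwards, whereas the paper starts from a strict majority and creates the tie afterwards). Both versions are valid, and your explicit bribery counterexamples fill in what the paper leaves as a brief remark.
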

\begin{proof}
    We prove that the rules fail fraud-proofness, the counterexamples can be slightly modified to also prove bribery-proofness.
    For \textsc{Util}, consider $n = 2k + 1$ and $C=\{1, 2\}$, with $i \leq k + 1$ submitting $\vec{w}_i = (1, 0)$ and $i > k$ submitting $\vec{w}_i = (1, 0)$, then \textsc{Util} will allocate the entire resource to artist $1$ giving payoff $\phi_\I(2) = 0$.
    If a new user is added with $\vec{w}_{n+1} = (0, 1)$ then $\phi_\I(2) = \frac{n\alpha}{2} > 1$ for large enough $n$.

    For \textsc{Egal}, let $C = \{1, 2\}$ and for all $i$, $\vec{w}_i=(\frac{1}{2}, \half)$.
    Then $\phi_\I(1) = \frac{n\alpha}{2}$.
    If we add a profile $(1, 0)$ then to minimize disutility, $\phi_{\I'}(1) = \frac{3}{4}n\alpha$, such that $\phi_{\I'}(1) - \phi_\I(1) = \frac{1}{4}n\alpha > 1$ for large $n$.
\end{proof}

The more sophisticated \emph{independent markets} rule was recently introduced in \citet{freeman2021truthfulbudget}.
This rule is strategy-proof and in some precise sense proportional.
For an instance with $n$ users, the rule constructs $n+1$ phantom values.
Each artist $j$ receives the median of $\{w_{ij} \mid i \in N\}$ and the $n+1$ phantom values.
To compute these phantom values the rule uses functions $f_0, \ldots, f_n \colon [0, 1] \rightarrow [0,1]$ with $f_k(t) = \min(kt, 1)$.
The rule then uses $t^*$ such that the payoff to each artist is $1$, i.e., $\sum_{j \in C}\med(w_{1j}, \ldots, w_{nj}, f_0(t^*), \ldots,f_n(t^*)  ) = 1$. Unfortunately, despite it's sophistication the rule fails to be fraud-proof.

\begin{lemma}
    The \textsc{IndependentMarkets} rule fails to be fraud-proof, bribery-proof or Sybil-proof for all $\alpha \in (0, 1]$.
\end{lemma}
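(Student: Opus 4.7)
The plan is to give, for each of the three properties, an explicit counterexample built from the moving-phantom construction $f_k(t)=\min(kt,1)$. In each case the bulk of the argument is a regime analysis: I would identify, for each artist, which of the $2n+1$ sorted values (user inputs interleaved with phantoms) sits at position $n+1$, so that the equilibrium equation $\sum_{j \in C} \med_j = 1$ becomes linear in $t$ and is straightforward to solve.

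For Sybil-proofness, I would take $\I$ with $N = \{1,2\}$, $C = \{1,2,3\}$, $\vec{w}_1 = (\half, \half, 0)$ and $\vec{w}_2 = (0,0,1)$. A regime analysis valid for $t \in [1/4, 1/2)$ shows each artist's median is $t$, so $3t = 1$ gives $t^* = 1/3$ and $\phi_\I(1) = 2\alpha/3$. Now introduce a new ``sybil'' artist $4$: let $C' = \{1,2,3,4\}$, $\vec{w}'_1 = (1/4, \half, 0, 1/4)$, $\vec{w}'_2 = (0,0,1,0)$, which keeps user~1's engagement on artists $2$ and $3$ intact while splitting her engagement on artist $1$ evenly between artists $1$ and $4$. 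For $t$ in the same regime, artists $1$ and $4$ each have median $1/4$ while artists $2$ and $3$ have median $t$; solving $\half + 2t = 1$ gives $t^* = 1/4$, every artist gets $1/4$, and $\phi_{\I'}(\{1,4\}) = \alpha$. Taking $C^* = \{2,3\}$ in the Sybil-proofness definition, all hypotheses are met but $\phi_\I(\{1\}) = 2\alpha/3 \neq \alpha = \phi_{\I'}(\{1,4\})$.

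For fraud-proofness, fix any $\alpha \in (0,1]$ and choose an integer $k > 1/\alpha$. Consider $\I$ with $C = \{1, \ldots, k+1\}$ and $N$ users each voting $\vec{w}_i = (0, \ldots, 0, 1)$ for artist $k+1$. Artists $1, \ldots, k$ have only zeros among their user values, so their medians are $0$; artist $k+1$'s median is $Nt$, so $t^* = 1/N$ and $\phi_\I(\Chat) = 0$ for $\Chat = \{1, \ldots, k\}$. Adding a single new user $\vec{w}_{N+1} = (1/k, \ldots, 1/k, 0)$ gives, for each artist in $\Chat$, a value list of one $1/k$ and $N$ zeros; the position-$(N+2)$ entry out of $2N+3$ is the smallest positive phantom $f_1 = t$, while artist $k+1$'s median shifts to $f_N = Nt$. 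Solving $kt + Nt = 1$ yields $t^* = 1/(N+k)$ and hence $\phi_{\I'}(\Chat) = k(N+1)\alpha/(N+k)$, which exceeds $1$ whenever $N > k(1-\alpha)/(k\alpha - 1)$, giving the fraud-proofness violation. The same quantitative bound handles bribery-proofness: start from $N+1$ users all voting $(0, \ldots, 0, 1)$ (so $\phi_\I(\Chat) = 0$) and bribe one user to switch to $(1/k, \ldots, 1/k, 0)$; the post-bribery instance is exactly the $\I'$ above, and the same inequality $\phi_{\I'}(\Chat) - \phi_\I(\Chat) > 1$ holds.

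The main technical obstacle is verifying, in every computation, that the sort order one assumes when reading off each artist's median is actually consistent with the value of $t^*$ the equilibrium equation produces; if the candidate $t^*$ crosses a threshold at which a user value and a phantom swap positions, the regime must be redone. For the four instances above, the user values are restricted to $\{0, \half, 1, 1/k\}$, so there are only a handful of regimes to inspect, and a short calculation confirms that the solved $t^*$ indeed lies in the assumed regime in each case.
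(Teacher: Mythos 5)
Your proposal is correct and takes essentially the same approach as the paper: explicit counterexample instances for each of the three axioms, analyzed by identifying which phantom or user value lands at the median position and solving the resulting linear equation for $t^*$ (your instances differ in the specific numbers—e.g., a two-user, split-one-artist example for Sybil-proofness versus the paper's split-into-$n$ construction—but the method and conclusions match). The regime checks you flag do go through at the solved values of $t^*$, so no gap remains.
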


\begin{proof}
    For a number of users $n$, construct an instance $\I_n = (\{1, \ldots, n\}, \{1, \ldots, n+1\}, \vec{w})$, with $w_{i1}=1$ and for artist $j$ with $j \neq 1$, $w_{ij} = 0$.
    Then, $\phi_{\I^n}(1) = n\alpha$ as the users unanimously assign their payoff to user $1$.
    Now, suppose we construct instance $\I_n'$ by adding a user profile $\vec{w}_{n+1} = (0, \frac{1}{n}, \ldots, \frac{1}{n})$.
    Then, there are $n+2$ phantom values generated by the independent markets rule and so each player will be assigned the $n+2$'nd highest value among the phantom and real values.
    For player $1$ that will be the second largest phantom value $t^*n$ and for players $i > 1$ it will be the second lowest phantom value which is $t^*$.
    Given the constraint $nt^* + \sum_{i = 2}^{n + 1} t^* = 1$, we get that $t^* = \frac{1}{2n}$.
    So, the total payoff artists $2, \ldots, n+1$ receive is $tn (n+1)\alpha = \frac{(n+1)a}{2}$.
    So, for $\Chat = C \setminus \{1\}$, $\phi_{\I_n'}(\Chat) - \phi_\I(\Chat) = \frac{(n+1)\alpha}{2} > 1$ for large enough $n$.

    Similarly for bribery-proofness, given an instance $\I_n$, we can construct $\I'$ by setting the profile $\vec{w}_n$ to $(0, \frac{1}{n}, \ldots,\frac{1}{n})$. By the above analysis this generates revenue of $\frac{n\alpha}{2}$ which is greater than $1$ for $n > \frac{2}{\alpha}$.

    For Sybil-proofness, construct an instance $\I = (\{1, \ldots, n + 1\}, \{1, 2\}, \vec{w}\}$ with $\vec{w}_i = (1,0)$ for $i \leq n$ and $\vec{w}_{n+1} = (0, 1)$.
    Then the value users $1, 2$ will be assigned by the independent markets rule is $t^*n$ and $t^*$ respectively.
    As such $\phi_\I(1) = n\alpha$ and $\phi_\I(2) = \alpha$.
    However, from our example in fraud-proofness, we can split user $2$ to users $2', 3', \ldots, n+1'$ and distribute the engagement of user $n+1$ equally.
    This would give a payoff of $\frac{(n+1)\alpha}{2}$ to the Sybil artists which is greater than $\alpha$ for $n > 2$.
\end{proof}

\begin{theorem}
    Rules \Max, \Min, \Geo, \Med, \Util, \Egal fail Sybil-proofness for all $\alpha \in (0, 1]$.
\end{theorem}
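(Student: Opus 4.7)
The plan is to falsify Sybil-proofness separately for each of the six rules by a small concrete instance. In every case I take an instance $\I$, pick a single artist $j$, and build $\I'$ by splitting $j$ into two clones $j_a, j_b$ with $w'_{i j_a} + w'_{i j_b} = w_{ij}$ for every user $i$ while leaving every other artist untouched, so that the hypotheses of the axiom hold with $C^* = C \setminus \{j\}$. The task then reduces to showing $\phi_\I(j) \neq \phi_{\I'}(j_a) + \phi_{\I'}(j_b)$.

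For the coordinate-wise rules $\Max$, $\Min$, $\Geo$, $\Med$, the driver is that none of the four aggregators distributes over column splits: after normalization, splitting a column into two columns whose sum is the original column can freely change the denominator $\sum_k f^n(w_{\cdot k})$. Specifically, for $\Max$ I would take three users with $\vec{w}_1 = \vec{w}_2 = (1,0)$ and $\vec{w}_3 = (0,1)$ and split artist $1$ so that user $1$ engages only with $1_a$ and user $2$ only with $1_b$; the column maxima go from $(1,1)$ to $(1,1,1)$, pushing the split artists' aggregate share from $3\alpha/2$ to $2\alpha$. For $\Min$ and $\Geo$, two users with $\vec{w}_1 = \vec{w}_2 = (1,1)$ suffice: the analogous one-sided split introduces a zero into each new column, so both aggregators vanish on the split artists and their payoff collapses from $\alpha$ to $0$. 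For $\Med$, an asymmetric three-user instance such as $\vec{w}_1 = (3,0), \vec{w}_2 = (2,1), \vec{w}_3 = (0,3)$ combined with a one-sided split of artist $1$ drops the relevant column medians to $0$ and gives a drop from $2\alpha$ to $0$.

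For $\Util$, take $N = [2]$ with opposing preferences $\vec{w}_1 = (1,0), \vec{w}_2 = (0,1)$ (so each artist receives $\alpha$), and split artist $1$ evenly between two clones via $\vec{w}'_1 = (1/2, 1/2, 0), \vec{w}'_2 = (0, 0, 1)$. A short case split in the $\ell_1$ disutility shows that its minimizers form a flat region $\{(a,b,c) : a,b \le 1/2,\, a+b+c=1\}$, on which the max-entropy tiebreak selects $(1/3, 1/3, 1/3)$, raising the split artists' aggregate payoff from $\alpha$ to $4\alpha/3$. For $\Egal$, reusing the three-user $\Max$ instance already works: after the split, equalizing the three users' $\ell_1$ disutilities forces $\psi = (1/3, 1/3, 1/3)$, shifting the split artists' mass from $3\alpha/2$ to $2\alpha$. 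The main obstacle is that the six rules aggregate columns in quite different ways, so no single instance covers all cases; the proof is a short series of independent calculations, and the only mildly nontrivial steps are verifying that the claimed $\psi$ is the one selected by the tiebreaking rule for $\Util$ (by symmetry) and $\Egal$ (by equalization of disutilities).
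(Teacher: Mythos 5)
Your proposal is correct and takes essentially the same approach as the paper: a separate small counterexample for each rule, exploiting that the column aggregator (for \Max, \Min, \Geo, \Med) or the welfare objective with its tie-breaking (for \Util, \Egal) is not additive under splitting an artist's per-user engagement. The paper's concrete instances differ slightly---for \Min, \Geo, \Med, \Util and \Egal it exhibits the violation as a profitable merge/redistribution rather than a harmful split---but the mechanism is identical (creating or removing zeros in a column, or shifting the welfare optimum), and your calculations check out, including the max-entropy tie-break for \Util and the uniqueness of the egalitarian optimum in both of your \Egal instances.
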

\begin{proof}
    For \Max, consider instances $\I = (\{1, 2, 3\}, \{1,2\}, \vec{w})$ with $\vec{w}_1 = (1, 0)$, $\vec{w}_2 = \vec{w}_3 = (0, 1)$.
    Then $\phi_\I(2) = \frac{3\alpha}{2}$. Suppose construct $\I'$ by splitting user $2$ to user $2', 3'$ and $\vec{w}_1' = (1, 0,0), \vec{w}_2' = (0, 1, 0), \vec{w}_3' = (0, 0, 1)$. Then $\phi_\I(2') + \phi_\I(3') = 2\alpha > \phi_\I(2)$ contradicting Sybil-proofness.

    For \Min, consider instance $\I$ with $N = \{1, 2\}$ and $ C=\{1, 2,3\}$ and $\vec{w}_1 = (\third, 0, \frac{2}{3})$, $\vec{w}_2 = (\third, \frac{2}{3}, 0)$, then for $C' = \{2, 3\}$, $\phi_\I(C') = 0$.
    If instead we construct $\I' = (N, C, \vec{w}')$ with $\vec{w}_1' = \vec{w}_1$, $\vec{w}_2' = \vec{w}_3' = (\third, \third, \third)$, then $\phi_{\I}(C') = 2\alpha > \phi_\I(C')$ and contradicting Sybil-proofness.

    For \Geo, we can reuse the example from \Min.

    For \Med, consider $N = \{1, 2, 3\}$, $C = \{1, 2, 3\}$ and $\vec{w}_1 = (1, 0, 0)$, $\vec{w}_2 = (\half, \half, 0)$ and $\vec{w}_3 = (\half, 0, \half)$. For $C' = \{2, 3\}$, $\phi_\I(C') = 0$.
    Now, consider instead $\vec{w}'$, with $\vec{w}_1' = \vec{w}_1$, $\vec{w}_2' =\vec{w}_3' = (\half, \frac{1}{4}, \frac{1}{4})$.
    Then, $\phi_{\I'}(C') = \frac{3\alpha}{2} > \phi_\I(C') = 0$.

    For \Util, consider $N = \{1, 2, 3\}$, $C = \{1, 2, 3\}$ and $\vec{w}_1 = (1, 0, 0)$, $\vec{w}_2 = (0, 1, 0)$ and $\vec{w}_3 = (0, 0, 1)$, then for $C' = \{2,3\}$, $\phi_\I(C') = 2\alpha$.
    Consider instead instance $\I'$ with $\vec{w}'_1 = \vec{w}_1$, $\vec{w}_2' = \vec{w}_3' = (0, \half, \half)$.
    Then, $\phi_{\I'}(C') = 3\alpha > \phi_\I(C')$.

    For \Egal, consider $N = \{1, 2, 3\}$, $C = \{1, 2, 3\}$ and $\vec{w}_1 = (\third, \third, \third)$, $\vec{w}_2 = (0, \half, \half)$ and $\vec{w}_3 = (0, \half, \half)$, then for $C' = \{2,3\}$, $\phi_\I(C') = \frac{5}{2}\alpha$.
    Consider instead instance $\I'$ with $\vec{w}'_1 = \vec{w}_1$, $\vec{w}_2' = (0, 1, 0)$ and $\vec{w}_3' = (0, 0, 1)$.
    Then, $\phi_{\I'}(C') = 3\alpha > \phi_\I(C')$.
\end{proof}

\section{Omitted Proofs from \cref{sec:novelmech}}
\subsection{Proof of \Cref{prop:pps_fpbp}}
Consider an instance $\I$ with $n > \lceil \frac{2}{\alpha} \rceil$ users and two artists. Let $\sum_{i \in N} w_{i1} = \frac{1}{4k}$ and $\sum_{i \in N} w_{i2} = 1$. Then, if $\textsc{ME}(\phi , \I) \leq k$, then $\phi_\I(1) \leq \frac{n}{4}$. Otherwise, if $\phi_\I(1) > \frac{n}{4}$, then  $\textsc{PPS}(\phi, \I, 1) \geq nk$ and $\textsc{PPS}(\phi, \I, 2) \geq \frac{3n}{4}$. Then, $\textsc{ME}(\phi, \I) \geq 4k/4 > k$.

    Next, we add an additional user $i'$ such that $w_{i'1} = 3k$ and $w_{i'2} = 0$. Let this instance be $\I'$. Then, if $\textsc{ME}(\phi , \I') \leq k$, then $\phi_{\I'}(1) \geq \frac{3(n+1)}{4}$. Otherwise, if $\phi_{\I'}(1) < \frac{3(n+1)}{4}$, then  $\textsc{PPS}(\phi, \I, 1) < (n+1)/4k$ and $\textsc{PPS}(\phi, \I, 2) \geq (n+1)/4$. Then, $\textsc{ME}(\phi, \I) > \frac{(n+1)/4}{ (n+1)/4k} > k$.

    Thus, if $\textsc{ME}(\phi , \I) \leq k$ and $\textsc{ME}(\phi , \I') \leq k$, then $\phi_{\I'}(1) - \phi_{\I}(1) \geq \frac{3(n+1)}{4} - \frac{n}{4} > \frac{n}{2}$. As $n > \lceil \frac{2}{\alpha} \rceil$, then $\phi_{\I'}(1) - \phi_{\I}(1) > 1$ and $\phi$ is not fraud-proof.

    By modifying instance $\I$ and having user $i'$ such that $w_{i'1} = 0$ and $w_{i'2} = \epsilon$, a similar argument shows that $\phi$ is not bribery-proof.
\subsection{Proof of \cref{thm:scaled-up-similar-to-gp}}
Here, we let $\norm{\vec{w}_i}_1 = \sum_{j \in C}w_{ij}$.
\paragraph{For an instance where for all $i$, $\norm{\vec{w}_i}_1 \leq \frac{1}{n\alpha} \sum_{i' \in n} \norm{\vec{w}_{i'}}_1$ \scaledUP and \globalprop{} give the same payoff to each artist.}
If for all $i$, $\norm{\vec{w}_i}_1 \leq \frac{1}{n\alpha} \sum_{i' \in n} \norm{\vec{w}_{i'}}_1$, $\gamma = \frac{n\alpha}{\sum_{i\in N} \norm{\vec{w}_i}_1}$.
From our inequality we have that $\gamma \norm{\vec{w}_i}_1 \leq \frac{\gamma}{n\alpha} \sum_{i \in N} \norm{\vec{w}_i}_1 = 1$ and so in particular $\min(\gamma \norm{\vec{w}_i}_1, 1) = \gamma \norm{\vec{w}_i}_1$. Also, $\sum_{i \in N} \gamma \norm{\vec{w}_i}_1 = n\alpha$, so this is the appropriate $\gamma$.

So, the payoff to each artist  is:
\[\phi_\I(j) = \sum_{i \in N} \gamma \norm{\vec{w}_i}_1 \frac{w_{ij}}{\norm{\vec{w}_i}_1} = \sum_{i \in N} \gamma w_{ij} = n\alpha  \frac{\sum_{i \in N}w_{ij}}{\sum_{i \in N} \norm{\vec{w}_i}_1}.\]

Which is identical to \globalprop{}.
\subsection{Proof of \Cref{thm:rules_scaled_strategy}}

We will prove each property separately.

\paragraph{\scaledUP is bribery-proof.}
Suppose for a contradiction the \scaledUP does not satisfy bribery-proofness.
Then there are instances $\I = (N, C, \vec{w})$ and $\I' = (N, C, \vec{w}')$ with $\vec{w}_i = \vec{w}'_i$ for $i < n$ and $\vec{w}_n \neq \vec{w}'_n$ such that for a $C^+ \subseteq C$, $\phi_{\I'}(C^+) - \phi_\I(C^+) > 1$.
We will prove this result by simplifying the cases we need to consider.
First, note that without loss of generality we can collapse $C^+$ to a single artist.
For any instance $\mathcal{J}$, we can construct an instance $\mathcal{J}^*$ by collapsing artists $C^+$ to a single artist in $\mathcal{J}^*$. Each user $i$ has engagement to a fresh user $c^+$ equal to $\sum_{j \in C^+} w_{ij}$ then in \scaledUP, $\phi_{\mathcal{J}^*}(c^+) = \phi_{\mathcal{J}}(C^+)$. Similarly, for the purposes of this proof we can collapse the complement $C \setminus C^+$ to a single user.
So without loss of generality, it suffices to prove the result for $C = \{1, 2\}$.

Also, suppose ${w}_{n1} > 0$, then setting $w_{n1}$ to $0$ would weakly decrease the payoff of artist $1$ in instance $\I$ and so increase the profit from bribery. So without loss of generality, ${w}_{n1} = 0$ and similarly ${w}_{n2}' = 0$.
By engagement monotonicity, the maximum difference $\phi_{\I'}(1) - \phi_{\I}(1)$ is achieved for profiles $\vec{w}_n = (0, M)$ and $\vec{w}_n = (M, 0)$ for large $M$.

If $n\alpha \leq 1$ then bribery is inherently impossible as the mechanism does not distribute enough payoff to cover a single subscription fee.
If $n\alpha > 1$ then it suffices to consider the minimum $M^*$ such that $\gamma M^*  \geq 1$. Increasing $M$ past $M^*$ does not affect $\gamma$.

But note: $\gamma$ in $\I$ and $\gamma'$ in $\I$ are equal! So, $\phi_{\I'}(1) - \phi_\I(1) = \min(\gamma M, 1) \frac{M}{M} - \min(\gamma M, 1) \frac{0}{M} = 1$.
So, the maximum benefit from bribing is at most 1, proving bribery-proofness of \scaledUP.

\paragraph{\scaledUP is Sybil-proof.}
Consider any two instances $\I = (N, C, \mathbf{w})$ and $\I' = (N, C', \mathbf{w}')$ such that $C \subseteq C'$.
Suppose for any subset of artists $C^* \subseteq C$,
\begin{enumerate}[(i)]
    \item $w_{ij} = w'_{ij}$ for all $i \in N, j \in C^*$, and
    \item $\sum_{j \in C \setminus C^*} w_{ij} = \sum_{j \in C' \setminus C^*} w'_{ij}$ for all $i \in N$,
\end{enumerate}
Let $\gamma$ and $\gamma'$ be constants such that
\begin{equation*}
    \sum_{i \in N}  \min(\gamma \cdot \sum_{j \in C}w_{ij}, 1)= \alpha n \quad \text{and} \quad \sum_{i \in N}  \min(\gamma' \cdot \sum_{j \in C}w'_{ij}, 1)= \alpha n, \quad \text{respectively}.
\end{equation*}
Then, using (i) and (ii), we equivalently get that $\gamma$ and $\gamma'$ are constants such that
\begin{equation*}
    \sum_{i \in N}  \min(\gamma \cdot \sum_{j \in C}w'_{ij}, 1)= \alpha n \quad \text{and} \quad \sum_{i \in N}  \min(\gamma' \cdot \sum_{j \in C}w_{ij}, 1)= \alpha n, \quad \text{respectively}.
\end{equation*}
This means that $\gamma = \gamma'$.
Then, we get that
    \begin{align*}
        \phi_\I(C\setminus C^*)
        & = \sum_{j\in C\setminus C^*} \sum_{i \in N} \min(\gamma \cdot \sum_{j' \in C} w_{ij'}, 1) \times \frac{w_{ij}}{\sum_{j' \in C} w_{ij'}} \\
        & = \sum_{i \in N} \min(\gamma \cdot \sum_{j' \in C} w_{ij'}, 1) \times \frac{\sum_{j\in C\setminus C^*} w_{ij}}{\sum_{j' \in C} w_{ij'}} \\
        & = \sum_{i \in N} \min(\gamma' \cdot \sum_{j' \in C} w'_{ij'}, 1) \times \frac{\sum_{j\in C\setminus C^*} w'_{ij}}{\sum_{j' \in C} w'_{ij'}} \quad \text{(by (i), (ii), and since $\gamma = \gamma'$)}\\
        & = \sum_{j\in C\setminus C^*} \sum_{i \in N} \min(\gamma' \cdot \sum_{j' \in C} w'_{ij'}, 1) \times \frac{w'_{ij}}{\sum_{j' \in C} w'_{ij'}} \\
        & =  \phi_{\I'}(C' \setminus C^*).
    \end{align*}

\paragraph{\scaledUP is fraud-proof.}
Denote $\norm{\vec{w}_i}_1 = \sum_{j \in C} w_{ij}$.

We prove this result by first simplifying the cases needed to consider.
Consider instances $\I = (N, C, \vec{w})$ and $\I' = (N \cup \{n+1\}, C, \vec{w}')$ such that for $i < n+1$, $\vec{w}_i = \vec{w}'_i$ but for some coalition of artists $C^* \subseteq C$, $\phi_{\I'}(C^*) - \phi_{\I}(C^*) > 1$.
Similarly to the proof of bribery-proofness, without loss of generality the coalition $C^*$ contains a single user $m$. In this new instance, $m$ receives engagement from user $i$ equal to $\sum_{j\in C^*}w_{ij}$.

Also, for any vector $\vec{w}_{n+1}$ with fixed $\ell_1$-norm, the payoff to user $m$, $\phi_\I(m)$, is maximized for $w_{n+1, j} = 0$ for $j < \abs{C}$ and $w_{{n+1}, m} = \norm{\vec{w}_{n+1}}_1$.
Fixing $\norm{\vec{w}_{n+1}}_1$ fixes $\gamma$ and to maximize the term $\frac{w_{{n+1}, m}}{\norm{\vec{w}_{n+1}}_1}$, we place all engagement in coordinate $w_{{n+1}, m}$.
So without loss of generality, it suffices to consider $\vec{w}_{n+1}$ only of the form $(0,0,\ldots,0, M)$.

By engagement monotonicity, for $M < M'$, if $\vec{w}_{n + 1} = (0,0,\ldots,0, M)$ is a fraud-proofness violation, so is  $\vec{w}_{n + 1} = (0,0,\ldots,0, M')$.

Now, let $\gamma$ and $\gamma'$ be the parameters produced in instances $\I$ and $\I'$ respectively.
Without loss of generality we consider instances of the form $\vec{w}_{n} = (0,0,\ldots,0, M)$, with the property that $\gamma' M > 1$.
This is possible because we assume that $(n+1) \alpha > 1$, which is a requirement for there to be fraud.
Then:
\begin{align*}
 \phi_{\I'}(m) - \phi_\I(m) &= 1 + \sum_{i = 1}^{n} \left(\min(\gamma' \norm{\vec{w}_i}_1, 1) - \min(\gamma \norm{\vec{w}_i}_1, 1)\right) \times \frac{w_{ij}}{\norm{\vec{w}_i}_1}
\end{align*}

But, $\gamma' \leq \gamma$ because $\gamma' \vec{w}_{n+1} \geq 1$ and so $\alpha (n + 1) = 1 + \sum_{i = 1}^n \min(\gamma' \norm{\vec{w}_i}_1, 1) \implies \sum_{i = 1}^n \min(\gamma' \norm{\vec{w}_i}_1, 1) = n \alpha - 1 + \alpha \leq n \alpha = \sum_{i = 1}^n \min(\gamma \norm{\vec{w}_i}_1, 1)$.
So, $\min(\gamma' \norm{\vec{w}_i}_1, 1) - \min(\gamma \norm{\vec{w}_i}_1, 1) \leq 0$ and so $\phi_{\I'}(m) - \phi_\I(m) \leq 1$, proving fraud-proofness.

\paragraph{\scaledUP fails strong Sybil-proofness}
This follows directly from \Cref{thm:strong_sybil_proofness_characterisation} as the only rule satisfying strong Sybil-proofness is \globalprop{}.

\subsection{Proof of \Cref{thm:rules_scaled_fairness}}
We will prove each property separately.

    \paragraph{\scaledUP satisfies no free-ridership.}
    Consider an instance $\I = (N, C, \mathbf{w})$.
    For every $j \in C$ where $\sum_{i \in N} w_{ij} = 0$,
    \begin{equation*}
        \phi_\I(j) = \sum_{i \in N} \min(\gamma \cdot \sum_{j' \in C} w_{ij'}, 1) \times \frac{w_{ij}}{\sum_{j' \in C} w_{ij'}} = 0,
    \end{equation*}
    since we assume $\sum_{j' \in C} w_{ij'} > 0$ for all $i \in N$.

    \paragraph{\scaledUP is engagement monotone.} Denote $\norm{\vec{w}_i}_1 = \sum_{j \in C} w_{ij}$ and for a specific \scaledUP instance, we write $\alpha_i$ as a shorthand for  $\min(\gamma \cdot \sum_{j' \in C} w_{ij'}, 1)$.

    Consider any two instances $\I = (N, C, \vec{w})$ and $\I' = (N, C, \vec{w}')$ such that for with $i \neq n$ or $j \neq m$, $w_{ij}' = w_{ij}$ but $w'_{ij} > w_{ij}$.
    Let $\gamma, \alpha_1, \ldots, \alpha_n$ and $\gamma', \alpha_1', \ldots, \alpha_n'$ the values computed for instances $\I$ and $\I'$ respectively.
    If $\gamma \norm{\vec{w}_n}_1 \geq 1$, then $\gamma'= \gamma$ and so for $j < m$, $\alpha'_j = \alpha_j$.
    So, we have $\phi_{\I'}(m) - \phi_{\I}(m) = \frac{w_{nm}'}{\norm{\vec{w}_{n}'}} - \frac{w_{nm}}{\norm{\vec{w}_{n}}} \geq 0$.

    Suppose that $\gamma \norm{\vec{w}_n}_1 < 1$. Then we must have $\gamma' < \gamma$.
    So for $i < n$, $\alpha_i' \leq \alpha_i$ and $\alpha_n' \geq \alpha_n$.
    By $n\alpha = \sum_{i \in N} \alpha_i = \sum_{i \in N} \alpha_i'$, $\alpha_n' - \alpha_n = \sum_{i = 0}^{n - 1} \alpha_i - \alpha_i'$.

    Suppose in addition, $\gamma' \norm{\vec{w}_n'}_1 \leq 1$.
    So, artist $m$ loses payoff of at most $\gamma'\norm{\vec{w}_n'}_1 - \gamma\norm{\vec{w}_n}_1$ from a reduction of payment from users $1, \ldots n-1$.
    However, she makes $\gamma'w_{nm}'  - \gamma w_{nm}$ more from the contribution of user $n$.
    But, $(\gamma'w_{nm}'  - \gamma w_{nm}) - (\gamma'\norm{\vec{w}_n'}_1 - \gamma\norm{\vec{w}_n}_1)
    = \gamma(\norm{\vec{w}_n}_1 - w_{nm}) - \gamma'(\norm{\vec{w}_n'}_1 - w_{nm}')  \geq 0$
    as $\gamma \geq \gamma'$ and $\norm{\vec{w}_n}_1 - w_{nm} = \norm{\vec{w}_n'}_1 - w_{nm}'$.

    To prove the case $\gamma' \norm{\vec{w}_n'}_1 > 1$ we can simply consider an intermediate instance $\I''$ such that $w_{nm} < w_{nm}'' < w_{nm}$ and $\gamma'' \norm{\vec{w}''}_1 = 1$. We have proven that the payoff of user $m$ increases from $\I$ to $\I'$ and from $\I'$ to $\I''$ and hence from $\I$ to $\I'$.

    \paragraph{\scaledUP fails strong Sybil proofness.}
    This follows directly from \Cref{thm:strong_sybil_proofness_characterisation} as the only rule satisfying strong Sybil-proofness is \globalprop{}.

    \paragraph{\scaledUP fails Pigou-Dalton consistency for every $\alpha \in (0, 1]$.}

    Denote $\norm{\vec{w}_i}_1 = \sum_{j \in C} w_{ij}$.

    Fix $\alpha \leq 1$. Then, let $n = \ceil{\frac{1}{\alpha}} + 1$ and construct instance $\I = (\{1, 2, \ldots, n\}, \{1, 2\}, \vec{w})$.
    For $i < n$, let $w_{i1} = 1, w_{i2} = 0$.
    Let $w_{i1} = \frac{M}{2}, w_{i2} = \frac{M}{2}$ for $M = \frac{\ceil{\frac{1}{\alpha}}}{n\alpha - 1}$.

    Then $\gamma = \frac{n\alpha - 1}{n - 1}$ as $\gamma \norm{\vec{w}_n}_1 = \frac{n\alpha - 1}{n - 1} \frac{\ceil{\frac{1}{\alpha}}}{n\alpha - 1} = 1$ and so $\sum_{i = 1}^n \min(\gamma \norm{\vec{w}_i}_1, 1) = 1 + \sum_{i = 1}^{n-1} \gamma = n \alpha$.
    So, artist $1$ receives payoff $\phi_\I(1) = \frac{1}{2} + n\alpha - 1$.
    Suppose now we construct instance $\I'$ identical to $\I$, except $w_{11} = \frac{1}{2}$ and $w_{n1} = \frac{M + 1}{2}$.

    Then, $\gamma' = \frac{n\alpha - 1}{n - 1.5}$ and so $\gamma' > \gamma$ and in particular $\gamma'(M + \frac{1}{2}) > 1$.

    So, artist $1$ receives payoff $\phi_{\I'}(1) = \frac{M + 1}{2M + 1} + n\alpha - 1 > \phi_\I(1) = \frac{1}{2} + n\alpha - 1$.
    This proves that for all $\alpha$ there is an instance that violates Pigou-Dalton consistency.

\end{document}